\tikzstyle{mybox} = [draw=black, fill=white,  thick,
\tikzstyle{mybox} = [draw=black, fill=white,  thick,
\newtheorem{thm}{Theorem}
\newtheorem{lemma}{Lemma}
\newtheorem{cor}{Corollary}
\newtheorem{prop}{Proposition}
\theoremstyle{definition}
\newtheorem{definition}{Definition}
\newtheorem{remark}{Remark}
\begin{document}

\title{An Algorithmic Separating Hyperplane Theorem and Its Applications}
\author{Bahman Kalantari \\
Department of Computer Science, Rutgers University, NJ\\
kalantari@cs.rutgers.edu
}
\date{}
\maketitle

\begin{abstract}
We first prove a new separating hyperplane theorem characterizing when a pair of compact convex subsets $K, K'$ of the Euclidean space intersect, and when they are disjoint. The theorem is distinct from classical separation theorems. It generalizes the {\it distance duality} proved in our earlier work for
testing the membership of a distinguished point in the convex hull of a finite point set. Next by utilizing the theorem, we develop a substantially generalized and stronger version of the  {\it Triangle Algorithm} introduced in the previous work  to perform any of the following three tasks: (1) To compute  a pair $(p,p') \in K \times K'$, where either the Euclidean distance $d(p,p')$ is to within a prescribed tolerance, or the orthogonal bisecting hyperplane of the line segment $pp'$ separates the two sets; (2)
When $K$ and $K'$ are disjoint, to compute  $(p,p') \in K \times K'$ so that $d(p,p')$ approximates $d(K,K')$ to within a prescribed tolerance; (3) When $K$ and $K'$ are disjoint, to compute a pair of parallel supporting hyperplanes $H,H'$ so that $d(H,H')$ is to within a prescribed tolerance of the optimal margin.  The worst-case complexity of each iteration is solving a linear objective over $K$ or $K'$. The resulting algorithm is a fully polynomial-time approximation scheme for such important special cases as when $K$ and $K'$ are convex hulls of  finite points sets, or the intersection of a finite number of halfspaces. The results find many theoretical and practical applications, such as in machine learning, statistics, linear, quadratic and convex programming. In particular,  in a separate article we report on a comparison of the Triangle Algorithm and SMO for solving the hard margin problem.
In future work we extend the applications to combinatorial and NP-complete problems.
\end{abstract}

{\bf Keywords:} Convex Sets, Separating Hyperplane Theorem, Convex Hull, Linear Programming, Quadratic Programming, Duality, Approximation Algorithms, Support Vector Machines, Statistics


\section{Introduction} Quoting  Rockafellar  on separation theorems in \cite{Rock}, ``{\it The notion of separation has proved to be one of the most fertile notions in convexity theory and its applications.}''  The separating hyperplane theorem,  stated in numerous books,  is one of the best known theorems in the theory of convexity and convex programming with numerous applications in optimization, operations research, business and economics.   There are several different versions of the theorem. Special cases of the theorem such as Farkas Lemma play a fundamental role in linear programming. In particular, the lemma gives rise to the LP duality theory.

In this article  we are interested in the separation of two convex subsets $K$ and $K'$ of $\mathbb{R} ^m$, assumed to be compact. We prove a new separating hyperplane theorem and make use of it to
present a conceptually simple algorithm that  computes
$(p, p') \in K \times K'$ where, either the Euclidean distance $d(p, p')$ is as small as we please, or the orthogonal bisecting hyperplane to the line segment $pp'$ separates $K$ and $K'$, or $d(p, p')$ approximates $d(K,K')$, as well as computing a pair of parallel supporting hyperplanes $H,H'$ such that $d(H,H')$ is as close to the optimal margin as desired. In particular, in contrast with numerous existential proofs of the separating hyperplane theorem, our proof is an  algorithmic proof of this fundamental theorem, offering many practical applications.

The work in this article generalizes our previous results in \cite{kal14}, developed for the special case when $K=conv(\{v_1, \dots, v_n\})$ (the convex hull of a finite point set) and  $K'=\{p'\}$, a singleton point. We refer to this special case as the {\it convex hull membership problem} (or {\it convex hull decision problem}).  We have,  $p' \in K$, if and only if
\begin{equation} \label{eq3}
p'=\sum_{i=1}^n \alpha v_i, \quad \sum_{i=1}^n \alpha_i=1, \quad \alpha_i \geq 0 \quad \forall i.
\end{equation}

Despite its simplicity, this special case is a fundamental problem in computational geometry and linear programming and finds applications in statistics, approximation theory, and machine learning.  From the theoretical point of view, the convex hull membership problem is solvable in polynomial time, e.g. via the pioneering algorithm of Khachiyan  \cite{kha79}, or  Karmarkar \cite{kar84}. Indeed a general linear programming problem can be reduced to this special case, see e.g. \cite{kha90}, \cite{jinkal}. For  large-scale problems however, greedy algorithms are preferable to polynomial-time algorithms.  The Frank-Wolfe algorithm \cite{Frank}, Gilbert's algorithm \cite{Gilbert}, and {\it sparse greedy approximation} are such algorithms. For connections between these  see  Clarkson \cite{clark2008},  G{\"a}rtner and Jaggi \cite{Gartner}. A problem closely related to the convex hull membership problem is to compute the distance from $p'$ to $K$.

A more general case is when $K=conv(\{v_1, \dots, v_n\})$ and
$K'=conv(\{v'_1, \dots, v'_{n'}\})$. Testing if $K$ and $K'$ intersect is identical with testing if their Minkowski difference, $K-K'$, contains the origin. It is easy to show that
$K-K'=conv(\{v_i-v'_j: v_j \in K, v'_j \in K'\})$. Thus the case of two convex hulls can be reduced to the convex hull membership problem. However, via this formulation, the number of points is $nn'$ and as we shall see it is more efficient to test if they intersect directly without this reduction.  Applications of the problem of  testing if such convex hulls intersect, and their separation  include, e.g. support vector machines (SVM) and the approximation of a function as convex combination of other functions, see e.g. Clarkson \cite{clark2008} and Zhang \cite{zhang},
Burges \cite{Burg} and \cite{kal14}.

According to the classical separating hyperplane theorem, $K$ and $K'$ are disjoint if and only if they can be separated by a hyperplane, i.e. if there exists $h \in \mathbb{R} ^m$, and
$a \in \mathbb{R}$ such that

\begin{equation} \label{eq1}
h^Tx < a, \quad  \forall x \in K, \quad h^Tx > a, \quad \forall x \in K'.
\end{equation}
The hyperplane
\begin{equation} \label{eq2}
H=\{x \in \mathbb{R} ^m:  \quad h^Tx = a \}
\end{equation}
separates $K$ and $K'$.

Standard proofs rely on the fact that the  minimum of  the Euclidean distance function $d(x,x')$ is attained and is positive:
\begin{equation}
\min \{d(x,x') :  x \in K, \quad x' \in K' \} >0.
\end{equation}
This approach is discussed in standard convex programming and nonlinear optimization books, such as \cite{Boyd} and \cite{Bazaraa}.  The formulation as an optimization of distance between a pair of convex sets, or its reduction to computing the distance of the origin from the Minkowski difference does not necessarily lend itself to a working or practical algorithm.   When the convex sets are described by linear or nonlinear inequalities, a Lagrangian duality can be stated, see e.g. \cite{Boyd}.   However, such approaches do not necessarily give rise to a practical algorithm, even in the special case of the convex hull membership problem, i.e. testing intersection or separation when  $K=conv(\{v_1, \dots, v_n\})$  and  $K'=\{p'\}$.

Our goal in this article is to give a new theory and algorithms for testing the intersection or separation of two compact convex sets.   In \cite{kal14} we  studied the special case of the convex hull membership problem, proving a {\it distance duality} theorem and then using it we described a very simple geometric algorithm, called {\it Triangle Algorithm} that either produces a point
$p \in K$ such that $d(p, p')$ is to within a prescribed tolerance, or a point $p \in K$  such that the orthogonal bisecting hyperplane to the line segment $pp'$ separates $p'$ from $K$.  Equivalently, in this case $K$ is contained in
$V(p)=\{x: d(x,p) < d(x, p')\}$, the {\it Voronoi cell} of $p$, and $V(p)$  excludes $p'$, see Figure \ref{Fig1}.  In fact in this case $d(p,p')$ gives a good approximation to   $\delta_*=d(p',K)=\min\{d(x,p'): x \in K \}$:
\begin{equation} \label{eq5}
\frac{1}{2} d(p,p') \leq \delta_* \leq d(p,p').
\end{equation}

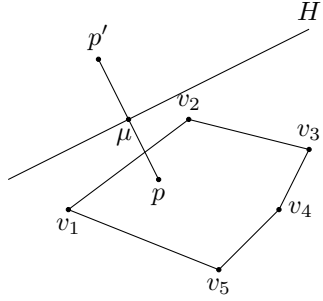
\begin{figure}[htpb]
	\centering
	
	\begin{tikzpicture}[scale=0.4]
			
\draw (0.0,0.0) -- (4,3.0) -- (8,2.0) --(7,0)--(5,-2)-- cycle;
		\draw (0,0) node[below] {$v_1$};
		\draw (7,0) node[right] {$v_4$};
		\draw (4,3) node[above] {$v_2$};
		\draw (5,-2) node[below] {$v_5$};
		\draw (8,2) node[above] {$v_3$};
\filldraw (5,-2) circle (2pt);
\filldraw (0,0) circle (2pt);
\filldraw (8,2) circle (2pt);
\filldraw (7,0) circle (2pt);
\filldraw (4,3) circle (2pt);
\filldraw (1,5) circle (2pt) node[above] {$p'$};
\filldraw (3,1) circle (2pt) node[below] {$p$};
\filldraw (2,3) circle (2pt) node[below] {$\mu~$};
\begin{scope}[black]
\draw (-2,1) -- (8,6);
\end{scope}
\draw (8,6) node[above] {$H$};
\draw (1,5) -- (3,1);
	

	\end{tikzpicture}
	
	\caption{Example of a case where orthogonal bisector of $pp'$ separates $K=conv(v_1, \dots, v_5)$ from $K'=\{p'\}$.}
	\label{Fig1}
\end{figure}

Based on preliminary experiments for solving the convex hull membership problem, the triangle algorithm performs quite well on reasonably large size problems, see \cite{Meng}. It can also be applied to solving linear systems, see \cite{kal12a} and \cite{Gibson} (for experimental results). Additionally, it can be applied to linear programming, see \cite{kal14}. Some variations of the Triangle Algorithm for the convex hull membership problem are given in \cite{Kalan12} and \cite{kalSaks}. Randomized versions of the algorithm, one inspired by the chaos game (see \cite{Barn93} and  \cite{Devaney2004}), are described in \cite{kalRand}. In view of these we anticipate that the Triangle Algorithms  will find practical applications in  distinct areas, including applications dealing with big data.

In the remainder of this section  we give a detailed outline of what is to follow in the subsequent sections.   In Section 2, we prove a new separating hyperplane theorem for compact convex sets. In Section 3, we prove a theorem that allows us to iteratively improve the distance between two convex sets. In Section 4, we describe an algorithm for testing if two convex sets intersect and if not it generates a separating hyperplane. In Section 5, we formally describe it as {\it Triangle Algorithm I} and analyze its complexity. In Section 6, we describe an algorithm for approximating the distance between two convex sets when they are proven to be disjoint, as well as approximating optimal parallel supporting hyperplanes.  In Section 7, we formally describe the latter algorithm as {\it Triangle Algorithm II} and analyze its complexity. In Section 8, we consider the complexity of the algorithms in several important special cases. In Section 9, we make concluding remarks and describe future work.

\subsection{Outline}

In this article we establish the following results, substantially generalizing the results in \cite{kal14}:

\textbf{(i)} We prove a version of the separating hyperplane theorem for the case where $K$ and $K'$ are arbitrary compact convex sets, not only giving a stronger version of the ordinary separating hyperplane theorem, but an algorithmic version which finds several practical applications.

\textbf{(ii)} We describe {\it Triangle Algorithm I} having the following properties:
Starting with a given pair $(p_0, p'_0) \in K \times K'$,
it either computes $(p, p') \in K \times K'$ such that $d(p,p')$ is
to within a prescribed tolerance, proving that $K$ and $K'$ are approximately intersecting,  or such that the orthogonal bisecting hyperplane of the line segment $pp'$ separates $K$ and $K'$, hence proving they are disjoint. We call such a pair $(p,p')$ a \emph{witness pair}.

\textbf{(iii) } We describe {\it Triangle Algorithm II} having the following properties: It begins with a witness pair $(p, p') \in K \times K'$, then it computes a new pair $(p, p') \in K \times K'$ such that $d(p,p')$ is to within a prescribed tolerance of the distance between $K$ and $K'$, $d(K,K')$. Then using this pair, it computes a pair of supporting hyperplanes $(H,H')$ parallel to the orthogonal bisecting hyperplane  of the line segment $pp'$, where $d(H,H')$ approximates the optimal margin to within a prescribed tolerance.

\textbf{(iv)} We analyze the complexity of Triangle Algorithms I and II for important special cases:

\begin{itemize}

\item

When  $K= conv(V)$, $V=\{v_1, \dots, v_n\}$, a subset of $\mathbb{R} ^m$, $K'= \{p'\}$, a singleton point in $\mathbb{R} ^m$. In particular,  this problem includes linear programming.

\item

When  $K= conv(V)$, $V=\{v_1, \dots, v_n\}$, $K'= conv(V')$, $V'=\{v_1', \dots, v_{n'}'\}$, $V, V' \subset \mathbb{R} ^m$. In particular, this has applications in machine learning such as SVM, see \cite{Vapnik1}, \cite{Vapnik2}

\item
When $K= \{x: Ax \leq b\}$,  $K'=\{x: A'x \leq b'\}$, where $A$ is $n \times m$ and $A'$ is $n' \times m$. In particular, when one set is a single point this includes such problems as strict convex quadratic programming.

\end{itemize}

To describe the complexity of these algorithm we need to give several definitions.  Assume we are given $p_0 \in K$, $p'_0 \in K'$.
Let
\begin{equation} \label{eqa4}
\delta_*=d(K,K')= \min \{d(p,p'): p \in K, p' \in K' \}.
\end{equation}
It is trivial to prove $\delta_* = 0$ if and only if $K \cap K' \not = \emptyset$.

\begin{definition} \label{def1} Suppose $\delta_*=0$. We say a pair $(p,p') \in K \times K'$ is an $\epsilon$-{\it approximation solution to the intersection problem} if
\begin{equation}
d(p,p') \leq \epsilon d(p,v), \quad \text{for some} \quad  v \in K, \quad \text{or} \quad  d(p,p') \leq \epsilon d(p',v'), \quad \text{for some} \quad v' \in K'.
\end{equation}
\end{definition}

\begin{definition} \label{def2} Given $(p,p') \in K \times K'$, we say it is a \emph{witness pair}
if the orthogonal bisecting hyperplane of the line segment $pp'$ separates $K$ and $K'$.
\end{definition}

\begin{definition} \label{def3} Suppose $\delta_*>0$. We say a witness pair $(p,p') \in K \times K'$ is an $\epsilon$-{\it approximation solution to the distance problem} (or $\epsilon$-{\it approximation solution to} $\delta_*$) if
\begin{equation}
d(p,p') - \delta_* \leq \epsilon d(p,p').
\end{equation}
\end{definition}

\begin{definition} \label{def4} Suppose $\delta_*>0$. We say  a pair of  parallel hyperplanes $(H,H')$ {\it supports} $(K,K')$, if $H$ contains a boundary point of $K$, $H'$ contains a boundary point of $K'$,
$K \subset H_+$, $K' \subset H'_+$, where $H_+, H'_+$  are disjoint halfspaces corresponding to $H,H'$.
\end{definition}

As an example, consider the case where $K$ and $K'$ are disjoint discs in the Euclidean plane. We can draw infinitely many supporting lines. These are parallel lines tangential to the discs, each touching the corresponding disc in exactly one point. The corresponding halfspaces separate $K$ and $K'$. See Figure \ref{Fig3X}

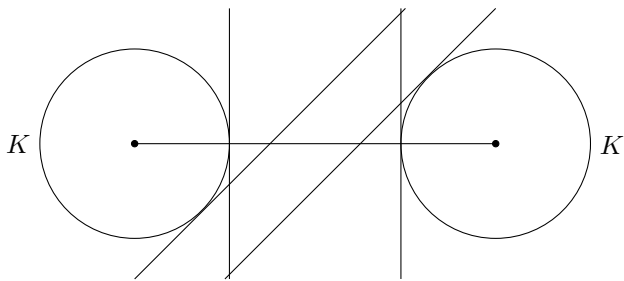
\begin{figure}[htpb]
	\centering
	\begin{tikzpicture}[scale=0.6]	
      \draw (-4,0) -- (4,0) node[pos=0.55, above] {};
      \draw (4,3) -- (-2,-3);
      \draw (2,3) -- (-4,-3);
       \draw (4,0) circle (2.1);
       \draw (-4,0) circle (2.1);
       \draw (1.9,3) -- (1.9, -3);
       \draw (-1.9,3) -- (-1.9, -3);
      \begin{scope}[black]
      \end{scope}[black]
       \filldraw (-4,0) circle (2pt);
       \filldraw (4,0) circle (2pt);
		\draw (-6.1,0) node[left] {$K$};
		\draw (6.1,0) node[right] {$K'$};

		
	\end{tikzpicture}
\begin{center}
\caption{Depiction of two distinct pairs of supporting hyperplanes, one being optimal.} \label{Fig3X}
\end{center}
\end{figure}

\begin{definition} \label{def5} Suppose $\delta_*>0$. We say a witness pair $(p,p') \in K \times K'$ is an $\epsilon$-{\it approximate solution to the supporting hyperplanes problem} if
$$d(p,p') - \delta_* \leq \epsilon d(p,p'),$$
and there exists a pair of parallel supporting hyperplanes $(H, H')$  orthogonal to the line segment $pp'$ such that the distance between them satisfies
$$\delta_* - d(H,H') \leq \epsilon d(p,p').$$
\end{definition}

In the above example of two discs, the best pair of supporting hyperplanes is the pair of lines that are orthogonal to the line connecting the centers.

Triangle Algorithm I computes  an $\epsilon$-approximate solution to the intersection problem when $\delta_*=0$, or a pair of separating hyperplane when $\delta_* >0$.  To describe the iterative step of the algorithm we need to give a definition.

\begin{definition}  \label{def6} Given a pair  $(p,p') \in K \times K'$ (see Figure \ref{Fig2}), we say $v \in K$ is a $p'$-{\it pivot} for $p$ if
\begin{equation} \label{def21}
d(p,v) \geq d(p',v).
\end{equation}
We say $v' \in K'$ is a $p$-{\it pivot} for $p'$ if
\begin{equation} \label{def22}
d(p',v') \geq d(p,v').
\end{equation}
\end{definition}

\begin{figure}[htpb]
	\centering
	\begin{tikzpicture}[scale=0.4]

	
		\draw (0.0,0.0) -- (7.0,0.0) -- (-2.0,-4.0) -- cycle;
		\draw (0,0) node[left] {$p'$};
		\draw (7,0) node[right] {$v$};
		\draw (-2,-4) node[below] {$p$};
           \filldraw (0,0) circle (2pt);
\filldraw (7,0) circle (2pt);
\filldraw (-2,-4) circle (2pt);
		

\draw (15.0,0.0) -- (22.0,0.0) -- (13.0,-4.0) -- cycle;
		\draw (15,0) node[left] {$p$};
		\draw (22,0) node[right] {$v'$};
		\draw (13,-4) node[below] {$p'$};
           \filldraw (15,0) circle (2pt);
\filldraw (22,0) circle (2pt);
\filldraw (13,-4) circle (2pt);

	\end{tikzpicture}
\begin{center}
\caption{$v$ is $p'$-pivot for $p$ (left); $v'$ is $p$-pivot for $p'$ (right).} \label{Fig2}
\end{center}
\end{figure}
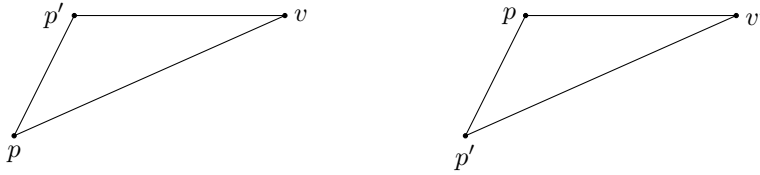

Consider the Voronoi diagram of the set $\{p, p' \}$ and the corresponding  Voronoi cells
\begin{equation}
V(p)= \{x: d(x,p) < d(x,p')\}, \quad  V(p')= \{x: d(x,p') < d(x,p)\}.
\end{equation}
If $H=\{x: h^Tx= a \}$ is the orthogonal bisecting hyperplane of the line $pp'$, it intersects  $K$ if and only if there exists $v \in K$ that is a $p'$-pivot for $p$, and $H$ intersects  $K'$ if and only if there exists
$v' \in K'$ that is a $p$-pivot for $p'$. In Figure \ref{Fig3}, the point $v$ and $v'$ are pivots for $p'$ and $p$, respectively. The four points $p,p',v,v'$ need not be coplanar.

\begin{figure}[htpb]
	\centering
	\begin{tikzpicture}[scale=0.6]	
      \draw (-4,0) -- (4,0) node[pos=0.55, above] {$H$};
      \draw (0,0) -- (4,0);
      \draw (-4,0) -- (-1,0);
      \begin{scope}[black]
      \end{scope}[black]
      \draw (0,-5) -- (0,1) node[pos=0.5, right] {};
       \filldraw (-4,0) circle (2pt);
       \filldraw (4,0) circle (2pt);
       \filldraw (-1,-5) circle (2pt);
		\draw (-4,0) node[left] {$p$};
		\draw (4,0) node[right] {$p'$};
        \draw (-1,-5) node[left] {$v'$};
        \draw (2,-3) node[left] {$v$};
        \filldraw (2,-3) circle (2pt);

           \filldraw (0,0) circle (2pt);
		
	\end{tikzpicture}
\begin{center}
\caption{Existence of pivot when $H$ intersects $K$ or $K'$.} \label{Fig3}
\end{center}
\end{figure}
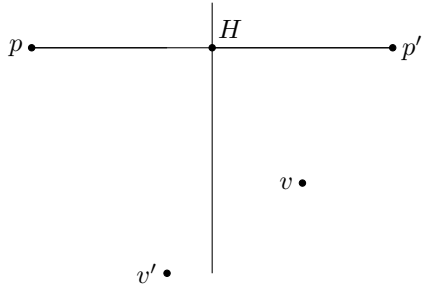

Each iteration of Triangle Algorithm I requires computing for a given pair $(p,p') \in K \times K'$ a $p'$-pivot $v$ for $p$, or a $p$-pivot $v'$ for $p'$. By squaring (\ref{def21}) and (\ref{def22}), these are respectively equivalent to checking if

\begin{equation} \label{pivots}
2v^T(p'-p) \geq \Vert p' \Vert^2 -  \Vert p \Vert^2, \quad 2v'^T(p-p') \geq \Vert p \Vert^2 -  \Vert p' \Vert^2.
\end{equation}

From the above it follows that the existence and computation of a pivot can be carried out by solving the convex programs that consist of optimization of a linear function over $K$ or $K'$. Specifically,

\begin{equation} \label{eqa5}
\max \{(p'-p)^Tv:  \quad v \in K\}, \quad \max \{(p-p')^Tv': \quad  v' \in K' \}.
\end{equation}

Let $T_K, T_{K'}$ denote the corresponding arithmetic complexities needed to solve the problems. Then the worst-case number of arithmetic operations in each iteration of Triangle Algorithm I is
\begin{equation} \label{eqa55}
T= \max \{T_K,T_{K'}\}.
\end{equation}
We prove that when $\delta_*=0$, the total number of required iterations to compute an $\epsilon$-approximate solution to the intersection problem is
\begin{equation}
O\bigg (\frac{1}{\epsilon^2} \bigg ).
\end{equation}

Consider

\begin{equation}
\Delta_0=\max\{d(x,y): x,y \in K\}, \quad \Delta'_0=\max\{d(x',y'): x',y' \in K'\},
\end{equation}
the diameters of $K$ and $K'$, respectively. Let
\begin{equation} \label{eqa3}
\rho_*= \max \{\Delta_0, \Delta_0'\}.
\end{equation}
When $\delta_* >0$, we prove the number of iterations of Triangle Algorithm I to compute a witness pair $(p,p') \in K \times K'$ is
\begin{equation} \label{eqa9}
O \bigg  ( \frac{\rho_*^2}{\delta_*^2} \bigg ).
\end{equation}
If one of the sets, say $K'$, is a single point, then any witness pair $(p,p')$ gives rise to an approximation to $\delta_*$ to within a factor of two:
\begin{equation} \label{eqhalf}
\frac{1}{2}d(p,p') \leq \delta_* \leq d(p,p').
\end{equation}

Triangle Algorithm II begins with a witness pair $(p, p') \in K \times K'$, then it computes an $\epsilon$-approximate solution to the distance problem.  Since $(p, p')$ is a witness pair there exists no $p'$-pivot for $p$, or a $p$-pivot for $p'$. However, if  $(p, p')$ is not
already an $\epsilon$-approximate solution to $\delta_*$, the algorithm makes use of a {\it weak-pivot}, defined next.

\begin{definition}  \label{def7} Given a witness pair  $(p,p') \in K \times K'$,  suppose that $H$ is the orthogonal bisecting hyperplane of the line segment $pp'$. We shall say
$v \in K$ is a {\it weak $p'$-pivot} for $p$ if it is not a $p'$-pivot but satisfies
\begin{equation}
d(p, H) >  d(v, H)
\end{equation}
(i.e. if $H_v$ is the hyperplane parallel to $H$ passing through $v$, it separates $p$ from $p'$, see Figure \ref{Fig7A}).  Similarly, we shall say
$v' \in K'$ is a {\it weak $p$-pivot} for $p'$ if  it is not a $p$-pivot but satisfies
\begin{equation}
d(p', H) >  d(v', H).
\end{equation}
\end{definition}

\begin{figure}[htpb]
	\centering
	\begin{tikzpicture}[scale=0.6]	
      \draw (-4,0) -- (4,0) node[pos=0.55, above] {$H$};
      \draw (0,0) -- (4,0);
      \draw (-4,0) -- (-1,0);
      \begin{scope}[black]
      \end{scope}[black]
      \draw (0,-6) -- (0,1) node[pos=0.5, right] {};
      \draw (-1,-6) -- (-1,1);
       \filldraw (-4,0) circle (2pt);
       \filldraw (4,0) circle (2pt);
       \filldraw (-1,-6) circle (2pt);
		\draw (-4,0) node[left] {$p$};
		\draw (4,0) node[right] {$p'$};
        \draw (-1,-6) node[left] {$v$};
         \filldraw (-1,0) circle (2pt);

           \filldraw (0,0) circle (2pt);
           \draw (-1,0) -- (-1,-6) node[pos=0.5, left] {$H_v$};
		
	\end{tikzpicture}
\begin{center}
\caption{The point $v$ is a weak $p'$-pivot for $p$, but not a $p'$-pivot.} \label{Fig7A}
\end{center}
\end{figure}
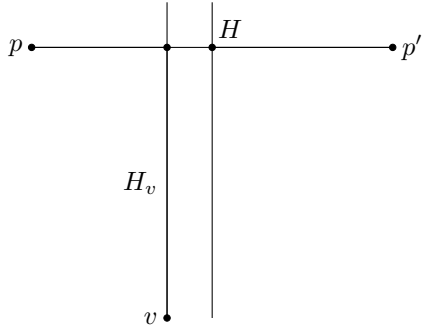

In an iteration of Triangle Algorithm II a given pair $(p_k,p'_k) \in K \times K'$ may or many not be a witness pair. The algorithm  searches for a weak-pivot or a pivot in order to reduce the current gap $\delta_k=d(p_k,p'_k)$ until $\epsilon$-approximate solutions to both the distance and supporting hyperplanes problems are reached.
Each iteration has complexity at most $T$,  see (\ref{eqa55}). We prove that the total number of iterations of Triangle Algorithm II is
\begin{equation} \label{eqb3}
O \bigg (\frac{\rho_*^2} {\delta_*^{2} }\frac{1}{\epsilon^{2}} \ln \frac {\rho_*}{\delta_*} \bigg ).
\end{equation}

To summarize, the total number of arithmetic operations in Triangle Algorithm I to compute an $\epsilon$-approximate solution to the intersection problem when $\delta_*=0$, the total number of arithmetic operations to compute a witness pair when $\delta_*>0$, and the
total number of arithmetic operations in Triangle Algorithm II to get both an $\epsilon$-approximate solution to the distance problem as well as  $\epsilon$-approximate solution to the supporting hyperplanes problem are, respectively

\begin{equation}
O\bigg ( T \frac{1}{\epsilon^2} \bigg), \quad
O \bigg  ( T\frac{\rho_*^2}{\delta_*^2} \bigg ), \quad
O \bigg ( T \frac{\rho_*^2} {\delta_*^{2} } \frac{1}{\epsilon^{2}} \ln \frac {\rho_*}{\delta_*}  \bigg ).
\end{equation}

Table 1 summarizes the complexity of Triangle Algorithms I and II in solving the general cases as well as several special cases described above.

\begin{table}[!t]
	\renewcommand{\arraystretch}{1.0}
	\centering
\scalebox{0.92
}{
\begin{tabular}{|l|l|l|c|}

\hline
Complexity  of  computing &  ~~~~~~~Intersection & ~~~~~~~Separation & Distance and Support
\\
 $\epsilon$-approximation solution& $~~~~~~K \cap K' \not = \emptyset $ & $~~~~~~K \cap K'= \emptyset $ &  $\delta_*=d(K,K')$
\\
	\hline
$(p,p') \in K \times K' \subset \mathbb{R}^m \times\mathbb{R}^m$ & & &
\\
$K, K'$  compact and convex& $d(p,p') \leq \epsilon d(p,v)$, or & $~~~~~~~~~(p,p')$ &
$d(p,p') - \delta_* \leq \epsilon d(p,p')$
\\
&  $d(p,p') \leq \epsilon d(p',v')$
& ~~~~~a witness pair & $\delta_* - d(H,H') \leq \epsilon d(p,p')$
\\
such that: & &  & $(H,H')$  supporting hyperplanes
\\
	\hline
$K = conv(\{v_1, \dots, v_n\})$  &  & &
\\
& $~~~~~~~O\big ( mn\frac{1}{\epsilon^2} \big)$ & $~~~~~~O \big  (mn \big (\frac{\rho_*}{\delta_*} \big)^2\big)$ & $~O \big (mn \big (\frac{\rho_*} {\delta_*\epsilon} \big)^2 \big )$
\\
complexity w. preprocessing  & $~~~~O\big ((m+n)\frac{1}{\epsilon^2} \big)$ & $~~O \big  ((m+n) \big (\frac{\rho_*}{\delta_*} \big)^2\big)$ & $~O \big ((m+n) \big (\frac{\rho_*} {\delta_*\epsilon} \big)^2 \big )$
\\
$K' = \{p'\}$  &  & &
\\
\hline
$K = conv(\{v_1, \dots, v_n\})$ & & &
\\

$N = \max \{n,n'\}$& $~~~~~~~O\big (m N \frac{1}{\epsilon^2} \big)$ & $~~~~~O \big (m N\big (\frac{\rho_*}{\delta_*} \big )^2 \big )$ & $O \big ( m N \big (\frac{\rho_*} {\delta_*\epsilon} \big)^2 \ln \frac {\rho_*}{\delta_*}  \big )$ 	
\\
complexity w. preprocessing & $~~~~O\big ((m +N )\frac{1}{\epsilon^2} \big)$ & $~~O \big ((m + N)\big (\frac{\rho_*}{\delta_*} \big )^2 \big )$ & $O \big ( (m + N) \big (\frac{\rho_*} {\delta_*\epsilon} \big)^2 \ln \frac {\rho_*}{\delta_*}  \big )$ 	
\\
$K' = conv(\{v'_1, \dots, v'_{n'}\})$ & & &
\\
\hline

$K=\{x: Ax \leq b \}$ & & &
\\
 $A$ an $n \times m$ matrix & $~~~~~~~O\big (mn \big)$ & $~~~~~~O \big  (T\big (\frac{\rho_*}{\delta_*} \big )^2 \big )$&$O \big (T \big (\frac{\rho_*} {\delta_*\epsilon} \big )^2 \big)$
\\
 $K'=\{p'\}$  &  & &
\\
\hline
$K=\{x: Ax \leq b\}$ &  & &
\\
 & $~~~~~~~O\big (T \frac{1}{\epsilon^2} \big)$ & $~~~~~~O \big  (T \big (\frac{\rho_*}{\delta_*} \big )^2 \big )$ & $O \big (T \big (\frac{\rho_*} {\delta_*\epsilon} \big )^2 \ln \frac {\rho_*}{\delta_*}  \big)$
\\
$K'=\{x: A'x \leq b' \} $& &  &
	\\
\hline

$K$ {\rm ~general} & & &
\\
& $~~~~~~~O\big (T\frac{1}{\epsilon^2} \big)$ & $~~~~~~O \big  (T \big (\frac{\rho_*}{\delta_*} \big )^2 \big )$ & $O \big ( T \big (\frac{\rho_*} {\delta_*\epsilon} \big )^2 \big)$
\\
$K'=\{p'\}$ & & &
\\
\hline

$K$ {\rm ~general} & & &
\\
& $~~~~~~~O\big (T\frac{1}{\epsilon^2} \big)$ & $~~~~~~O \big  (T \big (\frac{\rho_*}{\delta_*} \big )^2 \big )$ & $O \big ( T \big (\frac{\rho_*} {\delta_*\epsilon} \big )^2 \ln \frac {\rho_*}{\delta_*}  \big)$
\\
$K'$ {\rm ~general} & & &
\\
\hline

	\end{tabular}
}
\caption{The complexities of Triangle Algorithms I and II.   $T$ is the maximum of $T_K$ and $T_{K'}$, the complexities in optimizing a linear objective over $K$ and  $K'$. $\rho_*$ is maximum of diameters of $K$ and $K'$.}
\end{table}

\section{A New Separating Hyperplane Theorem}

In this section we first describe a new separation theorem. The theorem inspires an algorithmic separating hyperplane theorem that either computes an approximation to a point in the intersection of two compact convex sets, or a separating hyperplane. The algorithm can also approximate the distance between them when they are disjoint, as well as compute a pair of parallel supporting hyperplanes that approximates the optimal pair. First we give a well known definition.

\begin{definition} \label{def8}
Let $K$ be a compact convex subset in $\mathbb{R} ^m$. A point $v \in K$ is an extreme point of $K$ if it cannot be written as the convex combination of two distinct points in $K$. The set of all extreme points of $K$ is denoted by
${\rm ex}(K)$.
\end{definition}

The following finite dimensional version of Krein-Milman theorem is easily provable, see e.g. \cite{Barv}.


\begin{thm}  \label{thm1} {\rm(Krein-Milman)}  Let $K$ be a compact convex subset of $\mathbb{R} ^m$. Then $K$ is the convex hull of its extreme points. In notation, $K=conv({\rm ex}(K))$. $\Box$
\end{thm}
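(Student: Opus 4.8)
The plan is to prove the nontrivial inclusion $K \subseteq conv({\rm ex}(K))$ by induction on $d = \dim({\rm aff}(K))$, the dimension of the affine hull of $K$; the reverse inclusion $conv({\rm ex}(K)) \subseteq K$ is immediate, since $K$ is convex and contains all of its extreme points. For the base case $d = 0$ the set $K$ is a single point, which is trivially its own unique extreme point, so the claim holds.

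For the inductive step, assume the claim holds for every compact convex set of affine dimension strictly less than $d$, and let $K$ satisfy $\dim({\rm aff}(K)) = d \geq 1$. Fix $x \in K$ and distinguish two cases according to whether $x$ lies in the relative interior or on the relative boundary of $K$ (relative to ${\rm aff}(K)$). If $x$ is on the relative boundary, then there is a hyperplane $H$ of ${\rm aff}(K)$ supporting $K$ at $x$, so that $x \in F := K \cap H$, where $F$ is a nonempty compact convex set with $\dim({\rm aff}(F)) \leq d - 1$. By the inductive hypothesis $F = conv({\rm ex}(F))$, and since $F$ is a face of $K$, every extreme point of $F$ is an extreme point of $K$; hence $x$ is a convex combination of points of ${\rm ex}(K)$. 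If instead $x$ lies in the relative interior of $K$, pick any line $\ell \subset {\rm aff}(K)$ through $x$; by compactness $\ell \cap K$ is a segment $[a,b]$ with $a, b$ on the relative boundary of $K$, and $x$ is a convex combination of $a$ and $b$. Applying the boundary case to $a$ and to $b$ writes each as a convex combination of extreme points of $K$, and combining these shows $x \in conv({\rm ex}(K))$ as well, completing the induction.

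The two ingredients that require care are: (i) the existence of a supporting hyperplane at each relative boundary point, which follows by applying a separation argument inside the affine subspace ${\rm aff}(K)$ to the point $x$ and the relative interior of $K$; and (ii) the fact that an extreme point of a face $F = K \cap H$ is an extreme point of $K$, which is a short argument from the definition, since if such a point were a proper convex combination of two points of $K$, the supporting inequality cutting out $F$ would force both of those points into $F$, contradicting extremality in $F$. I expect the main obstacle to be purely the bookkeeping with relative interiors and affine hulls, so that the affine dimension genuinely drops at each recursive step; the argument also uses implicitly that ${\rm ex}(K) \neq \emptyset$ for nonempty compact convex $K$, which the same induction supplies. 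If a quantitative refinement is wanted, Carathéodory's theorem can be layered on top to conclude that every $x \in K$ is a convex combination of at most $d+1$ extreme points, but this is not needed for the statement as given.
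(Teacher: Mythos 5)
The paper does not supply its own proof here; it merely remarks that the finite-dimensional Krein--Milman theorem is ``easily provable'' and points to Barvinok \cite{Barv}. Your induction on the affine dimension of $K$ --- handling relative-boundary points via a supporting hyperplane and passing to the exposed face $F = K \cap H$, then reducing relative-interior points to boundary points by intersecting a line with $K$ --- is precisely the standard textbook argument that the cited reference gives, and it is correct, including the two auxiliary facts you flag (existence of a supporting hyperplane at a relative boundary point, and that ${\rm ex}(F) \subseteq {\rm ex}(K)$ for an exposed face $F$). So your proposal matches the intended proof in substance; it simply spells out what the paper leaves to the reference.
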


We will make use of it to prove the following.

\begin{thm}  \label{thm2} {\rm (Distance Duality)}  Let $K, K'$ be compact convex subsets of $\mathbb{R} ^m$, with  ${\rm ex}(K)$ and ${\rm ex}(K')$ as their corresponding set of extreme points. Let $S$ be a subset of $K$ containing ${\rm ex}(K)$, and $S'$ a subset of $K'$ containing ${\rm ex}(K')$. Then, $K \cap K' \not = \emptyset$ if and only if for each $(p, p')  \in K \times K'$, either
there exists  $v \in S$ such that $d(p, v) \geq d (p', v)$, or there exists $v' \in S'$ such that $d(p', v') \geq d(p, v')$.
\end{thm}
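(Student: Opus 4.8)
The plan is to translate the two ``pivot'' conditions of Definition~\ref{def6} into statements about the open halfspaces cut out by the orthogonal bisecting hyperplane of $pp'$, and then to reduce the theorem to exactly two ingredients: the Krein--Milman theorem (Theorem~\ref{thm1}) for one implication, and the elementary first-order optimality of the nearest-point pair for the other. First I would set up the dictionary. Fix $(p,p')\in K\times K'$ with $p\neq p'$, and recall $V(p)=\{x:d(x,p)<d(x,p')\}$, $V(p')=\{x:d(x,p')<d(x,p)\}$, which by squaring are the two disjoint open halfspaces bounded by the bisector $H$. Then ``there is no $v\in S$ with $d(p,v)\geq d(p',v)$'' says precisely $S\subseteq V(p)$, and ``there is no $v'\in S'$ with $d(p',v')\geq d(p,v')$'' says precisely $S'\subseteq V(p')$. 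Hence the negation of the right-hand side of the theorem is: there exists $(p,p')\in K\times K'$ with $S\subseteq V(p)$ and $S'\subseteq V(p')$.

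For the implication $K\cap K'\neq\emptyset\Rightarrow$ (right-hand side) I would argue the contrapositive. Suppose some pair $(p,p')$ satisfies $S\subseteq V(p)$ and $S'\subseteq V(p')$. Since ${\rm ex}(K)\subseteq S\subseteq V(p)$ and $V(p)$ is convex (an open halfspace), Theorem~\ref{thm1} gives $K=conv({\rm ex}(K))\subseteq V(p)$, and symmetrically $K'\subseteq V(p')$. As $V(p)\cap V(p')=\emptyset$, we conclude $K\cap K'=\emptyset$. (The degenerate case $p=p'$ does not arise here, since then $V(p)=\emptyset$ while $S\supseteq{\rm ex}(K)\neq\emptyset$.)

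For the converse, $K\cap K'=\emptyset\Rightarrow$ (right-hand side fails), I would exhibit the witnessing pair explicitly. By compactness the minimum $\delta_*=d(K,K')$ in $(\ref{eqa4})$ is attained at some $(p^*,p'^*)\in K\times K'$, and $\delta_*>0$. I claim $K\subseteq V(p^*)$ and $K'\subseteq V(p'^*)$, which is even stronger than $S\subseteq V(p^*)$, $S'\subseteq V(p'^*)$. For any $v\in K$ the segment $[p^*,v]$ lies in $K$, so $t\mapsto d\big(p^*+t(v-p^*),\,p'^*\big)^2$ on $[0,1]$ is minimized at $t=0$; its derivative there is $2(v-p^*)^T(p^*-p'^*)\geq 0$, i.e. $(v-p^*)^T(p'^*-p^*)\leq 0$. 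Expanding $d(v,p'^*)^2=d(v,p^*)^2+2(v-p^*)^T(p^*-p'^*)+\delta_*^2$ and using $\delta_*>0$ yields $d(v,p^*)<d(v,p'^*)$, i.e. $v\in V(p^*)$. The identical computation with the roles of $K$ and $K'$ interchanged gives $K'\subseteq V(p'^*)$, so the pair $(p^*,p'^*)$ falsifies the right-hand side and the proof is complete.

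The step that needs care — and the only place the hypothesis $S\subseteq K$ (as opposed to $S={\rm ex}(K)$) is genuinely used — is the \emph{strictness} in the last step: one must place all of $K$ inside the \emph{open} halfspace $V(p^*)$, not merely its closure, and this is exactly what the strict positivity $\delta_*>0$ provides. Everything else is bookkeeping around the two structural inputs, namely Krein--Milman and the convexity of a Voronoi halfspace.
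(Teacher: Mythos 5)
Your proof is correct and shares the same architecture as the paper's: the forward implication is argued by contrapositive via Krein--Milman and the convexity of the Voronoi halfspace $V(p)$ (essentially identical to the paper), and the converse is established by producing the nearest pair $(p^*,p'^*)$ and showing its bisector strictly separates. Where you diverge is the mechanism for the converse. The paper argues geometrically and by contradiction: if the bisector $H$ met $K$ at some $q$, then the segment $p_*q$ lies in $K$ and a point $u$ on it in the isosceles triangle $\triangle p_*qp'_*$ would be strictly closer to $p'_*$. You instead use the first-order optimality condition of $t\mapsto d(p^*+t(v-p^*),p'^*)^2$ at $t=0$ to obtain $(v-p^*)^T(p'^*-p^*)\leq 0$, and then the identity $d(v,p'^*)^2=d(v,p^*)^2-2(v-p^*)^T(p'^*-p^*)+\delta_*^2$ together with $\delta_*>0$ gives $d(v,p^*)<d(v,p'^*)$ strictly, for \emph{every} $v\in K$. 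This variational argument is cleaner and arguably more rigorous than the paper's ``we can argue there exists a point $u$'', and it makes transparent exactly where strictness enters (through $\delta_*^2>0$). One small quibble with your closing remark: the hypothesis $S\subseteq K$ is used to pass from $K\subseteq V(p^*)$ to $S\subseteq V(p^*)$ in the converse, which is a separate issue from the strict positivity that puts $K$ inside the open (rather than closed) halfspace; conflating the two is harmless here but worth untangling.
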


\begin{proof} Suppose $K \cap K' \not = \emptyset$.  Let $(p,p')  \in K \times K'$ be given. If $p=p'$, the result is obvious. So assume $p \not = p'$. Consider the Voronoi diagram of the two point set $\{p, p' \}$ and the corresponding  Voronoi cells

\begin{equation}
V(p)= \{x: d(x,p) < d(x,p')\}, \quad V(p')= \{x: d(x,p') < d(x,p)\}.
\end{equation}

Suppose there does not  exist $v \in S$ such that $d(p, v) \geq d(p',v)$.  Then $S \subset V(p)=\{x: d(x,p) < d(x,p')\}$. However, since $V(p)$ is convex, we must have $conv({\rm ex}(K)) \subset V(p)$.  But by Krein-Milman Theorem, $K=conv({\rm ex}(K))$. Thus $K \subset V(p)$.
Suppose also there does not  exist $v' \in S'$ such that $d(p', v') \geq d(p,v')$.  Then by an analogous argument $K' \subset V(p')$. But $V(p)$ and $V(p')$ are disjoint, contradicting that $K$ intersects $K'$.

Conversely, suppose $K \cap K' = \emptyset$.  Then $\min \{d(x,x'): x \in K, x' \in K'\}>0$, is attained at some $(p_*, p'_*) \in K \times K'$.  We claim the orthogonal bisecting hyperplane $H$ of $p_*p'_*$ separates $K$ and $K'$. Suppose $H$ intersects $K$ at a point $q$ (see Figure \ref{Fig88} for a 2D depiction). Then by convexity, the entire line segment $p_*q$ lies in $K$. Considering the isosceles triangle $\triangle p_*qp'_*$,  we can argue there exists a point $u$ on $p_*q$, see Figure \ref{Fig88}, which is closer to $p'_*$. This is a contradiction.
\end{proof}

\begin{figure}[htpb]
	\centering
	\begin{tikzpicture}[scale=0.6]

      \draw (-4,0) -- (2,0) node[pos=0.55, above] {$H$};
      \draw (0,0) -- (2,0) node[pos=0.5, above] {};
      \draw (-4,0) -- (-1,0) node[pos=0.5, above] {};
      \draw (-1,0) -- (0,0) node[pos=0.55, above] {};
      \draw (2,0) -- (-2.8,-2.4) node[pos=0.55, below] {};
      \draw (-4,0) -- (-2.8,-2.4);
      \filldraw (-2.8,-2.4) circle (2pt);
      \draw (-2.8,-2.4) node[left] {$u$};
      \draw (-1,-6) -- (-1,1) node[pos=0.3, left] {};
       \filldraw (-4,0) circle (2pt);
       \filldraw (2,0) circle (2pt);
       \filldraw (-1,-6) circle (2pt);
		\draw (-4,0) node[left] {$p_*$};
		\draw (2,0) node[right] {$p_*'$};
        \draw (-1,-6) node[left] {$q$};
         \draw (-1,0) node[above] {};
         \filldraw (-1,0) circle (2pt);
         \draw (-4,0) -- (-1,-6) node[pos=0.3, left] {};

           \draw (-1,0) -- (-1,-6) node[pos=0.5] {};
		
	\end{tikzpicture}
\begin{center}
\caption{A 2D depiction of the case where the orthogonal bisecting hyperplane intersects $K$.} \label{Fig88}
\end{center}
\end{figure}
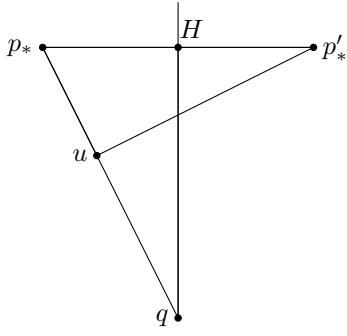

An alternative description of the distance duality, Theorem \ref{thm2}, is the following version:

\begin{thm}  \label{thm3} {\rm (Distance Duality)} Let $K, K'$ be compact convex subsets in $\mathbb{R} ^m$, with  ${\rm ex}(K)$ and ${\rm ex}(K')$ as their corresponding set of extreme points.
Then, $K \cap K'= \emptyset$ if and only if there exists $p \in K$, $p' \in K'$ such that $d(p, v) < d(p', v)$ for all $v \in {\rm ex}(K)$ and $d(p', v') < d(p, v')$ for all $v' \in {\rm ex}(K')$.
\end{thm}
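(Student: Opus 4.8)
The plan is to derive Theorem~\ref{thm3} directly from Theorem~\ref{thm2} by specializing the auxiliary sets $S, S'$ and taking the logical negation of the stated equivalence. Concretely, I would apply Theorem~\ref{thm2} with the choice $S = {\rm ex}(K)$ and $S' = {\rm ex}(K')$; this is admissible because the hypothesis of Theorem~\ref{thm2} only asks that ${\rm ex}(K) \subseteq S \subseteq K$ and ${\rm ex}(K') \subseteq S' \subseteq K'$, which hold trivially with equality.

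With this choice, Theorem~\ref{thm2} reads: $K \cap K' \neq \emptyset$ if and only if for every pair $(p,p') \in K \times K'$, either there is some $v \in {\rm ex}(K)$ with $d(p,v) \geq d(p',v)$, or there is some $v' \in {\rm ex}(K')$ with $d(p',v') \geq d(p,v')$. Negating both sides of this biconditional yields: $K \cap K' = \emptyset$ if and only if it is \emph{not} the case that every pair admits such a pivot, i.e.\ if and only if there exists a pair $(p,p') \in K \times K'$ for which no $v \in {\rm ex}(K)$ satisfies $d(p,v) \geq d(p',v)$ and no $v' \in {\rm ex}(K')$ satisfies $d(p',v') \geq d(p,v')$. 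Pushing the negation through the inner existential quantifiers turns this last clause into exactly $d(p,v) < d(p',v)$ for all $v \in {\rm ex}(K)$ and $d(p',v') < d(p,v')$ for all $v' \in {\rm ex}(K')$, which is the statement of Theorem~\ref{thm3}.

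The only point requiring care is the quantifier bookkeeping: the universally quantified ``for each $(p,p')$'' in Theorem~\ref{thm2} flips to an existentially quantified ``there exists $(p,p')$'' in Theorem~\ref{thm3}, and the disjunction of two existential statements flips to a conjunction of two universal statements. There is no genuine obstacle here — no new compactness or convexity input beyond what Theorem~\ref{thm2} already supplies — so I would present this as a short corollary-style derivation rather than a fresh proof. If one preferred to avoid negating a biconditional wholesale, one could instead establish the two implications separately: for the forward direction, take $(p_*,p'_*)$ achieving $d(K,K') > 0$ and reuse the isosceles-triangle argument from the proof of Theorem~\ref{thm2} to see that its extreme-point inequalities are in fact strict; for the reverse direction, a pair $(p,p')$ satisfying all those strict inequalities places ${\rm ex}(K)$ in the open Voronoi cell $V(p)$ and ${\rm ex}(K')$ in $V(p')$, whence Krein--Milman gives $K \subseteq V(p)$, $K' \subseteq V(p')$, and disjointness of the open cells forces $K \cap K' = \emptyset$.
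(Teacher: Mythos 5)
Your proposal is correct, and your primary route — deriving Theorem~\ref{thm3} as the formal logical negation of Theorem~\ref{thm2} after specializing $S = {\rm ex}(K)$, $S' = {\rm ex}(K')$ — is a genuinely different and cleaner argument than the one the paper gives. The quantifier bookkeeping you describe is exactly right: the outer universal over $(p,p')$ flips to an existential, the disjunction of the two inner existentials flips to a conjunction of universals, and the negation of $d(p,v) \geq d(p',v)$ is precisely the strict inequality $d(p,v) < d(p',v)$. No new compactness, convexity, or Krein--Milman input is needed beyond what Theorem~\ref{thm2} already encapsulates, so this really is a one-line corollary. The paper instead re-proves both directions semi-constructively: for $K \cap K' = \emptyset \Rightarrow$ strict inequalities, it picks the nearest pair $(p_*,p'_*)$ achieving $d(K,K')$ and reuses the orthogonal-bisector argument from Theorem~\ref{thm2}'s proof; for the converse it repeats the open-Voronoi-cell plus Krein--Milman argument. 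Your ``backup'' sketch in the last few sentences is essentially this same route. The trade-off is worth noting: your logical-negation derivation is shorter and makes transparent that Theorem~\ref{thm3} adds no mathematical content beyond Theorem~\ref{thm2}; the paper's constructive detour, while redundant as a proof, exhibits explicitly that the existentially-asserted pair $(p,p')$ can be taken to be the minimizing pair $(p_*,p'_*)$, which foreshadows Remark~\ref{rem1} and is the object the subsequent algorithms actually compute.
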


\begin{proof} Suppose $K \cap K' = \emptyset$.  Let $d(K,K')=d(p_*, p'_*)$, $(p_*, p'_*) \in K \times K'$. Then as shown in Theorem \ref{thm2}, the orthogonal bisector of $p_*p'_*$ separates $K$ and $K'$. This implies the strict inequalities.

Conversely, suppose there exists $(p, p') \in K \times K'$ satisfying the strict inequalities for all the extreme points.  Then the orthogonal bisector hyperplane of $pp'$ separates ${\rm ex}(K)$ and ${\rm ex}(K')$.  By convexity of $K, K'$, this hyperplane must separate the convex hulls of ${\rm ex}(K)$ and ${\rm ex}(K')$. Then by the Krein-Milman Theorem $K \cap K' = \emptyset$.
\end{proof}

\begin{remark} \label{rem1}
We can view a pair $(p_*,p'_*) \in K \times K'$ such that $d(K,K') =d(p_*, p'_*)$ as a special witness pair.
\end{remark}

When $K$ is a finite point set and $K'$ a singleton (the convex hull membership problem), the distance dualities reduce to the characterization theorems in \cite{kal14}.

\begin{prop} Suppose $d(K,K')=d(p_*, p'_*)$, where $(p_*, p'_*) \in K \times K$.  Then  if $H_{p_*}$  and $H_{p'_*}$  are orthogonal hyperplanes  to the line segment $p_*p'_*$ at $p_*$ and $p'_*$ respectively, they are optimal supporting hyperplanes to $K$ and $K'$, respectively. In other words,  $d(K,K')=d(p_*, p'_*)=d(H_{p_*}, H_{p'_*})$.
\end{prop}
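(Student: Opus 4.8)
The statement is really two claims: that $H_{p_*}$ and $H_{p'_*}$ are (parallel) supporting hyperplanes in the sense of Definition \ref{def4}, and that they are \emph{optimal}, i.e.\ $d(H_{p_*},H_{p'_*})=\delta_*$ and no other supporting pair does better. I would first dispose of the degenerate case $\delta_*=0$ (then $p_*=p'_*$ and $K\cap K'\neq\emptyset$, so the hyperplanes are not even defined and the assertion is vacuous), and assume $\delta_*=d(p_*,p'_*)>0$. Set the unit vector $h=(p'_*-p_*)/\delta_*$ and write $H_{p_*}=\{x:h^Tx=h^Tp_*\}$, $H_{p'_*}=\{x:h^Tx=h^Tp'_*\}$. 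These are parallel, orthogonal to the segment $p_*p'_*$, and $d(H_{p_*},H_{p'_*})=h^T(p'_*-p_*)=\delta_*$. So the content is to show $K$ and $K'$ lie in the two disjoint halfspaces $H_+=\{x:h^Tx\le h^Tp_*\}$ and $H'_+=\{x:h^Tx\ge h^Tp'_*\}$ bounded by these hyperplanes.

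The core step reuses the isosceles–triangle argument from the proof of Theorem \ref{thm2}. Suppose, for contradiction, that some $q\in K$ has $h^Tq>h^Tp_*$. By convexity the segment $x(t)=p_*+t(q-p_*)$, $t\in[0,1]$, lies in $K$. Expanding $\|x(t)-p'_*\|^2=\|p_*-p'_*\|^2+2t\,(p_*-p'_*)^T(q-p_*)+t^2\|q-p_*\|^2$ and differentiating at $t=0$ gives $2(p_*-p'_*)^T(q-p_*)=-2\delta_*\,(h^Tq-h^Tp_*)<0$, so for small $t>0$ the pair $(x(t),p'_*)\in K\times K'$ has distance strictly below $\delta_*$, contradicting $\delta_*=d(K,K')$. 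Hence $K\subset H_+$, and by the symmetric argument (sliding from $p'_*$ toward a hypothetical $q'\in K'$ with $h^Tq'<h^Tp'_*$) $K'\subset H'_+$. Since $h^Tp_*<h^Tp'_*$ these halfspaces are disjoint; $p_*\in H_{p_*}\cap K$ and $p'_*\in H_{p'_*}\cap K'$; and because $K\subset H_+$ (resp.\ $K'\subset H'_+$), every neighborhood of $p_*$ (resp.\ $p'_*$) meets the complement of $K$ (resp.\ $K'$), so $p_*,p'_*$ are boundary points. Thus $(H_{p_*},H_{p'_*})$ supports $(K,K')$ per Definition \ref{def4}.

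For optimality I would show that \emph{any} pair of parallel supporting hyperplanes $(H,H')$ of $(K,K')$ with disjoint halfspaces has $d(H,H')\le\delta_*$. Writing such a pair with common unit normal $g$ as $H=\{g^Tx=c\}$, $H'=\{g^Tx=c'\}$ with $c<c'$, $K\subset\{g^Tx\le c\}$, $K'\subset\{g^Tx\ge c'\}$, for any $x\in K$, $x'\in K'$ we get $d(x,x')\ge |g^T(x'-x)|\ge c'-c=d(H,H')$; minimizing over $K\times K'$ gives $\delta_*\ge d(H,H')$. Since we already computed $d(H_{p_*},H_{p'_*})=\delta_*$, this pair attains the maximum margin, so it is optimal and $d(K,K')=d(p_*,p'_*)=d(H_{p_*},H_{p'_*})$.

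The only points requiring care are bookkeeping of signs in the directional-derivative computation (choosing the orientation of $h$ so that moving a point of $K$ toward the wrong side of $H_{p_*}$ strictly decreases the distance to $p'_*$) and the elementary observation that a segment joining a point of $K$ to a point of $K'$ must traverse the slab between two separating parallel hyperplanes, giving the margin bound; neither is a genuine obstacle, as the argument is essentially a repackaging of the proof of Theorem \ref{thm2} together with Remark \ref{rem1}.
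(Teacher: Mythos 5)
Your proof is correct and takes essentially the same route as the paper: assume $H_{p_*}$ fails to support $K$, take a point of $K$ on the wrong side, and use convexity to slide along the segment from $p_*$ toward that point to get a pair closer than $\delta_*$. You phrase the key step as a directional-derivative/Taylor computation of $\|x(t)-p'_*\|^2$ at $t=0$, while the paper argues geometrically via the isosceles triangle $\triangle p_*wp'_*$; these are the same estimate in different clothing. You also make explicit two things the paper leaves implicit — the degenerate case $\delta_*=0$ and the easy upper bound $d(H,H')\le\delta_*$ for any parallel supporting pair — which is a reasonable tightening but not a different argument.
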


\begin{proof} Assume one of these hyperplanes is not a supporting hyperplane, say $H_{p_*}$. Then it must intersect $K$ at another point $v$ lying strictly between $H_{p_*}$ and $H$. But then by convexity of $K$,  the line segment $p_*v$ lies in $K$, see Figure \ref{Fig89}. We can thus choose a point $w$ on the line segment $p_*v$ so that in the triangle $\triangle p_*wp'_*$ the largest side is $d(p_*, p'_*)$. This contradicts that $d(p_*,p'_*)=d(K,K')$.
\end{proof}

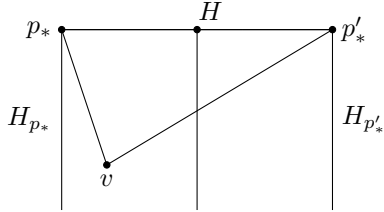
\begin{figure}[htpb]
	\centering
	\begin{tikzpicture}[scale=0.6]

      \draw (-4,0) -- (2,0) node[pos=0.55, above] {$H$};
      \draw (0,0) -- (2,0) node[pos=0.5, above] {};
      \draw (-4,0) -- (-1,0) node[pos=0.5, above] {};
      \draw (-1,0) -- (0,0) node[pos=0.55, above] {};
      \filldraw (-3,-3) circle (2pt);
      \draw (-3,-3) node[below] {$v$};
      \draw (-4,0) -- (-4,-4) node[pos=0.5, left] {$H_{p_*}$};
      \draw (2,0) -- (2,-4) node[pos=0.5, right] {$H_{p'_*}$};
       \filldraw (-4,0) circle (2pt);
       \filldraw (2,0) circle (2pt);
		\draw (-4,0) node[left] {$p_*$};
		\draw (2,0) node[right] {$p_*'$};
         \draw (-1,0) node[above] {};
         \filldraw (-1,0) circle (2pt);
          \draw (-3,-3.) -- (2,0) node[pos=0.3, left] {};
          \draw (-3,-3.) -- (-4,0) node[pos=0.8, left] {};

           \draw (-1,0) -- (-1,-4) node[pos=0.5] {};
		
	\end{tikzpicture}
\begin{center}
\caption{If $H_{p_*}$ is not supporting $K$, $d(p_*,p'_*)$ is not optimal.} \label{Fig89}
\end{center}
\end{figure}

\begin{definition} \label{def9} Given $p \in K$ and $p' \in K'$, we shall say $p$ is {\it witness to the infeasibility} of $p'$ in $K$ if $d(p, v) < d(p', v)$ for all $v \in {\rm ex}(K)$. Equivalently, if the orthogonal bisector of $pp'$ separates $p'$ from $K$. We denote the set of all such witnesses in $K$ as $W_{p'}(K)$. Analogously, we shall say $p'$ is a witness to the infeasibility of $p$ in $K'$ if $d(p', v') < d(p, v')$ for all $v' \in {\rm ex}(K')$. Equivalently, if the orthogonal bisector of $pp'$ separates $p$ from $K'$. We denote the set of all such witnesses in $K'$ as $W_p(K')$.
\end{definition}

 The orthogonal bisector of any witness pair separates $K$ and $K'$.  However, unlike the case when $K'$ is a singleton element, a witness pair does not estimate $d(K,K')$ to within a factor of two.  Also, if $p$ is a witness to the infeasibility of $p'$ in $K$, the orthogonal bisecting hyperplane of $pp'$ does not necessarily separate $K$ and $K'$.

The distance duality theorems stated above generalize the corresponding dualities for the case when  $K$ is the convex hull of a finite number of points and $K'$ is a singleton point. Additionally,  Triangle Algorithm I to be described here generalizes our earlier Triangle Algorithm in \cite{kal14}.  It either computes an $\epsilon$-approximation solution to the intersection problem when $d(K,K') =0$, or a witness pair.  Then Triangle Algorithm II takes over and computes an $\epsilon$-approximate solution to an optimal pair $(p_*,p'_*)$,
 as well as an  $\epsilon$-approximate solution to an optimal pair of supporting hyperplanes  $(H_{p_*}, H_{p'_*})$.

\section{A Theorem for Iterative Improvement of Distance}

Given any pair $(p,p') \in  K \times K'$, $d(p,p')$ provides an upper bound to $\delta_*=d(K,K')$, the distance between the two convex sets.  Our goal is to iteratively compute better estimates of $\delta_*$. Specifically, we will accomplish four tasks:

\begin{itemize}

\item

When $\delta_*=0$ (i.e. $K \cap K' \not = \emptyset$), we will compute $(p, p')  \in K \times K'$, so that $d(p,p')$ is to within a prescribed tolerance (see Definition \ref{def1}).

\item

When $\delta_* >0$ (i.e. $K \cap K'= \emptyset$) we will compute $(p, p')  \in K \times K'$ so that the orthogonal bisecting hyperplane  of the line segment $pp'$, say $H$,  separates $K$ and $K'$, i.e. $(p,p')$ is a witness pair (see Definition \ref{def2}).

\item

When $\delta_* >0$, we will compute $(p, p')  \in K \times K'$ so that it is a witness pair and $d(p,p')$ is within a prescribed tolerance  of $\delta_*$ (see Definition \ref{def3}).

\item

When $\delta_* >0$, we will compute a pair of supporting hyperplanes $(H, H')$, so that $d(H, H')$ is to within a prescribed tolerance  of $\delta_*$ (see Definition \ref{def4}).

\end{itemize}

We will next prove an error bound that will be used in the analysis of complexity of the iterative step in any of the above mentioned four tasks. The theorem to be proved is a general result that reveals a significant property of three points in the Euclidean plane.  It is a stronger and yet more convenient version of a theorem proved in \cite{kal14}.  The reader may assume $p \in K$, $p' \in K'$, and $v' \in K'$ is a $p$-pivot for $p'$ (see Definition \ref{def6}).  However, the theorem alternatively is valid for the case when $v \in K$ is a $p'$-pivot for $p$.

\begin{thm}  \label{thm4}  Let $p, p', v'$  be distinct points in $\mathbb{R} ^m$. Suppose $d(p',v') \geq d(p,v')$ {\rm (see Figure \ref{Fig3A})}.  Let $p''$ be the point on the line segment $p'v'$  that is closest to $p$.  Let $\delta=d(p',p)$, $\delta'=d(p'',p)$, and $r=d(p,v')$. Let $C$ be the circle of radius $r$ centered at $p$
and $C'$ the circle of radius $r$ centered at $v'$.  Let $C''$ be the circle of radius $\delta$ centered at $p$. If $\delta >r$,  let the intersection of the line segment $p'v'$ with $C'$ be denoted by $\overline p'$ {\rm (see Figure \ref{Fig5}, $p_1'$, $p_2'$, $p_3'$)}. Let  $\overline \delta=d(\overline p',p)$. Let $\theta=\angle pv'p'$. Then, $\delta' \leq \overline \delta \leq \delta$, and

\begin{equation} \label{gap}
\delta' \leq
\begin{cases}
\delta \sqrt{1- \frac{\delta^2}{4 r^2}} \leq \delta \exp \big (-\frac{\delta^2}{8 r^2} \big ), &\text{if $\delta \leq r$;}\\
\\
\overline \delta \sqrt{1- \frac{\overline \delta^2}{4 r^2}} \leq \overline \delta \exp \big (-\frac{\overline \delta^2}{8 r^2} \big ) \leq
\delta \exp \big (-\frac{\overline \delta^2}{8 r^2} \big ), &\text{if $\delta > r$, $\overline \delta \leq r$, $0 \leq \theta \leq \frac{\pi}{3}$;}\\
\\
\frac{\sqrt{3}}{2} \overline \delta \leq \frac{\sqrt{3}}{2} \delta, &\text{if $\delta > r$, $\overline \delta > r$, $\frac{\pi}{3} < \theta < \frac{\pi}{2}$;}\\
\\
r \leq \frac{\sqrt{2}}{2} \delta, &\text{if $\delta > r$, $\overline \delta \leq r$, $\theta \geq \frac{\pi}{2}$.}
\end{cases}
\end{equation}
\end{thm}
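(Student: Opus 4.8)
The plan is to set up coordinates in the plane of the triangle $\triangle p\,p'\,v'$ (only these three points matter, so we may work in $\mathbb{R}^2$), placing $v'$ at the origin and the ray $v'p'$ along a convenient axis. The hypothesis $d(p',v')\ge d(p,v')$, i.e. $\delta'' := d(p',v') \ge r$, together with the definition of $p''$ as the foot of the perpendicular from $p$ to the line through $p'$ and $v'$, gives the basic right-triangle relations $\delta' = d(p,p'') = r\sin\theta$ and $d(p'',v') = r\cos\theta$ (valid when $\theta \le \pi/2$; for $\theta \ge \pi/2$ the foot $p''$ actually coincides with $v'$, so $\delta' = r$). First I would establish the chain $\delta' \le \overline\delta \le \delta$: the inequality $\delta' \le \overline\delta$ holds because $p''$ is the closest point of the \emph{line} $p'v'$ to $p$ while $\overline p'$ is merely \emph{some} point on the segment, and $\overline\delta \le \delta$ holds because $\overline p'$ lies on the segment from $p'$ (at distance $\delta$ from $p$) toward $v'$ (at distance $r < \delta$ from $p$), hence by convexity of $t\mapsto d(p, (1-t)p' + t v')^2$ — or simply because moving along the segment toward $v'$ from $p'$ strictly decreases the distance to $p$ until one passes the foot — the distance only decreases as we go from $p'$ to $\overline p'$ (note $\overline p'$ is between $p'$ and the foot since $d(\overline p', v') = r = d(p,v') \ge d(p'', v')\cdot$(something); this needs a short check).

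Next I would handle the four cases. For $\delta \le r$: here $p'$ itself plays the role of "$\overline p'$" — we have $d(p,p')=\delta$ and I want $\delta' = r\sin\theta$ where now $\sin\theta$ is controlled. In the isosceles-like configuration, drop the perpendicular from $v'$; since $d(p,v')=r$ and $d(p,p')=\delta \le r$, the angle $\theta = \angle p v' p'$ satisfies $\sin(\theta) \le \delta/(2r)$ is \emph{not} quite it — rather, the law of sines / the geometry of the chord gives $\delta' = \delta\cos\beta$ where $\beta = \angle(p p'', p p')$... Let me instead use: in right triangle $p\,p''\,p'$, $\delta' = \delta\cos(\angle p' )$ where $\angle p' = \angle p p' v'$; and in triangle $p p' v'$ with two sides $\delta, r$ and $\delta \le r$, one shows $\cos(\angle p') \le \sqrt{1 - \delta^2/(4r^2)}$ by the law of cosines applied to the side $d(p',v')$ together with $d(p',v') \ge r$. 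That yields $\delta' \le \delta\sqrt{1-\delta^2/(4r^2)}$, and $\sqrt{1-x}\le e^{-x/2}$ finishes the first line. For the case $\delta > r$, $\overline\delta \le r$: replace $p'$ by $\overline p'$, which satisfies $d(\overline p', v') = r = d(p,v')$ exactly, so the triangle $p\,\overline p'\,v'$ is isosceles with the two equal sides of length $r$; now the \emph{same} bound applies with $\overline\delta$ in place of $\delta$, giving $\delta' \le \overline\delta\sqrt{1-\overline\delta^2/(4r^2)} \le \overline\delta e^{-\overline\delta^2/(8r^2)}$, and $\overline\delta\le\delta$ upgrades the last factor's base; the angle restriction $\theta\le\pi/3$ is what guarantees $\overline\delta\le r$ (isosceles triangle with apex angle $\theta$ and legs $r$ has base $2r\sin(\theta/2)\le r$ iff $\theta\le\pi/3$). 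For $\pi/3<\theta<\pi/2$ with $\overline\delta>r$: here $\delta' = r\sin\theta$ and also $\delta' \le \overline\delta\sin\theta \le \overline\delta\cdot(\sqrt3/2)$ is false for $\theta$ near $\pi/2$ — instead use $\delta' = d(p,p'')$ and the isosceles triangle $p\overline p' v'$: its altitude from $p$ onto $\overline p' v'$ has length $r\sin\theta$, but we bound differently: $\delta' \le \overline\delta$ always and we need an extra $\sqrt3/2$; this comes from $\delta' = \overline\delta\cos(\angle \overline p')$ where the base angle $\angle\overline p' = (\pi-\theta)/2 > \pi/6$, so $\cos(\angle\overline p') < \cos(\pi/6) = \sqrt3/2$. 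Finally for $\theta \ge \pi/2$: the foot $p''$ lands at (or beyond) $v'$, so the closest point on the \emph{segment} $p'v'$ to $p$ is $v'$ itself, whence $\delta' = r$; and $\theta\ge\pi/2$ in triangle $pp'v'$ forces $\delta^2 = r^2 + d(p',v')^2 - 2r\,d(p',v')\cos\theta \ge r^2 + r^2 = 2r^2$ using $d(p',v')\ge r$ and $\cos\theta\le 0$, so $r \le \delta/\sqrt2$.

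The main obstacle I anticipate is the bookkeeping around $\overline p'$ and $p''$: being careful about \emph{when} the foot of the perpendicular lies inside the segment $p'v'$ versus outside it, and correspondingly whether $\delta'$ equals $r\sin\theta$ or equals $r$, and verifying that $\overline p'$ indeed lies between $p'$ and the foot so that the monotonicity argument for $\delta'\le\overline\delta\le\delta$ is clean in every case. The trigonometric estimates themselves (law of cosines, $\sqrt{1-x}\le e^{-x/2}$, the base-angle bound $\cos((\pi-\theta)/2)<\sqrt3/2$ for $\theta>\pi/3$) are routine once the right triangle is correctly identified; the only genuinely delicate point is confirming that in the case $\delta>r,\ \overline\delta\le r$ the triangle $p\,\overline p'\,v'$ really is isosceles with legs exactly $r$, which is immediate from $d(\overline p',v')=r$ by the definition of $\overline p'$ as the intersection of segment $p'v'$ with the circle $C'$ of radius $r$ about $v'$, and from $d(p,v')=r$ by definition of $r$. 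I would also double-check the edge cases where two of the points coincide or $\theta=0$, which make several inequalities degenerate but trivially true.
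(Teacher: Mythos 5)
Your plan follows the paper's route in all essentials: the same four-way case split on $\delta\lessgtr r$ and $\theta$, the same observation that the extremal configuration in the first case is the isosceles triangle with $d(p',v')=r$ (which you reach by a law-of-cosines monotonicity argument in $s=d(p',v')$ rather than the paper's circle-plus-similar-triangles picture), the same reduction of the second case to the first by replacing $p'$ with $\overline p'$ (which leaves $p''$ unchanged), and the same inequality $\sqrt{1-t}\le e^{-t/2}$. So there is no missing idea. There are, however, two concrete trigonometric slips that would derail a literal write-up. In case (i), in the right triangle $p\,p''\,p'$ (right angle at $p''$), the leg $pp''=\delta'$ is \emph{opposite} the angle $\angle p'=\angle pp'v'$, so $\delta'=\delta\sin(\angle p')$, not $\delta\cos(\angle p')$; correspondingly, what the law of cosines together with $d(p',v')\ge r$ actually yields is $\cos(\angle p')\ge \delta/(2r)$, equivalently $\sin(\angle p')\le\sqrt{1-\delta^2/(4r^2)}$. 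Your stated inequality $\cos(\angle p')\le\sqrt{1-\delta^2/(4r^2)}$ is simply false: as $d(p',v')$ grows toward $r+\delta$, $\cos(\angle p')\to 1$. The two sine/cosine swaps cancel, so the target bound is still reachable, but the intermediate claim as written does not hold and you would discover that the first time you tried to check it. The same swap reappears in case (iii): $\delta'=\overline\delta\sin(\angle\overline p')=\overline\delta\cos(\theta/2)$, not $\overline\delta\cos(\angle\overline p')$, after which the bound comes from $\cos(\theta/2)<\cos(\pi/6)=\sqrt3/2$ for $\theta>\pi/3$. Your handling of case (iv) (law of cosines with $\cos\theta\le 0$ and $d(p',v')\ge r$ giving $\delta^2\ge 2r^2$) is in fact cleaner and more explicit than the paper's one-line ``Trivially.'' Once the two trig functions are corrected, your argument goes through and coincides in spirit with the paper's.
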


\begin{proof} We will prove the four cases in (\ref{gap}) case by case.
Without loss of generality we may assume they lie in the Euclidean plane.

Case (i): $\delta \leq r$, see Figure \ref{Fig3A}. Consider $p'$ as a variable $x'$ and the corresponding $p''$ as $x''$. We will consider the maximum value of $d(x'',p)$ subject to the desired constraints. We will prove
\begin{equation}
\delta^* = \max \big \{d(x'',p):  \quad x \in \mathbb{R} ^2, \quad d(x',p) =\delta, \quad d(x',v') \geq r \big \}= \delta \sqrt{1- \frac{\delta^2}{4 r^2}}.
\end{equation}

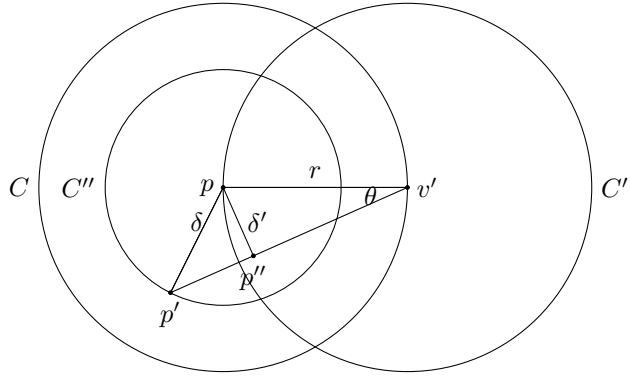
\begin{figure}[htpb]
	\centering
	\begin{tikzpicture}[scale=0.35]


\begin{scope}[black]
         \draw (0.0,0.0) circle (7.0);
		 \draw (7.0,0.0) circle (7.0);
		 \draw (0.0,0.0) circle (4.48);
\end{scope}
		
		\draw (0.0,0.0) -- (7.0,0.0) -- (-2.0,-4.0) -- cycle;
      \draw (0,0) -- (7,0) node[pos=0.5, above] {$r$};
      \draw (-2,-4) -- (0,0) node[pos=0.5, above] {$\delta$};
       \draw (0,0) -- (1.15,-2.6) node[pos=0.5, right] {$\delta'$};
       \draw (1.15,-2.6) node[below] {$p''$};
       \filldraw (1.15,-2.6) circle (2pt);
		\draw (0,0) node[left] {$p$};
		\draw (7,0) node[right] {$v'$};
	\draw (5,-.3) node[right] {$\theta$};
		\draw (-2,-4) node[below] {$p'$};
         \draw (14,0) node[right]{$C'$};
          \draw (-7,0) node[left]{$C$};
           \draw (-4.48,0) node[left]{$C''$};
           \filldraw (0,0) circle (2pt);
\filldraw (7,0) circle (2pt);
\filldraw (-2,-4) circle (2pt);
		
	\end{tikzpicture}
\begin{center}
\caption{Depiction of gaps $\delta=d(p',p)$, $\delta'=d(p'',p)$, when $\delta \leq r=d(p,v')$.} \label{Fig3A}
\end{center}
\end{figure}

Given that $\delta \leq r$, $p'$ must lie inside or on the boundary of the circle of radius $\delta$ centered at $p$ (i.e. $C$), but outside or on the boundary of the circle of radius $r$ centered at $v'$ (i.e. $C'$),
see Figure \ref{Fig3A}.

Consider the ratio $\delta'/r$ as $p'$ ranges over all the points on the circumference of $C''$ while
outside or on the boundary of $C'$. It is  geometrically obvious and easy to argue that this ratio is maximized when $p'$ is a point of intersection of the circles $C'$ and $C''$, denoted by $p'_*$ in Figure \ref{Fig4}. We now compute the corresponding ratio.

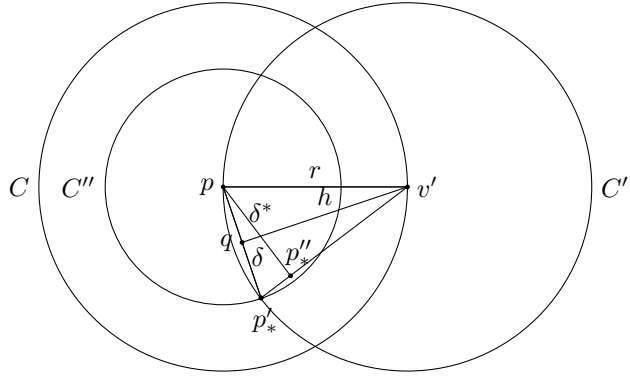
\begin{figure}[htpb]
	\centering
	\begin{tikzpicture}[scale=0.35]


\begin{scope}[black]
         \draw (0.0,0.0) circle (7.0);
		 \draw (7.0,0.0) circle (7.0);
		 \draw (0.0,0.0) circle (4.48);
\end{scope}
		
		\draw (0.0,0.0) -- (7.0,0.0) -- (1.44,-4.22) -- cycle;
      \draw (0,0) -- (7,0) node[pos=0.5, above] {$r$};
       \filldraw (0,0) circle (2pt);
       \filldraw (7,0) circle (2pt);
       \draw (.72,-2.11) -- (7,0) node[pos=0.5, above] {$h$};
       \draw (2.56,-3.38) -- (0,0) node[pos=0.5, above] {$~\delta^*$};
       \filldraw (0.72,-2.11) circle (2pt);
\filldraw (2.56,-3.38) circle (2pt);
       \draw (1.44,-4.22) -- (0,0) node[pos=0.2, above] {$~\delta$};
       \draw (2.56,-3.38) node[above] {$~~p''_{*}$};
       \filldraw (1.44,-4.22) circle (2pt);
\filldraw (1.44,-4.22) circle (2pt);
       \draw (1.44,-4.22) node[below] {$~p'_*$};
        \draw (.72,-2.11) node[left] {$q$};
		\draw (0,0) node[left] {$p$};
		\draw (7,0) node[right] {$v'$};
         \draw (14,0) node[right]{$C'$};
          \draw (-7,0) node[left]{$C$};
           \draw (-4.48,0) node[left]{$C''$};
		
	\end{tikzpicture}
	
\begin{center}
\caption{The worst-case scenario for the gap $\delta'= \delta^*$, when $\delta \leq r$.} \label{Fig4}
\end{center}
\end{figure}

Consider Figure \ref{Fig4}, and the isosceles triangle  $\triangle v'pp'_*$. Let $q$ denote the midpoint of $p$ and $p'_*$. Let $h$ denote $d(q,v')$. Let $p''_*$ be the nearest point to $p$ on the line segment $p'_*v'$.  Consider the right triangles $\triangle pv'q$  and  $\triangle pp'_*p''_{*}$. The angles  $\angle v'pq$  and  $\angle pp'_*p''_{*}$ are equal. Hence, the two triangles are similar and we may write

\begin{equation}
\frac{\delta^*}{\delta} = \frac{h}{r}=\frac{1}{r}\sqrt{{r^2} - \frac{\delta^2}{4}}= \sqrt{1- \frac{\delta^2}{4 r^2}}.
\end{equation}
This proves the first inequality in the first case of (\ref{gap}). To prove the next inequality for this case, we use the fact that for any real $t$,  $1+t \leq \exp(t)$, and set $t= -{\delta^2}/{4 r^2}$.

Next we assume $\delta >r$ and consider the three remaining cases of the theorem according the values of $\theta=\angle pv'p'$: $0 \leq \theta \leq \pi/3$,
$ \pi/3  < \theta < \pi/2$, and $\theta \geq \pi/2$.  Figure \ref{Fig5} considers one example of each possible case, corresponding to $p_1'$, $p_2'$, and $p_3'$, respectively. Note that
$$\delta' \leq \overline \delta \leq \delta.$$
This is straightforward by considering the triangle $\triangle pp''p'$ (it is a right triangle when $\theta$ is acute).

\begin{figure}[htpb]
	\centering
	\begin{tikzpicture}[scale=0.45]
\begin{scope}[black]
         \draw (0.0,0.0) circle (7.0);
		 \draw (7.0,0.0) circle (7.0);
\end{scope}
		\draw (0.0,0.0) -- (7.0,0.0) -- (-7, -10) -- cycle;
\draw (0.0,0.0) -- (7.0,0.0) -- (10, -11) -- cycle;
\draw (0.0,0.0) -- (7.0,0.0) -- (3.8, -8) -- cycle;
\draw (-7,-10) node[below] {$p_1'$};
\filldraw (1.26,-4.1) circle (2pt);
\draw (1.26,-4.1) node[left] {$\overline p_1'$};
\draw (0,0) -- (1.26,-4.1) node[pos=0.5,left] {$\overline \delta$};
\draw (1.26,-4.1) --(0,0);
\draw (3.8,-8.0) node[below] {$p_2'$};
\filldraw (3.8,-8.0) circle (2pt);
\draw (0,0) -- (3.8,-8.0) node[pos=0.5, above] {};
\draw (10,-11) node[below] {$p_3'$};
\filldraw (4.33,-6.5) circle (2pt);
\draw (4.33,-6.5) node[right] {$\overline p_2'$};
\filldraw (-7,-10) circle (2pt);
\filldraw (10,-11) circle (2pt);
\filldraw (0,0) circle (2pt);
\filldraw (7,0) circle (2pt);
\draw (-7,-10) -- (0,0) node[pos=0.5, left] {$\delta$};
\draw (10,-11) -- (0,0) node[pos=0.3, left] {$\delta$};
\draw (3.8,-8.0) -- (0,0) node[pos=0.2, left] {$\delta$};
\draw (10,-11) -- (7,0) node[pos=0.3, right] {$\delta$};
      \draw (0,0) -- (7,0) node[pos=0.5, above] {$r$};
       \draw (2.3,-3.38) node[below] {$~~p_1''$};
       \draw (2.3,-3.38) --(0,0) node[pos=0.2] {$\delta'$};
       \filldraw (2.3,-3.38) circle (2pt);
        \filldraw (6,-2.5) circle (2pt);
        \draw (6,-2.5) node[below] {$~~p_2''$};
        \draw (6,-2.5) -- (0,0) node[pos=0.3, left] {$\delta'$};
		\draw (0,0) node[left] {$p$};
		\draw (7,0) node[right] {$v'=p_3''$};
         \draw (14,0) node[right]{$C'$};
          \draw (-7,0) node[left]{$C$};

	\end{tikzpicture}
\begin{center}
\caption{Depiction of gaps $\delta=d(p',p)$, $\delta'=d(p'',p)$, when $\delta > r=d(p,v)$.} \label{Fig5}
\end{center}
\end{figure}
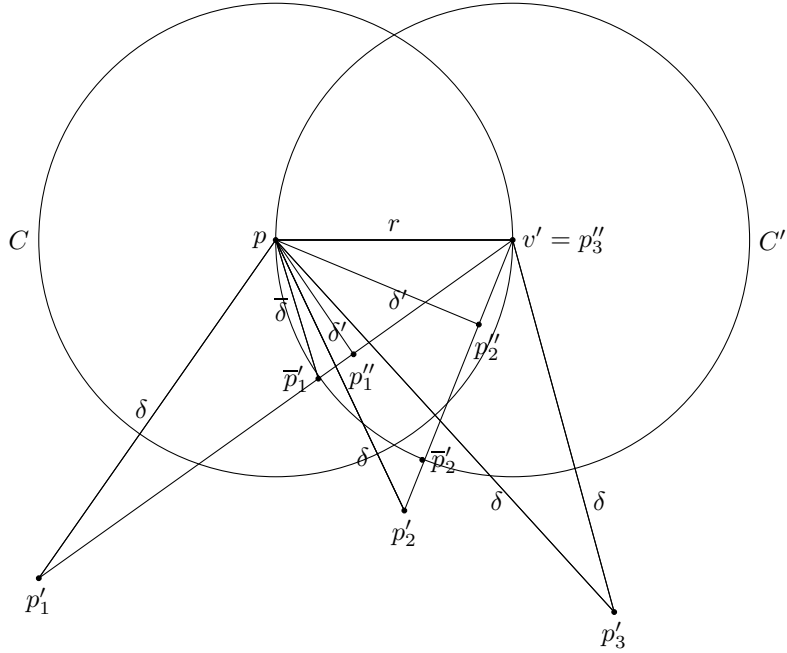

Case (ii): $\delta > r$,  $0 \leq \theta \leq \pi/3$.  Since $\theta$ is acute $\overline p'$ lies inside of $C$. Replacing $p'$ by $\overline p'$, $p''$ remains unchanged (see $p_1'$, $\overline p_1'$, and $p_1''$ in Figure \ref{Fig5}).   In  this case $\overline \delta \leq r$, then we are back to the first case and the same analysis applies with $\delta$ replaced with $\overline \delta$. This together with the inequalities $1+t \leq \exp(t)$ gives the proof of the two inequalities in the second case of (\ref{gap}).

Case (iii): $\delta > r$,  $\pi/3 < \theta < \pi/2$ (see $p_2'$, $\overline p_2'$, and $p_2''$ in Figure \ref{Fig5}). In this case too it can be shown that
$$\delta'=\overline \delta \sqrt{1- \frac{\overline \delta^2}{4 r^2}}.$$
We determine when the right-hand-side of above quantity is maximized, given that $\theta$ lies in the above range.  It is easy to show the maximum occurs for $\theta=\pi/3$, corresponding to the case  where $\overline p'$ lies on the intersection of  $C$ and $C'$.  Equivalently,  $\theta= \pi/3$ gives $\overline \delta =\sqrt{3}r/2$. This gives the first claimed inequality in case 3 of (\ref{gap}). Next, the fact that
$\overline \delta \leq \delta$ proves the next inequality in this case.

Case (iv): $\delta > r$,  $\pi/2 \leq
 \theta$. In this case $p''$ coincides with $v$ (see $p''_3$ in Figure \ref{Fig5}). Trivially we have,  $\delta' \leq r \leq  \delta \sqrt{2}/2$.
\end{proof}

\section{Algorithm for Testing Intersection or Separation of Convex Sets}  \label{sec4}

In this section and next we describe a simple algorithm, referred as {\it Triangle Algorithm I}. This is a generalization of the original Triangle Algorithm for the special case when $K$ is the convex hull of a finite set of points and $K'$ a singleton point.  In contrast with the earlier version, Triangle Algorithm I applies to the case where $K$ and $K'$ are arbitrary compact convex sets.
The justification in the name of the algorithm lies in the fact that in each iteration, given a pair $(p,p') \in K \times K'$,  where $d(p,p')$ is not yet satisfactory, the algorithm searchers for a pivot, either in $K$ or in $K'$ so as to reduce the gap $d(p,p')$. Specifically, the algorithm searches for a triangle $\triangle pp'v'$ where $v'$ lies in a subset $S'$ of $K'$ containing ${\rm ex}(K')$ (extreme points of $K'$), $p' \in K'$, where $d(p',v') \geq d(p,v')$; or
a triangle $\triangle pp'v$ where $v$ lies in a subset $S$ of $K$ containing ${\rm ex}(K)$, $p \in K$, where $d(p,v) \geq d(p',v)$.  Given that such triangle exists, it uses $v$ or $v'$ as a pivot to bring $p,p'$ in current iterate $(p,p') = (p_k,p_k') \in K \times K'$  closer to each other by generating either a new iterate $p_{k+1} \in K$, or new iterate $p'_{k+1} \in K'$ such that if we denote the new iterate by $(p_{k+1}, p'_{k+1})$,
$d(p_{k+1}, p'_{k+1}) < d(p_k,p'_k)$.   Theorem \ref{thm4} assures a certain reduction in terms of $d(p_k,p'_k)$ itself. If no such a triangle exists, then by Theorem \ref{thm2}, $(p_k, p'_k)$ is a witness pair certifying that $K$ and $K'$ do not intersect.

\begin{definition} \label{def10} Given three points $x,y,z \in \mathbb{R}^m$ such that $d(y,z) \geq d(x,z)$. Let $nearest(x; yz)$ be the nearest point to $x$ on the line segment joining  $y$ to $z$.
\end{definition}

We have

\begin{prop} Given three points $x,y,z \in \mathbb{R}^m$, let the {\it step-size} be
\begin{equation}
\alpha = \frac{(x-y)^T(z-y)}{d^2(y,z)}.
\end{equation}
Then
\begin{equation} \label{pdp}
nearest(x; yz)=
\begin{cases}
(1-\alpha)y + \alpha z, &\text{if $\alpha \in [0,1]$;}\\
z, &\text{otherwise.} ~~~\Box
\end{cases}
\end{equation}
\end{prop}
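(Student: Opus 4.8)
The plan is to reduce $nearest(x;yz)$ to minimizing a one‑variable convex quadratic over the unit interval, and then to use the side hypothesis $d(y,z)\ge d(x,z)$ only at one point, namely to rule out $\alpha<0$. First note we may assume $y\ne z$ (otherwise $d(x,z)\le d(y,z)=0$ forces $x=y=z$ and the statement is trivial, while $\alpha$ is defined only when $d(y,z)>0$). Parametrize the segment from $y$ to $z$ by $w(t)=y+t(z-y)$, $t\in[0,1]$, so that $nearest(x;yz)=w(t_*)$ where $t_*$ minimizes
\[
f(t)=\|x-w(t)\|^2=\|x-y\|^2-2t\,(x-y)^T(z-y)+t^2\,d^2(y,z)
\]
over $[0,1]$. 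Since $d^2(y,z)>0$, $f$ is a strictly convex quadratic whose unconstrained minimizer is exactly $t=\alpha=(x-y)^T(z-y)/d^2(y,z)$.

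If $\alpha\in[0,1]$, the unconstrained minimizer already lies in the interval, so $t_*=\alpha$ and $nearest(x;yz)=w(\alpha)=(1-\alpha)y+\alpha z$, which is the first case of \eqref{pdp}.

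For the remaining case I would invoke the hypothesis. Writing $x-z=(x-y)-(z-y)$ and expanding,
\[
d^2(x,z)=\|x-y\|^2-2(x-y)^T(z-y)+d^2(y,z),
\]
so $d(y,z)\ge d(x,z)$ is equivalent to $\|x-y\|^2\le 2(x-y)^T(z-y)$, whence $(x-y)^T(z-y)\ge \tfrac12\|x-y\|^2\ge 0$ and therefore $\alpha\ge 0$. Consequently $\alpha\notin[0,1]$ forces $\alpha>1$. Because $f$ is convex with its unconstrained minimizer to the right of $1$, $f$ is nonincreasing on $[0,1]$, so its minimum over $[0,1]$ is attained at $t_*=1$, giving $nearest(x;yz)=w(1)=z$. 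This is the second case of \eqref{pdp}, completing the proof.

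The only genuine content is the observation that the condition $d(y,z)\ge d(x,z)$ eliminates the possibility $\alpha<0$; without it the ``otherwise'' branch would have to return $y$ when $\alpha<0$. I therefore do not expect a real obstacle: the one thing to carry out carefully is the norm expansion that converts the hypothesis into the inequality $\alpha\ge 0$, and the rest is the standard one‑dimensional convex‑minimization argument together with the elementary fact that orthogonal projection of $x$ onto the affine line through $y,z$ has barycentric coordinate $\alpha$ relative to $(y,z)$.
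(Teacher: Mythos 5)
Your proof is correct, and since the paper states this proposition with only a $\Box$ (no written proof), you have supplied the argument the authors evidently regard as routine. You also put your finger on the one non‑routine point: without the side hypothesis $d(y,z)\ge d(x,z)$ carried over from Definition~\ref{def10}, the ``otherwise'' branch would have to distinguish $\alpha<0$ (returning $y$) from $\alpha>1$ (returning $z$); your norm expansion $d^2(x,z)=\|x-y\|^2-2(x-y)^T(z-y)+d^2(y,z)$ cleanly shows the hypothesis forces $(x-y)^T(z-y)\ge\tfrac12\|x-y\|^2\ge0$, hence $\alpha\ge0$, so only the $\alpha>1$ case survives. The rest is the standard clamped projection of $x$ onto the segment via one‑variable convex minimization, which is exactly what the formula in \eqref{pdp} encodes. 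No gaps.
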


\section{Triangle Algorithm I: Properties and  Complexity Analysis}

Here we describe Triangle Algorithm I for testing if two compact convex sets $K, K'$ intersect. It computes a pair $(p,p')  \in K \times K'$ such that either $d(p,p')$ is to within a prescribed tolerance, or it is a witness pair. It assumes we are given points $(p_0, p_0') \in K \times K'$ and $\epsilon \in (0,1)$.

\begin{center}
\begin{tikzpicture}
\node [mybox] (box){%
    \begin{minipage}{0.93\textwidth}
{\bf  Triangle Algorithm I ($(p_0, p'_0) \in K \times K'$, $\epsilon \in (0,1)$)}\

{\bf Step 0.} Set $p=v=p_0$, $p'=v'= p_0'$.

{\bf Step 1.} If $d(p,p') \leq \epsilon d(p,v)$, or $d(p,p') \leq \epsilon d(p',v')$, stop.

{\bf Step 2.}  Test if there exists $v \in K$ that is a $p$-pivot for $p'$, i.e.
\begin{equation} \label{pivotsx}
v^T(p'-p) \geq  \frac{1}{2} (\Vert p' \Vert^2 -  \Vert p \Vert^2)
\end{equation}
(e.g. set $v= {\rm argmax} \{(p'-p)^Tv:  v \in K\}$).  If  pivot exists,
set $p \leftarrow nearest(p'; pv)$. Go to Step 1.

{\bf Step 3.}  Test if there exists $v' \in K'$ that is a $p'$-pivot for $p$, i.e.
\begin{equation} \label{pivotsxx}
v'^T(p-p') \geq  \frac{1}{2}(\Vert p \Vert^2 -  \Vert p' \Vert^2)
\end{equation}
(e.g. set  $v'={\rm argmax} \{(p-p')^Tv':  v' \in K' \}$).  If pivot exists,
set $p' \leftarrow nearest(p; p'v')$. Go to Step 1.


{\bf Step 4.} Output $(p,p')$ as a witness pair, stop ($K \cap K' = \emptyset$).


    \end{minipage}};
\end{tikzpicture}
\end{center}

\begin{remark}  \label{remcomp} From the computational point of view it is important to note that in Step 2 of Triangle Algorithm I the search for a pivot does not necessarily require solving a linear programming to optimality. For instance, in the implementation of the Triangle Algorithm for the convex hull membership problem,  a pivot may be found by randomly checking a constant number of $v_i$'s. Thus in such cases the complexity of finding a pivot is merely $O(m)$. This together with updating the new iterate results in a complexity of $O(m+n)$ operations per iteration.
For some heuristic ideas and computational results, see \cite{Meng}. In many practical cases, the number of points in $V$ or $V'$ is much larger than $m$ and we may attempt to keep the iterates $(p,p')$ so that $p$  has a representation in terms of $O(m)$ of the $n$ points of $V$, and also $p'$  a representation in terms of $O(m)$ of the $n'$ points of $V'$ .  This means a typical complexity may even be further reduced to $O(m)$, as opposed to $O(m+\max \{n,n'\})$. In Section 8 we will discuss the complexity of these problem in more detail.
\end{remark}

\begin{lemma}  \label{lem1}  Assume $K \cap K' \not = \emptyset$. Let $\rho_*$ be the maximum of the diameters of $K$ and $K'$ (see \ref{eqa3}).  Assume $(p_0, p'_0)  \in K \times K'$ is given. Let $\delta_0=d(p_0,p'_0)$. Let $k\equiv k(\delta_0)$ be the maximum number of iterations of Triangle Algorithm I  to halve the error, i.e.
compute a pair $(p_k, p'_k) \in K \times K'$ so that if $\delta_j=d(p_j,p'_j)$ for $j=1, \dots, k$, we have
\begin{equation}
\delta_k \leq \frac{\delta_0}{2} < \delta_j,  \quad j=1, \dots, k-1.
\end{equation}
Then, $k$ satisfies
\begin{equation} \label{iter}
k= k(\delta_0) \leq  \lceil N_0 \rceil, \quad   N_0 \equiv N(\delta_0) =(32 \ln 2)  \frac{\rho_*^2}{\delta_0^2} <  23 \frac{\rho_*^2}{\delta_0^2}.
\end{equation}
\end{lemma}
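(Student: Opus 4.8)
The plan is to bound the number of iterations needed to halve the current gap $\delta_0 = d(p_0,p'_0)$ by tracking how much a single pivot step reduces the gap, and then summing (or rather, iterating) these per-step reductions until the gap drops below $\delta_0/2$. The engine for the per-step estimate is Theorem~\ref{thm4}: if at iteration $j$ we have iterate $(p_j,p'_j)$ with gap $\delta_j = d(p_j,p'_j)$, and a pivot exists (which it must, since $K\cap K' \neq\emptyset$ and so by Theorem~\ref{thm2} no iterate before convergence can be a witness pair), then the new gap $\delta_{j+1} = d(\text{nearest}(\cdot;\cdot),\cdot)$ satisfies $\delta_{j+1} \le \delta'$ in the notation of Theorem~\ref{thm4}, with $r = d(p,v')$ or $d(p,v) \le \rho_*$ by definition of $\rho_*$ as the max of the diameters. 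So first I would record the clean consequence: as long as $\delta_j \ge \delta_0/2$, we are \emph{not} in the degenerate regimes — more precisely, while the gap is still at least $\delta_0/2$ and we are trying to halve, I want to argue we stay in the "$\delta \le r$" branch of \eqref{gap}, or handle the other branches by noting they give an even bigger (constant-factor) reduction. Actually the cleanest route: in all four cases of \eqref{gap} one has $\delta' \le \delta\sqrt{1 - \delta^2/(4r^2)}$ whenever $\delta \le r$, and when $\delta > r$ the reduction is by a factor $\le \sqrt{3}/2$ or $\le\sqrt2/2$, which is even stronger; but to get a uniform bound in terms of $\rho_*$ I would use $\delta_{j+1} \le \delta_j \exp(-\delta_j^2/(8r^2)) \le \delta_j \exp(-\delta_j^2/(8\rho_*^2))$, valid in the main case, and check the other cases only reduce the gap faster.

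The second step is the iteration/summation. While $\delta_j > \delta_0/2$ for $j = 0,\dots,k-1$, we have $\delta_{j+1} \le \delta_j \exp(-\delta_j^2/(8\rho_*^2)) \le \delta_j \exp(-(\delta_0/2)^2/(8\rho_*^2)) = \delta_j \exp(-\delta_0^2/(32\rho_*^2))$. Chaining this bound over $j = 0,\dots,k-1$ gives $\delta_k \le \delta_0 \exp(-k\,\delta_0^2/(32\rho_*^2))$. We want the right-hand side to be at most $\delta_0/2$, i.e. $\exp(-k\,\delta_0^2/(32\rho_*^2)) \le 1/2$, i.e. $k \ge (32\ln 2)\,\rho_*^2/\delta_0^2$. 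Hence $k = k(\delta_0) \le \lceil (32\ln 2)\rho_*^2/\delta_0^2 \rceil$, which is exactly \eqref{iter}; the final numerical inequality $32\ln 2 < 23$ is just $\ln 2 < 23/32 = 0.71875$, which holds since $\ln 2 \approx 0.6931$.

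The main obstacle — and the place where I would be careful — is making sure the per-step reduction factor in Theorem~\ref{thm4} is legitimately applied with $r \le \rho_*$ in \emph{every} iteration, and that the three non-generic cases ($\delta > r$) genuinely do no worse than the main case, so that the uniform bound $\delta_{j+1} \le \delta_j \exp(-\delta_j^2/(8\rho_*^2))$ is safe to chain. For the case $\delta > r$, $\bar\delta \le r$, $\theta \le \pi/3$, \eqref{gap} already states $\delta' \le \delta\exp(-\bar\delta^2/(8r^2))$, but $\bar\delta$ could in principle be small, so I would instead invoke that in this case $\delta' \le \bar\delta \le \delta$ together with the observation that one may equivalently have applied the pivot to get $\bar\delta$ first; alternatively, note $\delta' \le \frac{\sqrt3}{2}\delta$ or $\le\frac{\sqrt2}{2}\delta$ there, and $\frac{\sqrt3}{2} = e^{-\ln(2/\sqrt3)} < e^{-\delta^2/(8\rho_*^2)}$ fails in general — so the honest fix is: when $\delta > r$ we always have $\delta < \rho_*$ is false reasoning... rather, one simply checks $\delta > r$ forces $\delta/\rho_*$ not too small only if $r$ is comparable, which need not hold. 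The robust resolution, which I would adopt, is that in the halving phase $\delta_j \ge \delta_0/2$ while $r \le \rho_*$, so $\delta_j^2/r^2 \ge \delta_0^2/(4\rho_*^2)$, and in \emph{every} branch of \eqref{gap} one gets $\delta_{j+1}/\delta_j \le \sqrt{1 - \delta_j^2/(4r^2)} \le \exp(-\delta_j^2/(8r^2))$ — noting the $\delta>r$ branches give factors $\sqrt3/2$ and $\sqrt2/2$ which are bounded by $\sqrt{1-\delta_j^2/(4r^2)}$ precisely because in those branches $\delta_j/r$ is large (indeed $\ge 1$), making $\sqrt{1 - \delta_j^2/(4r^2)}$ small — so the clean exponential bound holds uniformly, and the summation above goes through verbatim.
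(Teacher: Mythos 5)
The overall strategy -- extract a per-iteration contraction from Theorem~\ref{thm4}, reduce it to the uniform factor $\exp(-\delta_0^2/(32\rho_*^2))$, and chain -- is exactly the paper's, and your closing arithmetic ($32\ln 2 < 23$, since $\ln 2 < 23/32$) is correct. The genuine gap is in the ``robust resolution'' you give for the branches of \eqref{gap} with $\delta > r$. You assert that the constant factors $\tfrac{\sqrt3}{2}$ and $\tfrac{\sqrt2}{2}$ are ``bounded by $\sqrt{1-\delta_j^2/(4r^2)}$ precisely because \dots $\delta_j/r \ge 1$, making $\sqrt{1-\delta_j^2/(4r^2)}$ small.'' That is the wrong direction: if the radical is small it cannot serve as an upper bound for $\tfrac{\sqrt3}{2}\approx 0.866$. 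Indeed, whenever $\delta_j > r$ one has $\sqrt{1-\delta_j^2/(4r^2)} < \tfrac{\sqrt3}{2}$, and for $\delta_j > 2r$ the quantity under the radical is negative. In the last branch ($\theta \ge \pi/2$, bound $\delta' \le r$), the inequality $r \le \delta\sqrt{1-\delta^2/(4r^2)}$ is equivalent to $(\delta^2-2r^2)^2 \le 0$, which fails for every $\delta \neq \sqrt2\, r$. So the uniform per-step claim $\delta_{j+1}/\delta_j \le \sqrt{1-\delta_j^2/(4r^2)}$ is simply false in those branches, and your chaining argument is not justified as written.

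The correct patch -- and it is what the paper does, implicitly, via the $(k_1,k_2)$ decomposition and the remark that the worst case is $k_1=0$ -- is to compare the constant-factor reductions not to $\sqrt{1-\delta_j^2/(4r^2)}$ but directly to the exponential factor you actually want to chain, $\exp(-\delta_0^2/(32\rho_*^2))$. Here the hypothesis $K\cap K'\neq\emptyset$ is essential: taking any $q\in K\cap K'$, $\delta_0 \le d(p_0,q)+d(q,p'_0) \le 2\rho_*$, so $\delta_0^2/(32\rho_*^2)\le 1/8$ and $\exp(-\delta_0^2/(32\rho_*^2)) \ge e^{-1/8}\approx 0.8825 > \tfrac{\sqrt3}{2} > \tfrac{\sqrt2}{2}$. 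Thus every constant-factor step is at least as strong as the exponential step, and the chain $\delta_k \le \delta_0\exp(-k\,\delta_0^2/(32\rho_*^2))$ holds. (You never invoke $\delta_0 \le 2\rho_*$, but without it the bound $k = O(\rho_*^2/\delta_0^2)$, which decreases in $\delta_0$, would be absurd.) One further point you wave past: in the branch $\delta>r$, $\overline\delta\le r$, $\theta\le\pi/3$, the estimate is $\delta'\le\delta\exp(-\overline\delta^2/(8r^2))$ and you must bound $\overline\delta$ \emph{below} by $\delta_0/2$; this follows from Theorem~\ref{thm4}'s inequality $\delta' \le \overline\delta$ together with the standing assumption $\delta_{j+1} > \delta_0/2$ for iterations prior to halving, which gives $\overline\delta_j \ge \delta_{j+1} > \delta_0/2$.
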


\begin{proof}  From the description of Triangle Algorithm I, and the bounds in  (\ref{gap}) in Theorem \ref{thm4}, for each $j=1, \dots, k-1$, either
\begin{equation}
\delta_j \leq \delta_{j-1} \exp \bigg ({-\frac{\delta^2_{j-1}}{8 \rho_*^2}} \bigg ),
\end{equation}
or
\begin{equation}
\delta_j \leq \delta_{j-1} \exp \bigg ({-\frac{\overline \delta^2_{j-1}}{8 \rho_*^2}} \bigg ),
\end{equation}
or
\begin{equation}
\delta_j \leq \frac{\sqrt{3}}{2} \delta_{j-1}.
\end{equation}
We have used the first three cases of (\ref{gap}), and the  fact that in the forth case $\sqrt{2} < \sqrt{3}$. Since $d(p_j,p_j')$ is monotonically decreasing in $j$ and by assumption for each $j \leq k-1$ we have,
\begin{equation}
\frac{1}{2}\delta_0 \leq \delta_{j-2} \leq \overline \delta_{j-1}.
\end{equation}
It follows that for each $j \leq k-1$ we have, either
\begin{equation} \label{bound}
\delta_j \leq \delta_{j-1} \exp \bigg ({-\frac{\delta^2_{0}}{32 \rho_*^2}} \bigg ),
\end{equation}
or
\begin{equation} \label{x}
\delta_j \leq \frac{\sqrt{3}}{2} \delta_{j-1}.
\end{equation}

Thus from (\ref{bound}) and (\ref{x}) we may write
\begin{equation}
\delta_k \leq \delta_0 \bigg (\frac{\sqrt{3}}{2} \bigg )^{k_1}\exp \bigg ({-\frac{k_2\delta_0^2}{32 \rho_*^2}} \bigg ),
\end{equation}
where $k_1$ and $k_2$ are nonnegative integers satisfying
\begin{equation}
k_1+k_2=k.
\end{equation}
To have $\delta_k \leq \delta_0/2$, it suffices to satisfy

\begin{equation}
\bigg (\frac{\sqrt{3}}{2} \bigg )^{k_1}\exp \bigg ({-\frac{k_2\delta_0^2}{32 \rho_*^2}} \bigg ) \leq \frac{1}{2}.
\end{equation}
Clearly the worst-case is when $k_1=0$ and $k_2=k$. Then solving for $k$ in the above inequality implies

$$k \leq \lceil 32 \ln (2) \frac{\rho_*^2}{\delta_0^2}  \rceil.$$
\end{proof}

\begin{thm}  \label{thm5}  Triangle Algorithm I satisfies the following properties:

(i) Suppose $\delta_*=d(K,K')=0$. Given $\epsilon >0$, the number of iterations $k_\epsilon$ to compute  $p \in K$, $p' \in K'$  so that  $d(p,p') \leq \epsilon \rho_* $ and the total arithmetic complexity of the algorithm  satisfy, respectively
\begin{equation} \label{iter1}
k_\epsilon =  O \bigg (\frac{1}{\epsilon^2} \bigg ), \quad
O \bigg (\frac{T}{\epsilon^2} \bigg ).
\end{equation}
(ii) Suppose $\delta_*=d(K,K') >0$. Let $\rho_*$ be the maximum of the diameters of $K$ and $K'$. The number of iterations $k_{\delta_*}$ to compute a witness pair and the total arithmetic complexity satisfy, respectively,
\begin{equation} \label{iter2}
k_{\delta_*}=
O \bigg  ( \frac{\rho_*^2}{\delta_*^2}\bigg ), \quad
O \bigg  ( \frac{T\rho_*^2}{\delta_*^2}\bigg ).
\end{equation}
\end{thm}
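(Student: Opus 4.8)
The plan is to distill everything into one potential estimate on the sequence of gaps $\delta_k=d(p_k,p'_k)$ produced by the algorithm and then read off both parts. While the algorithm has not halted, Step~2 or Step~3 supplies a pivot, and I would apply Theorem~\ref{thm4} to the triple consisting of the current $p_k$, the current $p'_k$, and that pivot, with the radius $r$ there bounded by $\rho_*$ (exactly as in Lemma~\ref{lem1}: the point being updated slides along a chord of $K$ or of $K'$). Going through the four cases of~(\ref{gap}): when $\delta_{k-1}\le r$ one gets $\delta_k^2\le\delta_{k-1}^2\bigl(1-\delta_{k-1}^2/(4\rho_*^2)\bigr)$; in each of the remaining three cases one gets the cruder multiplicative bound $\delta_k^2\le\tfrac34\delta_{k-1}^2$ (in case~(iv) directly, since $\tfrac{\sqrt2}{2}<\tfrac{\sqrt3}{2}$; in cases~(ii)--(iii) by using that $u\mapsto u\sqrt{1-u^2/(4r^2)}$ is increasing on $[0,r]$ with value $\tfrac{\sqrt3}{2}r$ at $u=r$, together with $\overline\delta_{k-1}\le r<\delta_{k-1}$).

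Now normalize by $x_k=\delta_k^2/\rho_*^2$. In part~(i), $K\cap K'\ne\emptyset$, so picking $z\in K\cap K'$ gives $\delta_0\le d(p_0,z)+d(z,p'_0)\le\Delta_0+\Delta'_0\le 2\rho_*$, hence $x_k\le x_0\le 4$ for all $k$. The first case above yields $x_k\le x_{k-1}-x_{k-1}^2/4$, so $1/x_k\ge 1/x_{k-1}+\tfrac14$; the second yields $1/x_k\ge\tfrac43\cdot 1/x_{k-1}\ge 1/x_{k-1}+\tfrac1{12}$ because $x_{k-1}\le 4$. In every iteration therefore $1/x_k\ge 1/x_{k-1}+\tfrac1{12}$, and telescoping gives $1/x_k\ge k/12$, i.e.
\begin{equation}
\delta_k^2\ \le\ \frac{12\,\rho_*^2}{k}.
\end{equation}
Part~(i) is then immediate: unless the algorithm has already stopped in Step~1 (in which case $d(p,p')\le\epsilon\,d(p,v)\le\epsilon\rho_*$ or $d(p,p')\le\epsilon\,d(p',v')\le\epsilon\rho_*$ anyway, since $v\in K$, $v'\in K'$), we have $\delta_k\le\epsilon\rho_*$ as soon as $k\ge 12/\epsilon^2$, so $k_\epsilon=O(1/\epsilon^2)$; as each iteration does one linear optimization over $K$ or $K'$ plus $O(m)$ work for the $nearest$ update, the arithmetic cost is $O(T/\epsilon^2)$.

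For part~(ii) the extra ingredient is that $\delta_k=d(p_k,p'_k)\ge d(K,K')=\delta_*>0$ for every $k$. Hence if the algorithm ran for $k$ iterations without halting, the displayed bound would force $\delta_*^2\le 12\rho_*^2/k$, i.e. $k\le 12\rho_*^2/\delta_*^2$; so after at most $\lceil 12\rho_*^2/\delta_*^2\rceil+1$ iterations Steps~2 and~3 both fail, the algorithm reaches Step~4, and by the distance duality (Theorem~\ref{thm2}) the output pair $(p_k,p'_k)$ is a witness pair. This gives $k_{\delta_*}=O(\rho_*^2/\delta_*^2)$ and total cost $O(T\rho_*^2/\delta_*^2)$. (Alternatively, part~(ii) also follows by iterating Lemma~\ref{lem1}: the sum of halving costs $\lceil N(\delta_0/2^i)\rceil$, $i=0,\dots,\lceil\log_2(\delta_0/\delta_*)\rceil$, is a geometric series dominated by its last term, again $O(\rho_*^2/\delta_*^2)$.) The one delicate point is the uniform estimate $r\le\rho_*$ used when invoking Theorem~\ref{thm4}: one must check that in every iteration the point being moved travels along a chord of $K$ (respectively $K'$), so the relevant radius never exceeds a diameter of $K$ or $K'$ --- exactly the convention already adopted in Lemma~\ref{lem1}. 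Granting that, the remainder is the elementary telescoping above.
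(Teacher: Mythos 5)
Your proposal is correct in its main line of reasoning and takes a genuinely different route from the paper. The paper proves Lemma~\ref{lem1} (the number of iterations to \emph{halve} the gap is $O(\rho_*^2/\delta_0^2)$) and then sums a geometric series over successive halvings; you instead extract the single clean inequality $\delta_k^2 \leq 12\rho_*^2/k$ by telescoping $1/x_k$ --- the familiar Frank--Wolfe--style $O(1/k)$-rate argument --- and both parts drop out at once. Your treatment of Theorem~\ref{thm4}'s four cases via the dichotomy ``case~(i) vs.\ a uniform $\delta'^2 \leq \tfrac34\delta^2$ for cases~(ii)--(iv)'' (using the monotonicity of $u\mapsto u\sqrt{1-u^2/(4r^2)}$ on $[0,r]$) is also a tidy simplification of the paper's case analysis. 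This is a more compact and, I'd argue, more transparent argument than the paper's. Two points should be fixed before it is airtight.

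First, your justification of $r \leq \rho_*$ is not quite the right reason. In Theorem~\ref{thm4}'s notation, $r$ is the distance from the \emph{fixed} point to the pivot --- i.e.\ $r = d(p',v)$ with $p'\in K'$, $v\in K$ in Step~2's update, or $r = d(p,v')$ in Step~3's --- so $r$ is a cross-distance between the two sets, not the length of the chord along which the iterate slides. The estimate $r \leq \rho_*$ nevertheless holds, but because of the pivot inequality itself: $d(p,v) \geq d(p',v) = r$, and $d(p,v)$ \emph{is} a chord of $K$, hence $r \leq \Delta_0 \leq \rho_*$ (analogously for Step~3). Your ``delicate point'' paragraph should invoke this rather than the chord-length argument; as stated it would not convince a careful reader.

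Second, in part~(ii) you invoke the displayed inequality $\delta_k^2 \leq 12\rho_*^2/k$, but that was proved under $x_0 \leq 4$, which you derived from $K\cap K'\neq\emptyset$. When $\delta_*>0$ one only has $\delta_0 \leq 2\rho_*+\delta_*$ (via the path $p_0\to p_*\to p'_*\to p'_0$), so $x_0 = \delta_0^2/\rho_*^2$ may exceed $4$, and then the $\tfrac1{12}$ increment in case~B is not guaranteed. The fix is routine but should be stated: if $\delta_* \leq \rho_*$ then $x_0 \leq 9$ and the increment is $\geq \tfrac1{27}$, still giving $O(\rho_*^2/\delta_*^2)$; if $\delta_* > \rho_*$ then case~A never occurs (it would force $\delta_{k-1}\leq r\leq\rho_*<\delta_*$, impossible since $\delta_{k-1}\geq\delta_*$), and a constant number of case~B steps already contradicts $\delta_k\geq\delta_*$, so $O(1)$ iterations suffice. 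With these two adjustments the proof is complete and stands as a cleaner alternative to the paper's halving argument.
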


\begin{proof} From Lemma \ref{lem1} and definition of $k(\delta_0)$ (see (\ref{iter})), in order to halve the initial gap from $\delta_0$ to $\delta_0/2$, in the worst-case  Triangle Algorithm I requires $k(\delta_0)$ iterations.  Then, in order to reduce the gap from $\delta_0/2$ to $\delta_0/4$ it requires at most $k(\delta_0/2)$ iterations, and so on. From (\ref{iter}), for each nonnegative integer $r$  the worst-case number of iterations to reduce a gap from
$\delta_0/2^r$ to  $\delta_0/2^{r+1}$ is given by
\begin{equation}
k \bigg (\frac{\delta_0}{2^r} \bigg ) \leq   \bigg \lceil N \bigg (\frac{\delta_0}{2^r} \bigg ) \bigg  \rceil = \lceil 2^{2r} N_0 \rceil \leq 2^{2r}  \lceil N_0  \rceil.
\end{equation}
Therefore, if $t$ is the smallest index such that $\delta_0/2^{t} \leq \epsilon \rho_*$,  i.e.
\begin{equation}
2^{t-1} < \frac{\delta_0}{\rho_* \epsilon} \leq  2^t,
\end{equation}
then the total number of iterations of the algorithm, $k_\epsilon$,  to test if condition (i) is valid satisfies:
\begin{equation}
k_\epsilon \leq  \lceil N_0  \rceil (1+2^2+2^4 + \dots + 2^{2(t-1)}) \leq   \lceil N_0  \rceil  \frac{2^{2t} -1}{3} \leq   \lceil N_0  \rceil  2 \times 2^{2(t-1)} \leq (N_0+1) \frac{2\delta^2_0}{ \rho_*^2 \epsilon^2}.
\end{equation}
From (\ref{iter}) we get

\begin{equation}
k_\epsilon \leq \bigg (23 \frac{ \rho_*^2}{\delta^2_0} +1 \bigg) \frac{2\delta^2_0}{ \rho_*^2 \epsilon^2} = \bigg (23 +  \frac{\delta^2_0}{ \rho_*^2} \bigg ) \frac{2}{ \epsilon^2}.
\end{equation}
Since $K \cap K' \not = \emptyset$ and from the definition of $\rho_*$,  $\delta_0 \leq \rho_*$, hence we get the claimed bound on $k_\epsilon$  in (\ref{iter1}).

Suppose $\delta_* >0$. It suffices to choose
$$\epsilon = \frac{\delta_*}{2\rho_*}.$$
Then in
$$k
_\epsilon \leq \frac{48}{\epsilon^2}= \frac{192\rho^2_*}{\delta^2_*}$$
iterations we can determine that $\delta_* >0$.
\end{proof}

\section{Algorithm for Approximation of Distance and Optimal Support}

Our goal in this section and next is to start with a witness pair $(p,p')$, $p \in K$, $p' \in K'$, then continue to iterate to get new witnesses that would estimate $\delta_*=d(K,K')$ to within a prescribed error, or a pair of supporting hyperplanes that would estimate the optimal pair to within a prescribed tolerance. The following is easy to prove.

\begin{prop}  \label{propnew} Given a  pair $(p,p') \in K \times K'$, the orthogonal bisector hyperplane of the line segment $pp'$ is
\begin{equation}
H=\{x \in \mathbb{R} ^m:  \quad h^Tx = a \},
\quad h=p-p', \quad a = \frac{1}{2} (p^Tp-p'^Tp').
\end{equation}
If $(p,p')$ is a witness pair then
\begin{equation}
K \subset H_{+}=\{x \in \mathbb{R} ^m:  h^Tx >  a  \}, \quad K' \subset   H_{-}=\{x \in \mathbb{R}^m:  h^Tx < a \}. ~~~\Box
\end{equation}
\end{prop}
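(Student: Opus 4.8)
The statement to prove is Proposition~\ref{propnew}, which has two parts: first, an explicit formula for the orthogonal bisector hyperplane of the segment $pp'$; second, the claim that when $(p,p')$ is a witness pair, $K$ and $K'$ lie in the two complementary open halfspaces determined by that hyperplane, with $K$ on the $p$-side.

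\textbf{Approach.} The plan is to verify the formula directly from the definition of the perpendicular bisector, and then to translate the witness condition (Definition~\ref{def2}/Definition~\ref{def9}) into the halfspace membership using Theorem~\ref{thm2} or Theorem~\ref{thm3}. First I would recall that the orthogonal bisecting hyperplane of the segment $pp'$ is the set of points equidistant from $p$ and $p'$: $H = \{x : d(x,p) = d(x,p')\}$. Expanding $d(x,p)^2 = d(x,p')^2$ gives $\|x\|^2 - 2p^Tx + \|p\|^2 = \|x\|^2 - 2p'^Tx + \|p'\|^2$, which simplifies to $2(p-p')^Tx = \|p\|^2 - \|p'\|^2$, i.e. $h^Tx = a$ with $h = p-p'$ and $a = \tfrac12(p^Tp - p'^Tp')$. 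This is exactly the stated formula; note that $h \neq 0$ since a witness pair has $p \neq p'$ (because the bisector of a degenerate segment is not defined, or because $\delta_* > 0$), so $H$ is a genuine hyperplane.

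\textbf{Halfspace containment.} For the second part, observe that for any point $x$, the sign of $h^Tx - a = (p-p')^Tx - \tfrac12(\|p\|^2 - \|p'\|^2)$ is exactly the sign of $d(x,p')^2 - d(x,p)^2$ (by the same expansion run backwards, up to the factor $2$). Hence $H_+ = \{x : h^Tx > a\} = \{x : d(x,p) < d(x,p')\} = V(p)$, the Voronoi cell of $p$, and likewise $H_- = V(p')$. Now since $(p,p')$ is a witness pair, by Definition~\ref{def2} the bisector $H$ separates $K$ from $K'$; equivalently, by the reasoning in the proof of Theorem~\ref{thm2}, there is no $v \in \mathrm{ex}(K)$ with $d(p,v) \geq d(p',v)$, so $\mathrm{ex}(K) \subset V(p)$, and by convexity of $V(p)$ together with Krein--Milman (Theorem~\ref{thm1}) we get $K = conv(\mathrm{ex}(K)) \subset V(p) = H_+$. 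Symmetrically $K' \subset V(p') = H_-$. This gives the claimed inclusions, and since $V(p)$ and $V(p')$ are disjoint the two halfspaces are indeed disjoint.

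\textbf{Main obstacle.} There is no serious obstacle here — this is a routine bookkeeping proposition. The only point requiring minor care is making sure the orientation is right (that $K$, not $K'$, sits in $H_+$) and that one correctly identifies which open halfspace corresponds to the strict inequality $h^Tx > a$; this is pinned down by the sign computation above. One should also note explicitly that the witness condition as stated in Definition~\ref{def2} is about the bisector separating the \emph{sets} $K, K'$ (not merely the extreme points), so the inclusion $\mathrm{ex}(K) \subset V(p)$ is immediate and no appeal to Krein--Milman is strictly necessary for the containment itself — though invoking it, as in Theorem~\ref{thm2}, makes the argument self-contained starting only from the pivot-free characterization.
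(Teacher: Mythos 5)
Your proof is correct, and since the paper states this proposition with only a $\Box$ (treating it as ``easy to prove''), your derivation—expanding $d(x,p)^2=d(x,p')^2$ to get the hyperplane formula and matching the sign of $h^Tx-a$ to the Voronoi cells $V(p),V(p')$—is precisely the routine argument the authors have in mind. You also correctly observe that once Definition~\ref{def2} is read literally (the bisector separates the sets, not merely the extreme points), the containments $K\subset V(p)=H_+$ and $K'\subset V(p')=H_-$ follow immediately and the Krein--Milman detour, while not wrong, is unnecessary.
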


The following theorem shows that once we have a witness pair $(p,p')$, by solving two convex programming problems not only do we obtain a lower bound to $\delta_*=d(K,K')$, but also a pair of supporting hyperplanes that are parallel to the orthogonal bisecting hyperplane of the line segment $pp'$. Given that $d(p,p')$ is an upper bound on $\delta_*$, the difference between $d(p,p')$ and the lower bound gives a measure of how well the current witness pair estimates $\delta_*$.

\begin{thm} \label{thm6}
Suppose $(p,p') \in K \times K'$ is a witness pair {\rm (see Figure \ref{Fig6AA})}.  Let $H$ be the orthogonal bisecting hyperplane to the line $pp'$, thus  $H=\{x: h^Tx= a \}$, where $h=p-p'$, $a = \frac{1}{2} (\Vert p \Vert^2- \Vert p'\Vert^2 )$. Let
\begin{equation} \label{vv'}
v={\rm argmin}\{h^Tx: x \in K\},  \quad v'={\rm argmax}\{h^Tx: x \in K'\}.
\end{equation}
Let
\begin{equation} \label{hh}
H_v= \{x: h^Tx= h^Tv\}, \quad H_{v'}= \{x: h^Tx= h^Tv'\}.
\end{equation}
Then the hyperplanes $H_v$ and $H_{v'}$ give supporting hyperplanes to $K$ and $K'$, respectively, and the distance between them, $d(H_v, H_{v'})$ is a lower bound to $\delta_*$.
Specifically, let
\begin{equation} \label{deltavv}
\delta_v=d(v, H), \quad  \delta_{v'}= d(v', H).
\end{equation}
Then if
\begin{equation} \label{deltalow}
\underline \delta= \delta_v + \delta_{v'},
\end{equation}
we have
\begin{equation} \label{maineq}
d(H_v, H_{v'})=\underline \delta= \frac{h^Tv-h^Tv'}{\Vert h \Vert},
\end{equation}
and
\begin{equation} \label{maineq2}
\underline \delta \leq \delta_* \leq \delta=d(p,p').
\end{equation}
\end{thm}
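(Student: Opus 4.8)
The plan is to verify the claimed formula for the distance between the two parallel hyperplanes, confirm that they genuinely support $K$ and $K'$, and then sandwich $\delta_*$ between $\underline\delta$ and $\delta=d(p,p')$.

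First I would handle the supporting-hyperplane claim. Since $v={\rm argmin}\{h^Tx:x\in K\}$, for every $x\in K$ we have $h^Tx\ge h^Tv$, so $K\subset\{x:h^Tx\ge h^Tv\}$ and $H_v$ touches $K$ at $v$; dually $K'\subset\{x:h^Tx\le h^Tv'\}$ with $H_{v'}$ touching $K'$ at $v'$. Because $(p,p')$ is a witness pair, Proposition \ref{propnew} gives $h^Tp>a>h^Tp'$ and in fact $K\subset H_+$, $K'\subset H_-$, so $h^Tv\ge$ (value at $p$-side) $>a>$ (value at $p'$-side) $\ge h^Tv'$; in particular $h^Tv>h^Tv'$, the two supporting halfspaces are disjoint, and $(H_v,H_{v'})$ is a valid supporting pair in the sense of Definition \ref{def4}.

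Next, the distance formula. For any two parallel hyperplanes $\{h^Tx=c_1\}$ and $\{h^Tx=c_2\}$ the Euclidean distance is $|c_1-c_2|/\Vert h\Vert$; applying this with $c_1=h^Tv$, $c_2=h^Tv'$ and using $h^Tv>h^Tv'$ yields $d(H_v,H_{v'})=(h^Tv-h^Tv')/\Vert h\Vert$, which is (\ref{maineq}). To identify this with $\underline\delta=\delta_v+\delta_{v'}$, note $\delta_v=d(v,H)=|h^Tv-a|/\Vert h\Vert=(h^Tv-a)/\Vert h\Vert$ (since $v$ is on the $K$-side where $h^Tx>a$) and likewise $\delta_{v'}=(a-h^Tv')/\Vert h\Vert$; adding gives $(h^Tv-h^Tv')/\Vert h\Vert$, as claimed.

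Finally, the inequalities (\ref{maineq2}). The upper bound $\delta_*\le d(p,p')=\delta$ is immediate from the definition (\ref{eqa4}) of $\delta_*$ as a minimum over $K\times K'$. For the lower bound, take any $q\in K$ and $q'\in K'$; then $h^Tq\ge h^Tv$ and $h^Tq'\le h^Tv'$, so $h^T(q-q')\ge h^Tv-h^Tv'>0$. By Cauchy–Schwarz, $\Vert h\Vert\,d(q,q')=\Vert h\Vert\,\Vert q-q'\Vert\ge h^T(q-q')\ge h^Tv-h^Tv'=\Vert h\Vert\,\underline\delta$, hence $d(q,q')\ge\underline\delta$; taking the infimum over $(q,q')\in K\times K'$ gives $\delta_*\ge\underline\delta$. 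I do not expect a serious obstacle here: the only mild subtlety is keeping the signs straight (that $v$ lies on the $h^Tx>a$ side and $v'$ on the $h^Tx<a$ side), which is exactly what the witness-pair hypothesis together with Proposition \ref{propnew} supplies; everything else is Cauchy–Schwarz and the elementary parallel-hyperplane distance formula.
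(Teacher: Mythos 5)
Your proof is correct and follows essentially the same route as the paper: identify $v,v'$ as the closest points of $K,K'$ to $H$, use $h^Tv>a>h^Tv'$ from the witness hypothesis to compute $\delta_v,\delta_{v'}$, and add them to get the parallel-hyperplane distance. The only difference is that you spell out the lower bound $\underline\delta\le\delta_*$ via Cauchy--Schwarz, whereas the paper dismisses it as obvious from the fact that $H_v,H_{v'}$ are supporting hyperplanes; your added detail is a welcome clarification but not a new idea.
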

\begin{proof} It should be clear that the optimization that defines $v$ in (\ref{vv'}) implies that $v$ is a point of $K$ closest to $H$.   Similarly, the optimization that defines $v'$ in (\ref{vv'}) implies that $v'$ is a point in $K'$ closest to $H$. Figure \ref{Fig6AA} gives a 2D depiction of these. In the figure the lines $H$, $H_v$, $H_{v'}$ actually depict  orthogonal hyperplanes to the line $pp'$.  However, it should be noted that in higher dimensions the points $p,p',v,v'$ are not necessarily coplanar.

Since $v \in K$ and $v' \in K'$,  from the definition of $H$ and its properties given in Proposition \ref{propnew}, it follows that $h^Tv > a $ and $h^Tv' < a$. The distance between $v$ and $H$ and the distance between $v'$ and $H$ can be calculated to be
\begin{equation}
\delta_v= \frac{h^Tv - a}{\Vert h \Vert}, \quad  \delta_{v'}=\frac{a - h^Tv'}{\Vert h \Vert}.
\end{equation}
Thus from (\ref{deltavv}),  (\ref{deltalow}) and  (\ref{maineq}) we get,
\begin{equation}
d(H_v, H_{v'}) = \underline \delta = \frac{h^Tv - h^Tv'}{\Vert h \Vert}.
\end{equation}
The fact that $\underline \delta$ is a lower bound to $\delta_*$ is obvious because $H_v$ and $H_{v'}$ give supporting hyperplanes to $K$ and $K'$, respectively.
\end{proof}

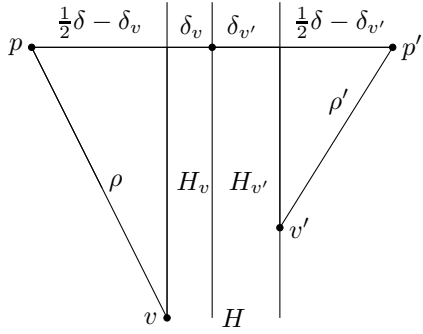
\begin{figure}[htpb]
	\centering
	\begin{tikzpicture}[scale=0.6]

      \draw (-4,0) -- (4,0) node[pos=0.55, above] {};
      \draw (0,0) -- (4,0) node[pos=0.71, above] {$\frac{1}{2}\delta-\delta_{v'}$};
      \draw (-4,0) -- (-1,0) node[pos=0.5, above] {$\frac{1}{2}\delta-\delta_v$};
      \draw (-1,0) -- (0,0) node[pos=0.55, above] {$\delta_v$};
      \draw (1.5,0) -- (0,0) node[pos=0.55, above] {$\delta_{v'}$};
      \draw (-4,0) -- (-2.44,-3.12);
      \draw (0,-6) -- (0,1) node[pos=0.5, right] {};
      \draw (1.5,1) -- (1.5,-4) node[pos=0.5, right] {};
      \draw (-1,-6) -- (-1,1) node[pos=0.3, left] {};
      \draw (4,0) -- (1.5,-4) node[pos=0.3, left] {$\rho'$};
      \draw (-1,0) -- (-1,-6) node[pos=0.5, right] {$H_v$};
      \draw (1.5,0) -- (1.5,-6) node[pos=0.5, left] {$H_{v'}$};
       \filldraw (-4,0) circle (2pt);
       \filldraw (4,0) circle (2pt);
       \filldraw (-1,-6) circle (2pt);
		\draw (-4,0) node[left] {$p$};
		\draw (4,0) node[right] {$p'$};
        \draw (-1,-6) node[left] {$v$};
        \draw (0,-6) node[right] {$H$};
        \draw (1.5,-4) node[right] {$v'$};
         \filldraw (1.5,-4) circle (2pt);
         \draw (-4,0) -- (-1,-6) node[pos=0.5, right] {$\rho$};

           \filldraw (0,0) circle (2pt);
           \draw (0,0) node[below] {};
		
	\end{tikzpicture}
\begin{center}
\caption{Depiction of the orthogonal bisector hyperplane $H$ to $pp'$, and parallel supporting hyperplanes $H_v$ and $H_{v'}$ that separate $K$ and $K'$.} \label{Fig6AA}
\end{center}
\end{figure}

\begin{cor} \label{cor1} Suppose $(p,p')$ is a witness pair. If
\begin{equation}
p={\rm argmin}\{h^Tx: x \in K\}, \quad {\rm or} \quad  p'={\rm argmax}\{h^Tx: x \in K'\},
\end{equation}
then
\begin{equation} \label{halfbound}
\frac{1}{2}d(p,p') \leq \delta_* \leq d(p,p').
\end{equation}
In particular, if one of the two sets $K$ or $K'$ is a singleton element, then any witness pair satisfies (\ref{halfbound}).
If both
\begin{equation}
p={\rm argmin}\{h^Tx: x \in K\}, \quad p'={\rm argmax}\{h^Tx: x \in K'\},
\end{equation}
then
\begin{equation}
d(p,p')=\delta_*.
\end{equation}
\end{cor}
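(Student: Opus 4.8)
The plan is to derive Corollary \ref{cor1} as an almost immediate consequence of Theorem \ref{thm6}, so the main work is just unwinding the notation. Recall that in Theorem \ref{thm6} we set $v = {\rm argmin}\{h^Tx : x \in K\}$ and $v' = {\rm argmax}\{h^Tx : x \in K'\}$ where $h = p - p'$, and we obtained $\underline\delta = \delta_v + \delta_{v'} \leq \delta_* \leq d(p,p')$ with $\delta_v = d(v,H)$, $\delta_{v'} = d(v',H)$. The key observation is that $p$ itself lies on $H_p := \{x : h^Tx = h^Tp\}$ which is parallel to $H$ at distance $\tfrac12 d(p,p')$ (since $H$ is the orthogonal bisector of $pp'$, $p$ is at distance $\tfrac12\Vert h\Vert = \tfrac12 d(p,p')$ from $H$), and similarly $p'$ is at distance $\tfrac12 d(p,p')$ from $H$ on the other side.

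First I would handle the case $p = {\rm argmin}\{h^Tx : x \in K\}$. Then by the definition of $v$ in Theorem \ref{thm6} we may take $v = p$, so $\delta_v = d(p, H) = \tfrac12 d(p,p')$. Since $\delta_{v'} = d(v', H) \geq 0$ (indeed $v' \in K' \subset H_-$ so $\delta_{v'} > 0$, but $\geq 0$ suffices), we get $\underline\delta = \delta_v + \delta_{v'} \geq \tfrac12 d(p,p')$. Combining with $\underline\delta \leq \delta_*$ from (\ref{maineq2}) gives $\tfrac12 d(p,p') \leq \delta_*$, and the upper bound $\delta_* \leq d(p,p')$ is just the trivial bound since $(p,p') \in K \times K'$. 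The symmetric argument handles $p' = {\rm argmax}\{h^Tx : x \in K'\}$, taking $v' = p'$ so that $\delta_{v'} = \tfrac12 d(p,p')$. This proves (\ref{halfbound}).

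Next, for the singleton case: if say $K' = \{p'\}$ is a single point, then for any witness pair $(p,p')$ the maximization ${\rm argmax}\{h^Tx : x \in K'\}$ has the unique feasible point $p'$, so automatically $p' = {\rm argmax}\{h^Tx : x \in K'\}$ and (\ref{halfbound}) applies; symmetrically if $K$ is a singleton. Finally, if both $p = {\rm argmin}\{h^Tx : x \in K\}$ and $p' = {\rm argmax}\{h^Tx : x \in K'\}$, then we may take $v = p$ and $v' = p'$ in Theorem \ref{thm6}, whence $\underline\delta = d(p,H) + d(p',H) = \tfrac12 d(p,p') + \tfrac12 d(p,p') = d(p,p')$. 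Since $\underline\delta \leq \delta_* \leq d(p,p')$, both inequalities are forced to be equalities and $d(p,p') = \delta_*$.

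I do not anticipate any real obstacle here; the only point requiring a word of care is the claim that $p$ is at distance exactly $\tfrac12 d(p,p')$ from its orthogonal bisector $H$ — which follows directly from Proposition \ref{propnew} by plugging $x = p$ into the distance formula $d(p, H) = (h^Tp - a)/\Vert h\Vert$ with $h = p-p'$, $a = \tfrac12(p^Tp - p'^Tp')$, giving $h^Tp - a = \tfrac12(p^Tp - 2p^Tp' + p'^Tp') = \tfrac12\Vert h\Vert^2$, so $d(p,H) = \tfrac12\Vert h\Vert = \tfrac12 d(p,p')$ — and the analogous computation for $p'$.
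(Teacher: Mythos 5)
Your proof is correct and follows essentially the same route as the paper, which simply observes that both parts follow from $\underline\delta \leq \delta_*$; you have just spelled out the computation (that taking $v=p$ or $v'=p'$ in Theorem \ref{thm6} makes $\delta_v$ or $\delta_{v'}$ equal $\tfrac12 d(p,p')$) that the paper leaves implicit.
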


\begin{proof} The proof of  both  parts follow from the fact that $\underline  \delta \leq \delta_*$.
\end{proof}

\begin{remark} Note that $p={\rm argmin}\{h^Tx: x \in K\}$ if and only if there exists no weak-pivot at $p$.  Similarly $p'={\rm argmax}\{h^Tx: x \in K'\}$ if and only if there is no weak-pivot at $p'$.
\end{remark}

\begin{definition} \label{def11} Let  $(p,p')$ be a witness pair and consider $v, v'$  as defined in (\ref{vv'}). Also let $H_v$, $H_{v'}$ and  $H$ be the hyperplanes defined earlier. Let $\delta=d(p,p')$,
$\rho=d(p,v)$, and $\rho'=d(p',v')$. Let
$\underline \delta=d(H_v, H_{v'})$,  and $E=\delta - \underline \delta$
(see Figure \ref{Fig6AA}). We shall say $(p,p')$ gives a {\it strong}  $\epsilon$-{\it approximate solution} to $\delta_*$ if either
\begin{equation}
E \leq \epsilon \rho, \quad {\rm or} \quad E \leq \epsilon \rho'.
\end{equation}
\end{definition}

\begin{prop} If $(p,p')$ gives a  strong  $\epsilon$-approximate solution to $\delta_*$, then
\begin{equation}
\delta- \delta_* \leq \epsilon \rho \quad {\rm or} \quad \delta- \delta_* \leq \epsilon \rho',
\end{equation}
i.e. $(p,p')$ is an $\epsilon$-approximate solution to $\delta_*$. Furthermore, $(H_v, H_{v'})$ is a pair of parallel supporting hyperplanes orthogonal to $pp'$ such that
\begin{equation}
\delta_* - d(H_v, H_{v'}) \leq \epsilon d(p,p'),
\end{equation}
i.e. $(p,p')$ is an $\epsilon$-approximate solution to the supporting hyperplanes problem.
\end{prop}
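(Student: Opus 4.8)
The plan is to deduce the whole proposition from the two ingredients already in hand: the sandwich $\underline\delta \le \delta_* \le \delta = d(p,p')$ supplied by Theorem~\ref{thm6}, and the defining inequality of Definition~\ref{def11}. Since a strong $\epsilon$-approximate solution means $E = \delta - \underline\delta \le \epsilon\rho$ or $E \le \epsilon\rho'$, I would first fix the alternative $E \le \epsilon\rho$ and note that the case $E \le \epsilon\rho'$ is obtained verbatim by swapping the roles of $K$ and $K'$ (hence of $\rho$ and $\rho'$, of $v$ and $v'$, of $H_v$ and $H_{v'}$).

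For the first conclusion, I would use only that $\delta_* \ge \underline\delta$: then
\[
d(p,p') - \delta_* \;=\; \delta - \delta_* \;\le\; \delta - \underline\delta \;=\; E \;\le\; \epsilon\rho ,
\]
and symmetrically $\delta - \delta_* \le \epsilon\rho'$ in the other case, which is exactly the statement that $(p,p')$ is an $\epsilon$-approximate solution to $\delta_*$ (Definition~\ref{def3}). For the second conclusion, Theorem~\ref{thm6} already certifies that $H_v$ and $H_{v'}$ support $K$ and $K'$; I would add the elementary observations that they are parallel (common normal $h = p-p'$) and orthogonal to the segment $pp'$ (since $h$ is its direction), and then combine $d(H_v,H_{v'}) = \underline\delta$ with $\delta_* \le \delta$ to get
\[
\delta_* - d(H_v,H_{v'}) \;=\; \delta_* - \underline\delta \;\le\; \delta - \underline\delta \;=\; E \;\le\; \epsilon\rho .
\]
Putting the two displays together gives that $(p,p')$ is an $\epsilon$-approximate solution to the supporting hyperplanes problem in the sense of Definition~\ref{def5}.

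The only delicate point — and the step I expect to draw scrutiny — is the transition between the bounds written with $\epsilon\rho$ or $\epsilon\rho'$ on the right and the bounds written with $\epsilon\, d(p,p')$ in Definitions~\ref{def3} and~\ref{def5}: in general $\rho = d(p,v)$ and $\rho' = d(p',v')$ need not be at most $d(p,p')$, so I would either keep the sharper conclusions stated in terms of $\rho,\rho'$ (the form that the convergence analysis of Triangle Algorithm~II actually consumes), or invoke whatever normalization the ambient algorithm maintains that forces $\rho,\rho' \le d(p,p')$. No further geometry is required; once Theorem~\ref{thm6} is granted, the proposition is a short chain of one-line inequalities.
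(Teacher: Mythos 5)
Your argument is correct and reproduces the paper's own reasoning: the paper's entire proof of this proposition is the one-liner ``The proof is immediate from the inequalities $\underline\delta \leq \delta_* \leq \delta$,'' and what you have written is exactly the unpacking of that sentence, i.e.\ the two chains $\delta-\delta_* \leq \delta-\underline\delta = E \leq \epsilon\rho$ and $\delta_*-d(H_v,H_{v'}) = \delta_*-\underline\delta \leq \delta-\underline\delta = E \leq \epsilon\rho$, plus the routine observation that $H_v$ and $H_{v'}$ share the normal $h=p-p'$ and are therefore parallel and orthogonal to $pp'$.

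The ``delicate point'' you flag is real and is not something the paper resolves. Definition \ref{def11} normalizes $E$ by $\rho=d(p,v)$ or $\rho'=d(p',v')$, while Definitions \ref{def3} and \ref{def5} normalize by $d(p,p')$, and there is no a priori bound $\rho\leq d(p,p')$ or $\rho'\leq d(p,p')$: the point $v$ that minimizes $h^Tx$ over $K$ can lie arbitrarily far from $p$ even though it is close to the separating hyperplane, so $\rho$ may exceed $\delta$. The paper's proposition as displayed simply restates the $\rho$-normalized bounds and then appends ``i.e.'' to the $d(p,p')$-normalized definitions without bridging the two; its proof glosses over this exactly as you suspect. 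Your instinct is the right one --- either keep the sharper $\rho,\rho'$ form (which is what Step 2 of Triangle Algorithm II actually tests and what the complexity analysis in Theorem \ref{thm8} consumes, via the weaker bound $\rho,\rho'\leq\rho_*$), or record the additional hypothesis $\rho\leq d(p,p')$ (resp.\ $\rho'\leq d(p,p')$) if one wants the literal Definition \ref{def3}/\ref{def5} conclusion. Since both the algorithm's stopping rule and its analysis work with the $\rho$-form, nothing downstream in the paper is affected; only the ``i.e.'' in the proposition's statement is slightly too strong.
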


\begin{proof} The proof is immediate from the inequalities $\underline  \delta \leq \delta_* \leq \delta$.
\end{proof}

Theorem \ref{thm6} and its corollary suggest an algorithm for computing an initial approximation to $\delta_*=d(K,K')$ when it is positive: Run  Triangle Algorithm I to compute a witness pair $(p,p')$.  Next compute the error
\begin{equation}
E= \delta - \underline \delta.
\end{equation}
If $(p,p')$ is an $\epsilon$-approximate solution to $\delta_*$, stop. Otherwise,  we need to improve the gap, $d(p,p')$.  While a pivot is no longer available,  we will show that when either $E > \epsilon \rho$, or $E >  \epsilon \rho'$, we can still make use of $v$ or $v'$ defined in (\ref{vv'}) in order to compute the nearest point to $p$ on the line $p'v'$, or the nearest point to $p'$ on the line $pv$.  Then we run Triangle Algorithm I until a new  witness pair is obtained and the process is repeated. In summary, Triangle Algorithm II computes a strong $\epsilon$-approximate solution to $\delta_*$.  These will be formalized in the next section.


\section{Triangle Algorithm II: Properties and Complexity Analysis}
We first give a definition.

\begin{definition} \label{def12} Given a witness pair $(p,p')$, let $\delta = d(p,p')$ and $\delta_v$, $\delta_{v'}$ be as defined in (\ref{deltavv}).  Define
\begin{equation}  \label{Ev}
E_v= (\frac{1}{2} \delta - \delta_v), \quad E_{v'}= (\frac{1}{2}  \delta - \delta_{v'}).
\end{equation}
Clearly,
\begin{equation}
E= \delta - \underline \delta = E_{v} + E_{v'}.
\end{equation}
\end{definition}

Consider the following algorithm, called \emph{Triangle Algorithm II}. Its input is a witness pair $(p,p')$. It computes a new witness pair $(p,p')$ that gives an $\epsilon$-approximate solution to $\delta_*$, as well as a pair $(v,v') \in K \times K$, where the hyperplanes  parallel to the orthogonal bisecting hyperplane of $pp'$ passing through $v,v'$ form a pair of  supporting hyperplanes, giving  an $\epsilon$-approximate solution to the supporting hyperplanes problem.

\begin{center}
\begin{tikzpicture}
\node [mybox] (box){%
    \begin{minipage}{0.9\textwidth}
{\bf  Triangle Algorithm II ($(p,p') \in K \times K'$, a witness pair, $\epsilon \in (0,1)$)}\

{\bf Step 1.} Set $h=p-p'$,  $\delta=d(p,p')$,  $a = \frac{1}{2} (\Vert p \Vert^2- \Vert p'\Vert )^2$.  Compute
$$v={\rm argmin}\{h^Tx: x \in K\}, \quad v'={\rm argmax}\{h^Tx: x \in K'\}.$$
Set
$$\underline \delta= \frac{h^Tv-h^Tv'}{\Vert h \Vert}, \quad  E= \delta- \underline \delta, \quad
\delta_v= \frac{h^Tv - a}{\Vert h \Vert}, \quad  \delta_{v'}=\frac{a - h^Tv'}{\Vert h \Vert}.$$

{\bf Step 2.} If $E \leq  \epsilon \rho$,  or
$E\leq  \epsilon \rho'$, with $\rho=d(p,v)$, $\rho'=d(p',v')$, output $(p,p')$, $(H_v,H_{v'})$, stop.

{\bf Step 3.}  If $E_v= (\frac{1}{2} \delta - \delta_v) > \frac{1}{2} \epsilon \rho $, compute $p \leftarrow nearest(p'; pv)$, go to Step 5.

{\bf Step 4.}  If $E_{v'}= (\frac{1}{2}  \delta - \delta_{v'} > \frac{1}{2}\epsilon \rho'$,  compute  $p' \leftarrow nearest(p; p'v')$, go to Step 5.


{\bf Step 5.} Call Triangle Algorithm I with $(p, p')$ as input. Go to Step 1.


    \end{minipage}};
\end{tikzpicture}
\end{center}

\begin{remark} \label{rem2}  Note that if the algorithm does not terminate at Step 1,  then either Step 2 or Step 3 is executed.
\end{remark}

\begin{remark} It is possible that after a single iteration in Triangle Algorithm II the new pair $(p,p')$ is  no longer a witness pair. Thus the algorithm may need to call Triangle Algorithm I.  However, as will be proven, in each iteration of Triangle Algorithm II and each iteration within a call to Triangle Algorithm I, we are able to reduce the gap $d(p,p')$ at the current witness pair by a certain reasonable amount.
\end{remark}

\begin{remark} \label{rem3} Since $(p,p')$ is a witness pair, the weak $p'$-pivot $v$ and $p$ must lie on the same side of $H$. In other words, the hyperplane parallel to $H$ passing through $v$, $H_v$, separates $p$ from $p'$ (see Figure \ref{Fig7A}).
\end{remark}

\begin{lemma} \label{lem2} Suppose that $(p,p')$ is a witness pair. Let $v$, $\delta$, $\delta_v$, $E_v$ and $H_v$ be as defined previously (see (\ref{Ev})).   Let $\overline v$ be the intersection of $H_v$ and $pp'$. Let $\gamma=d(v, \overline v)$. Let $q= nearest(p';pv)$.  Let $\delta'=d(q,p')$, and $\rho=d(p,v)$.

(i) Suppose  $\angle pqp' = \pi/2$ {\rm (see Figure \ref{Fig8})}. Then
$$\delta' = \delta \sqrt{1- \frac{E^2_v}{\rho^2}}.$$

(ii) Suppose $\angle pqp' > \pi/2$ (i.e. $q=v$) {\rm (see Figure \ref{Fig8A})}. Then
$$\delta' \leq \delta \sqrt{1- \frac{E^2_v}{\delta^2}}.$$

\end{lemma}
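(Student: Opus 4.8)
The plan is to reduce everything to plane geometry in the $2$-plane through $p,p',v$ and to base the whole argument on a single algebraic identity. Since $H$ is the orthogonal bisector of $pp'$ we have $d(p,H)=\tfrac12\delta$ and $\|h\|=\|p-p'\|=\delta$, while $d(v,H)=\delta_v$ with $v$ on the same side of $H$ as $p$ by Remark~\ref{rem3}. Using the description $H=\{x:h^Tx=a\}$, $h=p-p'$, $a=\tfrac12(p^Tp-p'^Tp')$ from Proposition~\ref{propnew}, a direct computation gives $h^Tp-a=\tfrac12\delta^2$ and $h^Tv-a=\delta_v\delta$; subtracting,
\[
(p'-p)^T(v-p)=-\,h^T(v-p)=h^Tp-h^Tv=\delta\Bigl(\tfrac12\delta-\delta_v\Bigr)=\delta E_v\ \ge\ 0 .
\]
This identity, together with the right triangles it produces, is essentially all that is needed.

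Next I would pin down the roles of $\overline v$ and $\gamma$. Because the line $pp'$ is parallel to $h$ and $H_v$ is orthogonal to $h$, the point $\overline v=H_v\cap pp'$ is exactly the orthogonal projection of $v$ onto the line $pp'$; hence $\triangle p\overline v v$ has a right angle at $\overline v$, and $d(p,\overline v)$ equals the distance between the two hyperplanes through $p$ and through $v$ parallel to $H$, namely $\tfrac12\delta-\delta_v=E_v$. Pythagoras then gives $\rho^2=d(p,v)^2=E_v^2+\gamma^2$, and in particular $\rho\ge E_v$.

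For part~(i), the hypothesis $\angle pqp'=\pi/2$ says precisely that $q$ is the foot of the perpendicular from $p'$ to the line through $p$ and $v$ and that this foot lies on the segment $pv$. Then $d(p,q)$ is the length of the orthogonal projection of $p'-p$ onto the direction $v-p$, i.e. $d(p,q)=(p'-p)^T(v-p)/\|v-p\|=\delta E_v/\rho$ by the identity above. Applying Pythagoras in the right triangle $\triangle pqp'$ gives $\delta'^2=\delta^2-d(p,q)^2=\delta^2\bigl(1-E_v^2/\rho^2\bigr)$, which is the asserted equality. For part~(ii), $q=v$, so $\delta'=d(v,p')$ and $\angle pvp'\ge\pi/2$; the law of cosines in $\triangle pvp'$ gives $\delta^2=\rho^2+\delta'^2-2\rho\delta'\cos\angle pvp'\ge\rho^2+\delta'^2$, hence $\delta'^2\le\delta^2-\rho^2$, and combining with $\rho^2\ge E_v^2$ from the previous step yields $\delta'^2\le\delta^2-E_v^2=\delta^2\bigl(1-E_v^2/\delta^2\bigr)$.

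I do not expect a real obstacle; the only things to watch are the sign and range bookkeeping: that $E_v\ge 0$ (which, with $E_v\le\tfrac12\delta<\delta$ and $E_v\le\rho$, makes the square roots real), and that in case~(i) the projection parameter $\alpha=(p'-p)^T(v-p)/\rho^2$ does not exceed $1$, which is guaranteed by the case hypothesis $\angle pqp'=\pi/2$ since otherwise the nearest point on the segment would be the endpoint $v$ and the angle at $q=v$ would be obtuse.
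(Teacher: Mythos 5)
Your proof is correct and follows essentially the same geometric plan as the paper's: identify $\overline v$ as the foot of the perpendicular from $v$ to the line $pp'$ with $d(p,\overline v)=E_v$, so that $\rho^2=E_v^2+\gamma^2$, then use the right angle at $q$ for part (i) and the obtuse-angle Pythagorean inequality $\delta'^2+\rho^2\le\delta^2$ together with $E_v\le\rho$ for part (ii). The only cosmetic difference is in part (i): the paper invokes the similarity of the right triangles $\triangle pqp'$ and $\triangle p\overline v v$ to get $\delta'/\delta=\gamma/\rho$ directly, whereas you compute the leg $d(p,q)=\delta E_v/\rho$ via the identity $(p'-p)^T(v-p)=\delta E_v$ and then apply Pythagoras; these are equivalent, and your preliminary algebraic identity also cleanly justifies the sign and range bookkeeping ($E_v\ge 0$, $E_v\le\rho$) that the paper leaves implicit.
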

\begin{proof}
The right triangles $\triangle pqp'$ and  $\triangle p\overline v v$ have the angle $\angle qpp'$ in common and therefore congruent (see Figure \ref{Fig8}). We may write
$$\frac{\delta'}{\delta}= \frac{\gamma}{\rho}.$$
Substituting for $\gamma= \sqrt{\rho^2 -E^2_v}$ we get
$$\frac{\delta'}{\delta}= \frac{\sqrt{\rho^2 -E^2_v}}{\rho}.$$
Hence the proof of (i).

To prove (ii), since $\angle pqp'$ is obtuse, it is easy to argue that
$$\delta'^2+ \rho^2 \leq \delta^2.$$
Since $E_v \leq \rho$, we have
$$\delta'^2\leq \delta^2 - E_v^2.$$
Hence the proof of (ii).
\end{proof}
\begin{remark}  \label{rem4} An identical result to Lemma \ref{lem2} can be stated for the case when $E_{v'} \geq \epsilon \rho'$.
\end{remark}

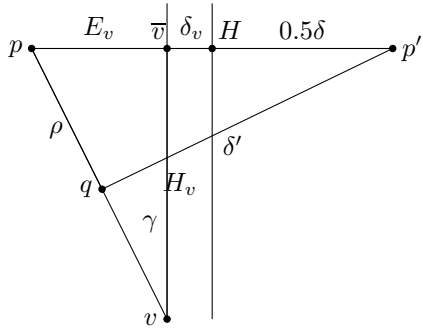
\begin{figure}[htpb]
	\centering
	\begin{tikzpicture}[scale=0.6]

      \draw (-4,0) -- (4,0) node[pos=0.55, above] {$H$};
      \draw (0,0) -- (4,0) node[pos=0.5, above] {$0.5\delta$};
      \draw (-4,0) -- (-1,0) node[pos=0.5, above] {$E_v$};
      \draw (-1,0) -- (0,0) node[pos=0.55, above] {$\delta_v$};
      \draw (4,0) -- (-2.44,-3.12) node[pos=0.55, below] {$\delta'$};
      \draw (-4,0) -- (-2.44,-3.12);
      \filldraw (-2.44,-3.12) circle (2pt);
      \draw (-2.44,-3.12) node[left] {$q$};
      \draw (0,-6) -- (0,1) node[pos=0.5, right] {};
      \draw (-1,-6) -- (-1,1) node[pos=0.3, left] {$\gamma$};
       \filldraw (-4,0) circle (2pt);
       \filldraw (4,0) circle (2pt);
       \filldraw (-1,-6) circle (2pt);
		\draw (-4,0) node[left] {$p$};
		\draw (4,0) node[right] {$p'$};
        \draw (-1,-6) node[left] {$v$};
         \draw (-1,0) node[above] {$\overline v~~$};
         \filldraw (-1,0) circle (2pt);
         \draw (-4,0) -- (-1,-6) node[pos=0.3, left] {$\rho$};

           \filldraw (0,0) circle (2pt);
           \draw (-1,0) -- (-1,-6) node[pos=0.5] {$~~~H_v$};
		
	\end{tikzpicture}
\begin{center}
\caption{Closest point, $q$, on line segment $pv$ is strictly interior.} \label{Fig8}
\end{center}
\end{figure}

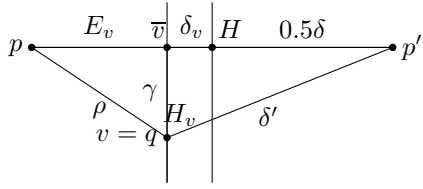
\begin{figure}[htpb]
	\centering
	\begin{tikzpicture}[scale=0.6]

      \draw (-4,0) -- (4,0) node[pos=0.55, above] {$H$};
      \draw (0,0) -- (4,0) node[pos=0.5, above] {$0.5\delta$};
      \draw (-4,0) -- (-1,0) node[pos=0.5, above] {$E_v$};
      \draw (-1,0) -- (0,0) node[pos=0.55, above] {$\delta_v$};
      \draw (0,-3) -- (0,1) node[pos=0.5, right] {};
      \draw (-1,-3) -- (-1,1) node[pos=0.5, left] {$\gamma$};
       \filldraw (-4,0) circle (2pt);
       \filldraw (4,0) circle (2pt);
       \filldraw (-1,-2) circle (2pt);
		\draw (-4,0) node[left] {$p$};
		\draw (4,0) node[right] {$p'$};
        \draw (-1,-2) node[left] {$v=q$};
         \draw (-1,0) node[above] {$\overline v~~$};
         \filldraw (-1,0) circle (2pt);
         \draw (-4,0) -- (-1,-2) node[pos=0.5, below] {$\rho$};

           \filldraw (0,0) circle (2pt);
           \draw (-1,0) -- (-1,-3) node[pos=0.5] {~~~$H_v$};
           \draw (4,0) -- (-1,-2) node[pos=0.55, below] {$\delta'$};
		
	\end{tikzpicture}
\begin{center}
\caption{Closest point on line segment $pv$ is $v$.}  \label{Fig8A}
\end{center}
\end{figure}

In the next theorem all quantities are as defined earlier.

\begin{thm} \label{thm7} Suppose $(p,p')$ is a witness pair and $E_v \geq \epsilon \delta/2$.  Then we have

\begin{equation} \label{gap2}
\delta' \leq
\begin{cases}
\delta \sqrt{1- \frac{\epsilon^2 \delta_*^2}{4\rho^2_*}}, &\text{if $\rho \geq \delta $;}\\
\\
\delta \sqrt{1- \frac{\epsilon^2}{4}}, &\text{if $\rho < \delta $.}\\
\end{cases}
\end{equation}
\end{thm}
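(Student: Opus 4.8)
The plan is to deduce (\ref{gap2}) directly from Lemma \ref{lem2}, after recording two elementary size comparisons and checking that no degenerate configuration occurs. The hypothesis $E_v \geq \epsilon\delta/2 > 0$ already rules out degeneracy: since $v \in K \subset H_+$ (Proposition \ref{propnew}), one has $h^Tp - a = \frac{1}{2}\delta^2$ and $h^Tv - a = \delta_v\delta$ with $h = p - p'$ and $a$ as in Proposition \ref{propnew}, so $(p'-p)^T(v-p) = h^T(p-v) = \delta E_v > 0$. Hence the step-size $\alpha = (p'-p)^T(v-p)/d^2(p,v) = \delta E_v/\rho^2$ used to form $q = nearest(p';pv)$ is positive, so $q \neq p$ and Lemma \ref{lem2} applies, placing us in exactly one of two regimes: either $\delta' = \delta\sqrt{1 - E_v^2/\rho^2}$, or $q = v$ and $\delta' \leq \delta\sqrt{1 - E_v^2/\delta^2}$. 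Both radicands are nonnegative because $E_v = d(p,H_v) \leq d(p,v) = \rho$ and $E_v \leq \frac{1}{2}\delta$.

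Next I would record the two comparisons that turn these inequalities into (\ref{gap2}): $\delta = d(p,p') \geq \delta_*$ by definition of $\delta_*$, and $\rho = d(p,v) \leq \Delta_0 \leq \rho_*$ because $p$ and $v$ both lie in $K$. Combined with $E_v \geq \epsilon\delta/2$, the first gives $E_v \geq \epsilon\delta_*/2$.

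Then I would split on the dichotomy in the statement. If $\rho < \delta$, then in the first regime $\delta' = \delta\sqrt{1 - E_v^2/\rho^2} \leq \delta\sqrt{1 - E_v^2/\delta^2}$, so in either regime $\delta' \leq \delta\sqrt{1 - E_v^2/\delta^2} \leq \delta\sqrt{1 - \epsilon^2/4}$, using $E_v \geq \epsilon\delta/2$; this is the second line of (\ref{gap2}). If $\rho \geq \delta$, then $\rho_* \geq \rho \geq \delta \geq \delta_*$, hence $\delta_*/\rho_* \leq 1$. In the first regime, $E_v/\rho \geq \epsilon\delta_*/(2\rho_*)$ yields $\delta' \leq \delta\sqrt{1 - \epsilon^2\delta_*^2/(4\rho_*^2)}$; in the second regime, $\delta' \leq \delta\sqrt{1 - E_v^2/\delta^2} \leq \delta\sqrt{1 - \epsilon^2/4} \leq \delta\sqrt{1 - \epsilon^2\delta_*^2/(4\rho_*^2)}$, the last step by $\delta_* \leq \rho_*$. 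This is the first line of (\ref{gap2}).

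Once Lemma \ref{lem2} is available the computations are routine; the step needing the most care is the $\rho \geq \delta$ branch, where one must notice that $\rho \geq \delta$ itself forces $\delta_* \leq \rho_*$ (through $\delta_* \leq \delta \leq \rho \leq \Delta_0 \leq \rho_*$), so that the claimed bound is well-posed there and the coarser estimate $\delta\sqrt{1-\epsilon^2/4}$ available in the $q=v$ subcase still implies it. It is also worth verifying, as above, that the hypothesis $E_v \geq \epsilon\delta/2 > 0$ excludes the boundary case $q = p$, so that the two regimes of Lemma \ref{lem2} are genuinely exhaustive.
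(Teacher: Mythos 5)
Your argument is correct and follows essentially the same approach as the paper's proof: case-split on $\rho$ versus $\delta$, apply Lemma \ref{lem2}, and combine with $E_v \geq \epsilon\delta/2$, $\delta \geq \delta_*$, and $\rho \leq \rho_*$. You are a bit more careful than the paper in two places — verifying that $E_v > 0$ rules out $q = p$, and handling both regimes of Lemma \ref{lem2} in the $\rho \geq \delta$ branch rather than asserting the angle at $q$ must be right — but the paper's shortcut is also valid, since $\delta E_v \leq \delta^2/2 \leq \rho^2$ shows the step-size $\alpha = \delta E_v/\rho^2$ is at most $1/2$ when $\rho \geq \delta$, so the endpoint case cannot occur there.
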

\begin{proof} We consider the two cases separately.

{\bf Case I.} $\rho \geq \delta$. In this case $\angle pqp'$ is a right angle. Thus the first case of Lemma \ref{lem2} applies
$$\frac{\delta'}{\delta}= \frac{\sqrt{\rho^2 -E^2_v}}{\rho}.$$
We have
$$E_v \geq \frac{\epsilon \delta}{2} \geq \frac{\epsilon \delta_*}{2}.$$
Using the above and since $\rho \leq \rho_*$ we get the proof of this case.

{\bf Case II.} $\rho <\delta$.  We break this up into two subcases.

{\bf Subcase 1.} $\angle pqp' =\pi/2$. In this subcase, case (ii) of  Lemma \ref{lem2} implies
$$\delta' = \delta \sqrt{1- \frac{E^2_v}{\rho^2}}.$$
We have
$$\frac{E^2_v}{\rho^2} \geq \frac{E^2_v}{\delta^2} \geq \frac{\epsilon^2}{4\delta^2}.$$
It follows that
$$\delta' \leq \delta \sqrt{1- \frac{\epsilon^2}{4}}.$$

{\bf Subcase 2.}  $\angle pqp' > \pi/2$. From Lemma \ref{lem2} we have
$$\delta' \leq \delta \sqrt{1- \frac{E^2_v}{\delta^2}}.$$
But using that $E_v \geq \epsilon \delta$ we get the desired result.

\end{proof}

\begin{remark}  \label{rem5} An identical result to Theorem \ref{thm7} can be stated for the case when $E_{v'} \geq \epsilon \rho'$.
\end{remark}

We may now state the complexity bounds, recalling $\rho_0$ (initial gap), $\rho_*$ (maximum of the diameters of $K$ and $K'$), $T$ maximum of complexity in solving a linear program over $K$ or $K'$ to get a pivot
(see (\ref{eqa5})), and $\delta_*=d(K,K')$.

\begin{thm} \label{thm8} Let $N_1$ be the number of times Triangle Algorithm II makes use of a weak pivot computed in Step 1 to reduce the gap in Step 2 or Step 3, and let $N_2$  the number of times it makes use of a pivot computed in a call to Triangle Algorithm I in Step 5. Then,
\begin{equation} \label{thm8eq1}
N_1+N_2= O \bigg  ( \frac{\rho_*^2}{\delta_*^2\epsilon^2} \ln \frac{\delta_0}{\delta_*}\bigg )=
O \bigg  ( \frac{\rho_*^2}{\delta_*^2\epsilon^2} \ln \frac{\rho_*}{\delta_*}\bigg ).
\end{equation}
Thus the total arithmetic complexity of Triangle Algorithm II is
\begin{equation} \label{thm8eq2}
O \bigg  ( T\frac{\rho_*^2}{\delta_*^2\epsilon^2} \ln \frac{\delta_0}{\delta_*}\bigg )=
O \bigg  ( T\frac{\rho_*^2}{\delta_*^2\epsilon^2} \ln \frac{\rho_*}{\delta_*}\bigg ).
\end{equation}
In particular, when one of the two sets $K$ or $K'$ is a singleton element we have
\begin{equation} \label{thm8eq3}
N_1+N_2=O \bigg  (\frac{\rho_*^2}{\delta_*^2\epsilon^2}.\bigg ).
\end{equation}
\end{thm}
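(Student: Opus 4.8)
The plan is to combine the per-step geometric decrease furnished by Theorem \ref{thm7} (for weak-pivot steps inside Triangle Algorithm II) and by Lemma \ref{lem1} / Theorem \ref{thm4} (for pivot steps inside the calls to Triangle Algorithm I), and then to sum these over all iterations until the gap $\delta=d(p,p')$ drops from the initial value $\delta_0$ to something comparable with $\delta_*$, at which point Step 2 must trigger and the algorithm halts. The key observation that makes the bookkeeping uniform is that whenever Triangle Algorithm II does \emph{not} stop at Step 2, we have either $E_v>\tfrac12\epsilon\rho$ or $E_{v'}>\tfrac12\epsilon\rho'$ (Remark \ref{rem2}), and since $\rho,\rho'\le\rho_*$ while $\delta\ge\delta_*$, each such step satisfies the hypothesis $E_v\ge\epsilon\delta/2$ (resp.\ $E_{v'}\ge\epsilon\delta'/2$) of Theorem \ref{thm7}. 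Likewise every pivot step in a Step-5 call to Triangle Algorithm I reduces the gap according to the cases of (\ref{gap}).

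First I would show that every one of the $N_1+N_2$ gap-reducing steps contracts $\delta$ by a factor at most
$$
q := \max\Bigl\{\sqrt{1-\tfrac{\epsilon^2\delta_*^2}{4\rho_*^2}},\ \exp\bigl(-\tfrac{\delta_*^2}{8\rho_*^2}\bigr),\ \tfrac{\sqrt3}{2}\Bigr\}.
$$
The first term bounds the weak-pivot steps by Theorem \ref{thm7} (note the case $\rho<\delta$ there gives the even better factor $\sqrt{1-\epsilon^2/4}$, which is dominated by the first term since $\delta_*\le\rho_*$). The second and third terms bound the ordinary-pivot steps: in the regime $\delta\ge\delta_*$ the factors $\exp(-\delta^2/8\rho_*^2)$ etc.\ of (\ref{gap}) are at most $\exp(-\delta_*^2/8\rho_*^2)$, and the $\sqrt3/2$ case is already uniform. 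Writing $1-x\le e^{-x}$ and $\sqrt{3}/2 = e^{-\ln(2/\sqrt3)}$, one checks $-\ln q \ge c\,\epsilon^2\delta_*^2/\rho_*^2$ for an absolute constant $c>0$ (using $\epsilon\le1$ and $\delta_*\le\rho_*$ to see the weak-pivot term is the binding one up to constants). Hence after $N=N_1+N_2$ steps, $\delta_N\le\delta_0\,q^{N}\le\delta_0\exp(-cN\epsilon^2\delta_*^2/\rho_*^2)$.

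Next I would argue that the algorithm must stop once $\delta$ is small enough relative to $\delta_*$: if $(p,p')$ is a witness pair then $\underline\delta\le\delta_*\le\delta$, so $E=\delta-\underline\delta\le\delta-\delta_*$; and since $\rho,\rho'$ are bounded below by a positive quantity comparable to $\delta_*$ (a witness pair has $d(p,v),d(p',v')$ at least on the order of the gap, and in any case once $\delta<2\delta_*$ the stopping test $E\le\epsilon\min\{\rho,\rho'\}$ is implied by $\delta-\delta_*\le\epsilon\delta_*$, hence by $\delta\le(1+\epsilon)\delta_*$), the stopping criterion of Step 2 is met as soon as $\delta\le(1+\epsilon)\delta_*$, i.e.\ certainly once $\delta\le\delta_*$ fails to improve further. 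Therefore $N$ is at most the number of contractions needed to bring $\delta_0$ below $\Theta(\delta_*)$: solving $\delta_0\exp(-cN\epsilon^2\delta_*^2/\rho_*^2)\le\delta_*$ for $N$ gives
$$
N_1+N_2 = O\!\left(\frac{\rho_*^2}{\delta_*^2\epsilon^2}\ln\frac{\delta_0}{\delta_*}\right),
$$
and since $\delta_0\le\rho_*$ this is also $O\bigl(\tfrac{\rho_*^2}{\delta_*^2\epsilon^2}\ln\tfrac{\rho_*}{\delta_*}\bigr)$. Multiplying by the per-iteration cost $T$ of (\ref{eqa55}) gives (\ref{thm8eq2}). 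For the singleton case (\ref{thm8eq3}), Corollary \ref{cor1} shows $\delta\le2\delta_*$ automatically for any witness pair, so only $O(\ln 1)=O(1)$ halvings past the first witness pair are ever needed — more precisely the target accuracy $\delta-\delta_*\le\epsilon\delta$ is reached within $O(\log(1/\epsilon))$ halvings absorbed into the geometric series, and re-running the summation without the $\ln(\delta_0/\delta_*)$ factor (it is replaced by a constant because $\delta_0/\delta_*$ is bounded right after Triangle Algorithm I terminates) yields $N_1+N_2=O(\rho_*^2/(\delta_*^2\epsilon^2))$.

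The main obstacle I anticipate is the careful justification that \emph{every} step — whether a Step-3/Step-4 weak-pivot step of Algorithm II or an interior iteration of a Step-5 call to Algorithm I — genuinely contracts the gap by the uniform factor $q<1$, and in particular that the $\rho$ (or $\rho'$) appearing in Theorem \ref{thm7} and in Lemma \ref{lem1} can always be bounded by $\rho_*$ while the relevant $\delta$ or $\overline\delta$ stays bounded below by a constant times $\delta_*$ throughout; one must also confirm that the calls to Algorithm I in Step 5 do terminate (they do, since either they find a pivot and reduce the gap, or they certify a new witness pair, by Theorem \ref{thm5}), so that $N_2$ is well-defined and the two counters together exhaust all arithmetic work.
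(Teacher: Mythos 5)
Your proof follows essentially the same route the paper takes: each weak-pivot step contracts $\delta$ by the factor of Theorem \ref{thm7}, each ordinary-pivot step inside a Step-5 call to Triangle Algorithm I contracts it by the factor of Theorem \ref{thm4}/Lemma \ref{lem1}, and since $\delta\ge\delta_*$ always, solving $\delta_0\,q^{N}\le\delta_*$ for $N=N_1+N_2$ yields the claimed bound; $\delta_0=O(\rho_*)$ gives the second form, and Corollary \ref{cor1} (which forces $\delta_0\le 2\delta_*$ as soon as one set is a singleton and a witness pair is in hand) gives the third. The only organizational difference is that you merge $N_1$ and $N_2$ under a single uniform contraction factor $q$, whereas the paper bounds $N_1$ first and then remarks that $N_2$ is bounded by the same quantity; both versions are sound because, as you note, the pivot factors $\exp(-\delta_*^2/8\rho_*^2)$ and $\sqrt3/2$ are at least as strong as the weak-pivot factor once $\epsilon\le1$ and $\delta_*\le\rho_*$.

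One inference in your write-up does not follow as stated: ``since $\rho,\rho'\le\rho_*$ while $\delta\ge\delta_*$, each such step satisfies the hypothesis $E_v\ge\epsilon\delta/2$ of Theorem \ref{thm7}'' is not a consequence of the algorithm's actual test $E_v>\tfrac12\epsilon\rho$; that implication would require $\rho\ge\delta$, which is only the first of the two cases Theorem \ref{thm7} distinguishes. (To be fair, the paper's own proof of Theorem \ref{thm8} also cites Theorem \ref{thm7}'s conclusion without reconciling the algorithm's condition $E_v>\tfrac12\epsilon\rho$ with the theorem's hypothesis $E_v\ge\tfrac12\epsilon\delta$.) The clean fix is to argue directly from $E_v>\tfrac12\epsilon\rho$: when $\rho\ge\delta$ this gives $E_v\ge\tfrac12\epsilon\delta$ and Case I of Theorem \ref{thm7} applies; when $\rho<\delta$, substituting $E_v>\tfrac12\epsilon\rho$ into the formula of Lemma \ref{lem2} yields $\delta'\le\delta\sqrt{1-\epsilon^2/4}$, which is dominated by the Case-I factor since $\delta_*\le\rho_*$. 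With that clarification your argument and the paper's coincide.
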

\begin{proof}  In the worst-case in each contribution to $N_1$ we have

\begin{equation}
\delta' \leq \delta \sqrt{1- \frac{\epsilon^2 \delta_*^2}{4\rho^2_*}} \leq \delta \exp(-\frac{\epsilon^2 \delta_*^2}{8\rho^2_*}).
\end{equation}
So after $N_1$ iterations we may write
\begin{equation}
\delta_{N_1} \leq \delta_0  \exp(-{N_1}\frac{\epsilon^2 \delta_*^2}{8\rho^2_*}).
\end{equation}
By bounding the right-hand-side of the above to be less than or equal to $\delta_*$,  we get
\begin{equation} \label{N1}
N_1  \leq  \frac {8\rho^2_*} {\epsilon^2 \delta_*^2} \ln \frac{\delta_0} {\delta_*}.
\end{equation}
But $\delta_0$ is bounded above by $2\rho_* + \delta_*$. This is easy to see because if $\delta_*=d(p_*,p'_*)$, then the path that connects $p_0$ to $p'_0$ by connecting $p_0$ to $p_*$ to $p'_*$ to $p'_0$ has length at most
$2\rho_* + \delta_*$.

The number $N_2$ is also bounded by the same quantity. This follows from the analysis of Triangle Algorithm I already described. These imply
(\ref{thm8eq1}) and (\ref{thm8eq2}).

To prove (\ref{thm8eq3}) for the special case, we use the fact that Triangle Algorithm II begins with a witness pair. Thus by Corollary \ref{cor1} it follows that $\delta_0  \leq 2 \delta_*$.  If follows that $\ln \delta_0/\delta_*$ is a constant. Substituting in (\ref{N1}),  proof for the special case follows.
\end{proof}

\section{Complexity of Triangle Algorithms in Special Cases}

In this section we consider several important special cases and analyze the complexity of solving the approximation problems.

\subsection{The Case of A Finite Convex Hull and A Singleton Point}

Consider when $K= conv(V)$, $V=\{v_1, \dots, v_n\}$, $K'=\{p'\}$. In this case $\rho_*$ is the diameter of $V$:

\begin{equation}  \label{sec8eq1}
\rho_* = \max \{d(v_i,v_j): 1 \leq i \leq n\}.
\end{equation}

For this special case the complexity of computing a pivot
at a given point $x$ in $K$ (in Step 2 or Step 3 of Triangle Algorithm I) is $O(mn)$ arithmetic operations.  The complexity of computing a weak-pivot at a given witness point $x$ in $K$ (in Step 1 of Triangle Algorithm II) is also $O(mn)$ arithmetic operations. Thus the total number of arithmetic operations in Triangle Algorithm I to get an $\epsilon$-approximate solution when $\delta_*=0$, that of finding a witness pair, and the total number of arithmetic operations in Triangle Algorithm II to get an $\epsilon$-approximate solution to $\delta_*$ as well as to the supporting hyperplanes problem are, respectively

\begin{equation}  \label{sec8eq2}
O\bigg (\frac{mn}{\epsilon^2} \bigg)  {\rm ~(intersection)}, \quad
O \bigg  (\frac{mn\rho_*^2}{\delta_*^2} \bigg ) {\rm ~(separation)}, \quad
O \bigg ( mn \frac{\rho_*^2} {\delta_*^{2} }\frac{1}{\epsilon^{2}} \ln \frac {\rho_*}{\delta_*}  \bigg ) {\rm ~(distance~ \& ~support)}.
\end{equation}

The first complexity above is that of testing approximately if a point lies in the convex hull of $n$ points. This is a special case of linear programming.  The last one is the complexity of approximating the distance between a point and the convex hull of $n$ points when the point lies outside of the convex hull as well as the support. The results in particular give a more general version of the Triangle Algorithm in \cite{kal14} in the sense that it also computes the closest point in $K$ to $p'$ when $p'$ is not in $K$ and the support.

In \cite{kalSaks} we have shown that with an $O(mn^2)$-time preprocessing,  the complexity of each iteration can essentially  be reduced to $O(m+n)$ as opposed to $O(mn)$. Thus we may state the following corresponding alternate complexities to those of (\ref{sec8eq2}):

\begin{equation}  \label{sec8eq3}
O\bigg ((m +n )\frac{1}{\epsilon^2} \bigg), \quad
O \bigg  ((m +n ) \frac{\rho_*^2}{\delta_*^2} \bigg ), \quad
O \bigg ((m +n ) \frac{\rho_*^2} {\delta_*^{2} } \frac{1}{\epsilon^{2}} \ln \frac {\rho_*}{\delta_*}  \bigg ).
\end{equation}

The preprocessing time in \cite{kalSaks} consists of computing the distances $d(v_i,v_j)$, for all pairs $i,j$.  This is also useful in a version of the Triangle Algorithm where the coordinates of the singleton point $p' \in K'$ is unknown, see {\it Blindfold Triangle Algorithm} \cite{Kalan12}. Such a case of the problem may arise if we wish to test the feasibility of a site within the convex hull of known sites to lie within a prescribed distances from them.

Here we show that a different preprocessing is possible resulting in $O(m+n)$ complexity per iteration of Triangle Algorithm I or II. Specifically, consider finding a pivot at a given point $p$. This can be established by solving the following problem, see (\ref{eqa5})
\begin{equation}  \label{sec8eq4}
\max \{(p'-p)^Tv:  v = \sum_{i=1}^n x_i v_i, \quad \sum_{i=1}^n x_i=1, \quad x_i \geq 0\}.
\end{equation}
Denote by $A$ the $m \times n$ matrix $[v_1, \dots, v_n]$.  Since $p$ is an iterate in $K$, $p=A y$, for some $y =(y_1, \dots, y_n)^T$, $\sum_{i=1}^n y_i =1$, $y \geq 0$.  Assume  that we have computed a representation of $p'$ as a linear combination of $v_i$'s, not necessarily a convex combination, ay $p'=A y'$. This is easy to compute.
Thus we can write
\begin{equation}  \label{sec8eq5}
(p'-p)= A z, \quad z = y' - y.
\end{equation}
The objective function in (\ref{sec8eq4}) is $zTA^TA x$. It is easy to see the optimal value is simply the maximum coordinate of the vector $A^TA z$, say occurring at $x=e_j$ for some $j$.  This corresponds to $v_j$ as the pivot.  The vector $A^TA z$ can be computed in $O(mn)$ time. Another way of computing  $A^TA z$ is to compute $Q=A^TA$ in $O(mn^2)$ time, followed by computing  $Q z$ in $O(n^2)$ time.  This approach is less efficient, however there are overall advantages. Suppose the next iterate is  $\overline p= \gamma p + (1- \gamma) v_j$,  for some $v_j \in V$ and $\gamma \in (0,1)$. Thus
\begin{equation}  \label{sec8eq6}
\overline p= \gamma A y + (1 - \gamma) v_j= A (\gamma y + (1 - \gamma) e_j).
\end{equation}
This can be written as $\overline p= A \overline y$, where $\overline y=\gamma y + (1 - \gamma) e_j$.  Thus if we set $\overline z = y' - \overline y$,  then $A \overline z=p' - \overline p$. Now to compute a pivot for $\overline p$  we need to compute the minimum of $A^TA \overline z$. We have
\begin{equation}
\overline z =y' - \overline y= y'-\gamma y - (1 - \gamma) e_j= \gamma(y'- y) + (1- \gamma)(y' - e_j).
\end{equation}
Thus
\begin{equation}
Q \overline z = \gamma Q z + (1- \gamma)Q(y' - e_j).
\end{equation}
Note that since $Qz$ is computed, if we have also computed $Qy'$, a vector that stays fixed since $p'$ is fixed, then $Q \overline z$ can be computed in $O(n)$ time.  We may thus formally state the following result justifying the claimed complexity in (\ref{sec8eq3}).

\begin{prop} \label{propcomp}  Assume that $Q=A^TA$ is computed. Assume also we have computed $p'=Ay'$ for some $y'$. If the initial iterate is
$p={\rm argmin}\{d(p',v_i): v_i \in V\}$. Then  each iteration of the Triangle Algorithms can be carried out in $O(m+n)$ time.  $\Box$
\end{prop}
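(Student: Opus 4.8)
The plan is to track the cost of one iteration of Triangle Algorithm I or II under the stated preprocessing, using the observations already developed in the paragraphs preceding the proposition. The key precomputed quantities are the Gram matrix $Q = A^TA$ (cost $O(mn^2)$, done once), a coefficient vector $y'$ with $p' = Ay'$ (cost $O(mn)$, or cheaper — e.g. by solving one least-squares system or simply writing $p'$ in terms of any affine basis among the $v_i$, done once), and the vector $Qy'$ (cost $O(n^2)$, done once and never changed since $p'$ is fixed). I would also note that the specified initial iterate $p = \mathrm{argmin}\{d(p',v_i): v_i \in V\}$ costs $O(mn)$ to find once, and that maintaining $p = Ay$ means we carry along the coefficient vector $y$ and, crucially, the vector $Qy$ as part of the iterate's state.

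First I would establish the invariant: at the start of each iteration we have available $y$ (the convex-combination coefficients of the current $p$), the vector $Qz$ where $z = y' - y$, and $Qy'$. The initial iterate $p = v_{i_0}$ has $y = e_{i_0}$, so $Qy = Q e_{i_0}$ is just column $i_0$ of $Q$, computable in $O(n)$ time; then $Qz = Qy' - Qy$ in $O(n)$ time. Second, I would verify that one iteration's pivot search (Step 2/3 of Algorithm I, or the weak-pivot computation in Step 1 of Algorithm II) reduces, as shown in equations (\ref{sec8eq4})--(\ref{sec8eq5}), to reading off the maximum (or minimum) coordinate of $A^TA z = Qz$; since $Qz$ is already in hand, this is $O(n)$ comparisons, identifying some index $j$ and the pivot vertex $v_j$. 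The weak-pivot test and the computation of $\delta$, $\underline\delta$, $E$, $E_v$, $E_{v'}$, $\rho$, $\rho'$ in Algorithm II all involve inner products of $p - p'$, $v$, $v'$ with $h = p - p'$ or norms thereof; writing $p - p' = Az$ and $v = v_j$, these reduce to entries of $Qz$ and $Q$ (e.g. $h^Tv_j = (Qz)_j$, $\|h\|^2 = z^TQz = z^T(Qz)$, the latter an $O(n)$ dot product), so all of Step 1 of Algorithm II is also $O(n)$ once $Qz$ is known — plus an $O(m)$ cost if one actually needs the point $v$ in $\mathbb{R}^m$ for output, hence $O(m+n)$ total.

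Third, I would handle the update. The new iterate is $\overline p = \gamma p + (1-\gamma) v_j = A\overline y$ with $\overline y = \gamma y + (1-\gamma) e_j$, where $\gamma$ comes from the $nearest(\cdot;\cdot)$ step-size formula, itself computable in $O(n)$ from the already-available inner products. Then $\overline z = y' - \overline y = \gamma z + (1-\gamma)(y' - e_j)$, and so
\begin{equation}
Q\overline z = \gamma\,(Qz) + (1-\gamma)\bigl(Qy' - Q e_j\bigr),
\end{equation}
which costs $O(n)$ because $Qz$ and $Qy'$ are stored and $Qe_j$ is column $j$ of $Q$. Updating $y$ to $\overline y$ is $O(n)$; if we also want $\overline p$ explicitly as a point in $\mathbb{R}^m$ we pay $O(m)$ more. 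Hence each iteration of either algorithm costs $O(m+n)$, which is the claim; one then states that the global iteration counts from Theorems \ref{thm5} and \ref{thm8} multiply through to give the complexities asserted in (\ref{sec8eq3}).

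The main obstacle is not any single calculation but the bookkeeping: one must be careful that absolutely every per-iteration quantity the algorithms consult — pivot existence and identity, the weak-pivot inequality, the step size $\gamma$, the gap $\delta$, the lower bound $\underline\delta$, and the stopping tests involving $\rho, \rho', E, E_v, E_{v'}$ — can be expressed through $Qz$, $Qy'$, and columns of $Q$ alone, with no hidden $O(mn)$ operation such as recomputing $A^TAz$ from scratch or forming $p$, $p'$, $v$ as dense vectors when only their pairwise inner products are needed. A secondary subtlety is that when Algorithm II calls Algorithm I in Step 5 the state $(y, Qz)$ must be passed along and kept consistent through that call, but since Algorithm I performs exactly the same kind of pivot-and-update steps, the same invariant is preserved throughout.
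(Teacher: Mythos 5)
Your proposal is correct and follows essentially the same line as the paper, which leaves Proposition~\ref{propcomp} without a separate proof body because the argument is the discussion in the paragraphs immediately preceding it: precompute $Q=A^TA$ and $Qy'$, maintain $Qz$ with $z=y'-y$, read the pivot off the extreme coordinate of $Qz$, and update $Q\overline z = \gamma Qz + (1-\gamma)(Qy'-Qe_j)$ in $O(n)$, with the additional $O(m)$ coming only from carrying the explicit iterate in $\mathbb{R}^m$. You fill in more bookkeeping than the paper does (the initial state, the stopping-test quantities in Algorithm II, state consistency across the call to Algorithm I), but the underlying mechanism is identical; one small slip, immaterial to the complexity claim, is the sign in $h^Tv_j$: with $h=p-p'=-Az$ one gets $h^Tv_j=-(Qz)_j$.
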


The $O(m+n)$ complexity is essentially the complexity of finding a pivot at the current iterate $p \in conv(V)$  and computing the closest point to $p'$ on the line segment $pv$, i.e. $nearest(p';pv)$.

We can improve upon this by trying to find a {\it minimum-angle pivot} defined below (see Figure \ref{Fig2xx}).

\begin{definition}  \label{defxx} Given   $p \in K =conv(\{v_1, \dots, v_n\})$, we say $v \in K$ is a {\it minimum-angle} $p'$-{\it pivot} for $p$ if $v$ is $p'$-{\it pivot} for $p$ (i.e. $d(p,v) \geq d(p',v)$) and the angle $\theta=\angle p'p v$ is the smallest among all such pivots.
\end{definition}

The advantage in using a minimum-angle pivot is that the distance between
$p'$ and $p''=nearest(p',pv)$ will be the least, hence the best reduction in the gap can be achieved in that iteration. We prove:

\begin{prop} \label{propcomp2}
Assume that $Q=A^TA$ is computed. Assume also that $p'=Ay'$ is computed for some $y'$. If the initial iterate is
$p={\rm argmin}\{d(p',v_i): v_i \in V\}$. Then  each iteration of the Triangle Algorithms, using a minimum-angle pivot, can be carried out in $O(m+n)$ time.
\end{prop}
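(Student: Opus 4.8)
The plan is to show that, with the preprocessing already in place (namely having $Q = A^TA$ and a representation $p' = Ay'$), computing a minimum-angle pivot at the current iterate $p = Ay$ costs no more than $O(m+n)$ per iteration, matching the bound of Proposition \ref{propcomp}. The starting point is the observation from the discussion preceding Proposition \ref{propcomp} that the quantities $Qz$ (with $z = y'-y$) and $Qy'$ are maintained across iterations at $O(n)$ cost each, so in particular the vector $c := A^T(p'-p) = Qz$ is available at the start of each iteration; this is exactly the vector whose coordinates decide which $v_i$ are $p'$-pivots.

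First I would translate the pivot condition and the angle into quantities expressible through $Q$, $y$, $y'$ and $c$. A vertex $v_i$ is a $p'$-pivot for $p$ precisely when $d(p,v_i) \geq d(p',v_i)$, which after squaring and using $\|p\|^2 - \|p'\|^2$ is a linear inequality in the already-computed vector $c$ together with the scalars $\|p\|^2$, $\|p'\|^2$ — and $\|p\|^2 = y^TQy$, $\|p'\|^2 = y'^TQy'$ are maintained at $O(n)$ cost (the latter is fixed). Next, the angle $\theta = \angle p'pv_i$ satisfies $\cos\theta = \frac{(p'-p)^T(v_i-p)}{d(p',p)\,d(p,v_i)}$; minimizing $\theta$ over pivots is the same as maximizing $\cos\theta$, and since $d(p',p)$ is a common positive factor this reduces to maximizing $\frac{(p'-p)^T(v_i-p)}{d(p,v_i)}$ over the pivots $v_i$. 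Now $(p'-p)^Tv_i = c_i$ is the $i$-th coordinate of the precomputed vector $c$, $(p'-p)^Tp = c^Ty$ is one dot product, and $d^2(p,v_i) = \|v_i\|^2 - 2 p^Tv_i + \|p\|^2 = Q_{ii} - 2(Qy)_i + y^TQy$, whose ingredients — the diagonal of $Q$, the maintained vector $Qy$, and the maintained scalar $y^TQy$ — are all at hand. Hence for each $i$ the test "is $v_i$ a pivot?" and the comparison value for the angle are each computable in $O(1)$ time, so scanning all $n$ vertices to find the minimizing pivot costs $O(n)$; if no pivot exists the algorithm declares a witness pair.

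Having selected the minimum-angle pivot $v_j$, the remaining work of the iteration is to form the new iterate $\overline p = nearest(p'; p v_j)$, i.e. $\overline p = \gamma p + (1-\gamma)v_j$ with step-size $\gamma$ given by the formula in the Proposition preceding Section 5; that formula needs only $(p'-p)^T(v_j-p)$ and $d^2(p,v_j)$, both already evaluated above, so $\gamma$ costs $O(1)$. Finally I would update the maintained data: $\overline y = \gamma y + (1-\gamma)e_j$ in $O(n)$ (actually $O(1)$ changes plus a scalar scaling, but $O(n)$ suffices), $Q\overline y = \gamma Qy + (1-\gamma)Qe_j$ where $Qe_j$ is the $j$-th column of $Q$ (available, $O(n)$), $y^TQ\overline y$ and $\overline y^TQ\overline y$ in $O(n)$, and $\overline z = y' - \overline y$ with $Q\overline z = Qy' - Q\overline y$ in $O(n)$; the coordinates of $p,\overline p$ themselves, if needed explicitly, are recovered as $A\overline y$ in $O(mn)$ — but this is avoidable since every quantity the algorithm consults is a bilinear form in $Q$, so one only needs $\overline p$'s coordinates for the very first and very last iterate, contributing an additive $O(mn)$ outside the per-iteration count. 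Collecting the costs, each iteration is $O(m+n)$ — the $m$ appearing only through the one-time passes required to set up and read off coordinates — which is the claim.

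The main obstacle I anticipate is bookkeeping rather than a genuine difficulty: one must be careful that the minimum-angle selection does not secretly require an extra normalization that reintroduces square roots in a way that blocks the $O(1)$-per-vertex comparison. This is handled by comparing the \emph{squared} ratio $c_i'^2 / d^2(p,v_i)$ (where $c_i' = c_i - c^Ty = (p'-p)^T(v_i-p)$) only among those $i$ with $c_i' > 0$ — note a pivot with $\theta < \pi/2$ always has $c_i' > 0$, and if every pivot has $c_i' \le 0$ then $\theta \ge \pi/2$ for all of them and $nearest(p';pv_i) = v_i$ regardless, so any pivot is equally good — thereby avoiding square roots entirely while still selecting the true minimizer of $\theta$. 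The other point to verify is that the per-iteration $O(n)$ updates of $Qy$, $Qy'$, $y^TQy$ genuinely telescope, which is exactly the computation already carried out in the paragraph preceding Proposition \ref{propcomp}, so no new work is needed there.
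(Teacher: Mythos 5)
Your proposal is correct and follows essentially the same strategy as the paper's proof: translate the minimum-angle comparison into a ratio test that is $O(1)$ per vertex once the bilinear-form bookkeeping ($Q=A^TA$, $Qy$, $Qy'$, $y^TQy$, etc.) is maintained incrementally, giving $O(n)$ to scan for the pivot and $O(m+n)$ per iteration overall. The only cosmetic difference is that you maximize $\cos^2\theta$ (via the squared ratio $c_i'^2/d^2(p,v_i)$ with $c_i'=(p'-p)^T(v_i-p)$), whereas the paper applies the law of cosines to minimize $\sin^2\theta$; since a pivot always has $\cos\theta>0$, these are equivalent, and your remark about the case where every pivot has $c_i'\le 0$ is actually vacuous (the pivot inequality $d(p,v)\ge d(p',v)$ forces $(p'-p)^T(v-p)\ge\frac{1}{2}\|p'-p\|^2>0$), though harmless. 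Your version also spells out the incremental updates more explicitly than the paper, which simply defers to the analysis of Proposition~\ref{propcomp}.
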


\begin{proof} Let $\theta= \angle p'pv$.  Let $a= d(p,p')$, $b=d(p,v)$, and $c=d(p',v)$.  Then from the law of cosines $c^2=a^2+b^2-2ab \cos \theta$. Thus
\begin{equation}
\sin^2 \theta = 1 - \frac{(a^2+b^2-c^2)^2}{4a^2b^2}.
\end{equation}
To compute a minimum-angle pivot, it suffices to compute
\begin{equation} \label{optsine}
\min \{\sin^2 \theta:  \quad v \in V,  \quad (p'-p)^T(v-p) \geq 0 \}.
\end{equation}
Suppose for a given  iterate $p$ we have computed $a,b,c$ for each $v$, and suppose $\overline p$ is the next iterate computed according to a minimum-angle pivot. In order to compute a minimum-angle pivot for $\overline p$ we need to solve the optimization in (\ref{optsine}), replacing $p$ with $\overline p$.  This requires computing the corresponding $\overline a, \overline b, \overline c$ for each $v \in V$.
By similar analysis as in Proposition \ref{propcomp} we can argue that for each $v$ these require only constant update. Thus in $O(n)$ time we can compute the next minimum-angle pivot.  Analogously, to compute the new iterate takes $O(m+n)$ time overall.
\end{proof}

\begin{figure}[htpb]
	\centering
	\begin{tikzpicture}[scale=0.4]

	
		\draw (0.0,0.0) -- (7.0,0.0) -- (-2.0,-4.0) -- cycle;
		\draw (0,0) node[left] {$p'$};
\draw (-2,-4) -- (7,0) node[pos=0.5, right] {$b$};
\draw (0,0) -- (7,0) node[pos=0.5, above] {$c$};
\draw (0,0) -- (-2,-4) node[pos=0.5, left] {$a$};
		\draw (7,0) node[right] {$v$};
		\draw (-2,-4) node[below] {$p$};
\draw (-1.2,-2.5) node[below] {$\theta$};
           \filldraw (0,0) circle (2pt);
\filldraw (7,0) circle (2pt);
\filldraw (-2,-4) circle (2pt);
		


	\end{tikzpicture}
\begin{center}
\caption{$v$ a $p'$-pivot for $p$.} \label{Fig2xx}
\end{center}
\end{figure}
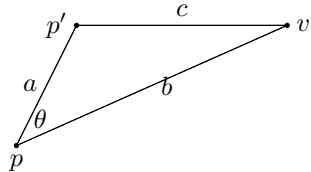

\begin{remark}  This problem can also be solved via the Frank-Wolfe method by minimizing $f(x)=\Vert Ax \Vert^2$ over the simplex $\{x: e^Tx=1, x \geq 0 \}$. Given an iterate $x_0$, Frank-Wolfe computes a line search by considering the minimum component of $\nabla f(x_0)= A^TAx_0$.   Vamsi  Potluru \cite{Vamsi} pointed out that (with preprocessing) computing the minimum component can be done in $O(m+n)$.
As mentioned in \cite{kalSaks}, via a distance computation to find a pivot in the Triangle Algorithm takes $O(m+n)$ time and here we have justified that finding a minimum-angle pivot which gives the best reduction at each iteration also takes $O(m+n)$ time. Indeed based on computational results the Triangle Algorithm seems to consistently outperform the Frank-Wolfe algorithm.  This is reported in the  relevant experimentations listed in the references, and will also be reported in forthcoming work, e.g. \cite{Hao}.  The present article has extended the potential utility of the Triangle Algorithm to much more general cases of the convex hull membership problem.
\end{remark}

\subsection{The Case of Two Finite Convex Hulls}
In this case we have a more general version of the previous case. We have
  $K= conv(V)$, $V=\{v_1, \dots, v_n\}$, $K'= conv(V')$, $V'=\{v_1', \dots, v_{n'}'\})$. The quantity $\rho_*$ can explicitly be computed:
\begin{equation}
\rho_* = \max \{d(v_i,v_j), d(v'_{i'}, v'_{j'}): 1 \leq i,j \leq n, 1 \leq i',j' \leq n'\}.
\end{equation}

The total number of arithmetic operations in Triangle Algorithm I to get an $\epsilon$-approximate solution when $\delta_*=0$, or a witness pair, or the
total number of arithmetic operations in Triangle Algorithm II to get an $\epsilon$-approximate solution to distance and support problems are, respectively

\begin{equation}
O\bigg ( m \max \{n, n'\} \frac{1}{\epsilon^2} \bigg), \quad
O \bigg  ( m \max \{n, n'\} \frac{\rho_*^2}{\delta_*^2} \bigg ), \quad
O \bigg ( m \max \{n, n'\} \frac{\rho_*^2} {\delta_*^{2} } \frac{1}{\epsilon^{2}} \ln \frac {\rho_*}{\delta_*}  \bigg ).
\end{equation}
Note that the dependence on $n$ and $n'$ of the complexities in  solving the desired problems in the case of two convex hulls is additive (i.e. $\max\{n,n'\}=O(n+n')$) as opposed to being multiplicative (i.e. $O(nn')$) which would have been the case, say for testing the approximate intersection, had we formulated the problem as that of finding the zero vector in the Minkowski difference.  In fact we claim that in this case too  with a preprocessing, the complexity of each iteration can be reduced to $O(m+\max\{n,n'\})$. To see this note that when neither set is a singleton,  the iterates $(p,p')$  are both changing. However,  this happens one at a time, i.e. either $p$ changes while $p'$ stays fixed, or the other way around. Thus in each iteration we can apply the same argument as in the previous case.  Thus we may state the following

\begin{thm} With a preprocessing time of $O(m\max \{n^2, n'^2\})$, the following alternate complexities can be stated
\begin{equation}  \label{sec82eq3}
O\bigg ((m +\max \{n, n'\} )\frac{1}{\epsilon^2} \bigg), \quad
O \bigg  ((m + \max \{n, n'\}  ) \frac{\rho_*^2}{\delta_*^2} \bigg ), \quad
O \bigg ((m + \max \{n, n'\}  ) \frac{\rho_*^2} {\delta_*^{2} } \frac{1}{\epsilon^{2}} \ln \frac {\rho_*}{\delta_*}  \bigg ). ~ \Box
\end{equation}

\end{thm}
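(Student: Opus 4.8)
The plan is to show that after an $O(m\max\{n^2,n'^2\})$ preprocessing phase, every iteration of Triangle Algorithm~I and of Triangle Algorithm~II costs only $O(m+\max\{n,n'\})$ arithmetic operations; the three bounds in (\ref{sec82eq3}) then follow at once by multiplying this per-iteration cost by the iteration counts already established in Theorem~\ref{thm5} and Theorem~\ref{thm8}. Write $A=[v_1,\dots,v_n]$ and $A'=[v'_1,\dots,v'_{n'}]$. The preprocessing computes the Gram matrices $Q=A^TA$ ($n\times n$), $Q'=A'^TA'$ ($n'\times n'$), the cross-Gram matrix $R=A'^TA$ ($n'\times n$, with $R_{ij}=v_i'^Tv_j$), and the squared norms $\|v_i\|^2$, $\|v'_i\|^2$ (the diagonals of $Q$, $Q'$). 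Since $nn'\le\max\{n,n'\}^2=\max\{n^2,n'^2\}$, all of this costs $O(m(n^2+n'^2+nn'))=O(m\max\{n^2,n'^2\})$. We take the starting pair $p_0=v_1$, $p'_0=v'_1$, so the initial coordinate vectors are $y=e_1\in\mathbb{R}^n$, $y'=e'_1\in\mathbb{R}^{n'}$.

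Throughout the run we maintain the coordinate vectors $y$, $y'$ with $p=Ay$, $p'=A'y'$, $\sum_iy_i=\sum_iy'_i=1$, $y,y'\ge 0$; the auxiliary vectors $Qy,\ R^Ty'\in\mathbb{R}^n$ and $Ry,\ Q'y'\in\mathbb{R}^{n'}$; and the scalars $\|p\|^2=y^T(Qy)$, $\|p'\|^2=y'^T(Q'y')$, $p^Tp'=y^T(R^Ty')$. The key point, already noted in the text preceding the theorem, is that in a single step of either algorithm exactly one of $p,p'$ changes, and it changes by an update $y\leftarrow\gamma y+(1-\gamma)e_j$ (or simply $y\leftarrow e_j$ in the ``otherwise'' branch of (\ref{pdp})). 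Under such an update $Qy\leftarrow\gamma(Qy)+(1-\gamma)Q_{\cdot j}$ and $Ry\leftarrow\gamma(Ry)+(1-\gamma)R_{\cdot j}$, which together cost $O(n+n')$ since the columns $Q_{\cdot j}$, $R_{\cdot j}$ are already stored, while $R^Ty'$ and $Q'y'$ are unchanged; the situation is symmetric when $p'$ changes. The three scalars are then refreshed in $O(n+n')$ time from the updated auxiliary vectors. Hence all invariants are carried forward at cost $O(n+n')$ per iteration.

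It remains to check that each iteration reads only these maintained quantities, apart from $O(m)$ work to materialize the new point. In Steps~2--3 of Triangle Algorithm~I a $p'$-pivot for $p$ is a vertex $v_j$ maximizing $(p'-p)^Tv_j=(R^Ty')_j-(Qy)_j$, so we scan the length-$n$ vector $R^Ty'-Qy$ for its maximum coordinate in $O(n)$ time and test (\ref{pivotsx}); symmetrically a $p$-pivot for $p'$ is read off the length-$n'$ vector $Ry-Q'y'$. For the weak pivots of Triangle Algorithm~II, $h=p-p'$ gives $h^Tv_j=(Qy)_j-(R^Ty')_j$ and $h^Tv'_i=(Ry)_i-(Q'y')_i$, so the $v,v'$ of (\ref{vv'}) are again single scans, and the derived quantities $\|h\|^2$, $\underline\delta$, $\delta$, $\delta_v$, $\delta_{v'}$, $E_v$, $E_{v'}$, $\rho=d(p,v)$, $\rho'=d(p',v')$ of Steps~1--4 are all $O(1)$ rational expressions in the maintained scalars together with $(Qy)_j$, $(R^Ty')_j$, $\|v_j\|^2$ (for instance $d^2(p,v_j)=\|p\|^2-2(Qy)_j+\|v_j\|^2$) and their primed analogues. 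The step-size $\alpha=(p'-p)^T(v_j-p)/d^2(p,v_j)$ of (\ref{pdp}) expands into the same ingredients, after which forming $p\leftarrow(1-\alpha)p+\alpha v_j$ explicitly costs $O(m)$. Thus one iteration costs $O(m+n+n')=O(m+\max\{n,n'\})$, and multiplying by the iteration bounds $O(1/\epsilon^2)$, $O(\rho_*^2/\delta_*^2)$ and $O((\rho_*^2/\delta_*^2)\epsilon^{-2}\ln(\rho_*/\delta_*))$ from Theorems~\ref{thm5} and~\ref{thm8} yields (\ref{sec82eq3}).

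I expect the only real obstacle to be bookkeeping: confirming that every inner product occurring anywhere in the two algorithms --- $p^Tv_j$, $p'^Tv_j$, $p^Tv'_i$, $p'^Tv'_i$, $\|p\|^2$, $\|p'\|^2$, $p^Tp'$ --- is indeed maintainable under a one-sided update in $O(n+n')$ total time, and that the cross-Gram matrix $R$ is the only genuinely new structure beyond the single-point case of Proposition~\ref{propcomp}, its $O(nn'm)$ precomputation fitting the stated budget precisely because $nn'\le\max\{n^2,n'^2\}$. No new geometric idea is needed; the content is entirely the reduction of each iteration to a few maximum/minimum-coordinate scans over maintained length-$n$ and length-$n'$ vectors.
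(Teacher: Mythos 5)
Your proposal is correct and follows the same strategy as the paper's two-sentence sketch (one-at-a-time updates of $p$ or $p'$, reducing each half-step to the single-point argument of Proposition~\ref{propcomp}), but it supplies an essential detail the paper leaves implicit: the cross-Gram matrix $R=A'^TA$. The paper simply says ``apply the same argument as in the previous case,'' yet Proposition~\ref{propcomp} hinges on writing the fixed query point as $p'=Ay'$ in the \emph{same} column space, which need not hold when $p'\in\mathrm{conv}(V')$ is unrelated to $\mathrm{span}(V)$. Your introduction of $R$ (so that $p'^Tv_j=(R^Ty')_j$ and $p^Tv'_i=(Ry)_i$) resolves this cleanly, and your observation that the $O(mnn')$ cost of $R$ fits the budget precisely because $nn'\le\max\{n^2,n'^2\}$ is exactly the arithmetic that makes the stated preprocessing bound tight. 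The bookkeeping of the maintained vectors $Qy, R^Ty', Ry, Q'y'$ and the $O(1)$ refresh of $\|p\|^2$, $\|p'\|^2$, $p^Tp'$ under a rank-one convex-combination update, plus the $O(m)$ to materialize the new iterate, correctly yields the per-iteration cost $O(m+\max\{n,n'\})$; multiplying by the iteration counts of Theorems~\ref{thm5} and~\ref{thm8} gives (\ref{sec82eq3}). In short, you take the paper's intended route but make it rigorous where the paper is only suggestive.
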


\subsection{The Case of Two Explicit Polytopes}

Next consider the case with $K= \{x: Ax \leq b\}$,  $K'=\{x: A'x \leq b'\}$, where $A$ is $n \times m$ and $A'$ is $n' \times m$.  We assume both $K$ and $K'$ are bounded, hence polytopes.

In this case the complexity of computing a pivot, $T$,  is merely the complexity of solving a linear program. The quantity $\rho_*$ cannot be easily estimated but it can be approximated, e.g. by maximizing the one-norm of $x$ over $K$ and $K'$. However, in practice we do not need to estimate this value to run the algorithms. See Table 1 for the corresponding complexities.

\begin{remark}  To test if $K$ and $K'$ intersect we merely need to solve a feasibility problem: test if $\{x: Ax \leq b, A'x \leq b'\}$ is nonempty.  If $K$ and $K'$ are disjoint an application of Farkas Lemma reveals  that for some $w$,  $A^Tw=-A'^Tw$, $b^Tw < -b'^Tw$.  This in turn results in a separating hyperplane, see Theorem 17.3  Chv\'atal \cite{chvatal2}.   However, to compute the actual distance between them and the best supporting vector is not derivable via an LP. Those can be approximated via Triangle Algorithm I and II.
\end{remark}

\subsection{The Case of An Explicit Polytope and a Point}

Next consider the case with $K= \{x: Ax \leq b\}$,  where $A$ is $n \times m$ and  $K'=\{p'\}$, a single point. Clearly, to test feasibility of $p'$ in $K$ takes only $O(mn)$ operations.  When $p' \not \in K'$, also in $O(mn)$ operations we can compute a separating hyperplane.  However, to compute the closest point of $K$ we need to solve the quadratic programming problem:
$$\min \{d(p',x)^2:  Ax \leq b \}.$$
The complexities for computing the distance within a factor of two as well as those of computing an $\epsilon$-approximation to $\delta_*$ are given in Table 1.

\begin{remark}
Consider the quadratic program $\min \{x^TQx+c^Tx:  A x \leq b \}$, where we assume $\{x: Ax \leq b\}$ is bounded, and $Q$ is positive definite.
By adding an appropriate constant we can write the objective function as
$(x-x_0)^TQ(x-x_0)=x^TQx+ 2x_0^TQx + x_0^TQx_0$, where $Qx_0=\frac{1}{2}C$.
Letting $Q=B^TB$ be a Cholesky factorization,  the mapping  $y=B(x-x_0)$ results in a problem equivalent to computing the closest point to a polytope: $\min \{\Vert y \Vert^2:  A' y \leq b \}, \quad A'=AB^{-1}.$
Alternatively, it may be possible to restate the notion of pivot and distance dualities in terms of the metric $d_Q(p,q)=(p- q)^TQ(p-q)$.
\end{remark}

\section{Conclusions and Future Work}   In this article we have presented a new separating hyperplane theorem for two compact convex subsets  $K, K'$ of the Euclidean space.  The {\it distance dualities} proved here characterize when two such sets intersect and when they are disjoint. We used these dualities to give an algorithm for testing if the sets intersect in the form of computing a pair $(p,p') \in K \times K'$, where $d(p,p')$ is within a given prescribed tolerance. When the two sets are disjoint, the algorithm computes a separating hyperplane.  Triangle Algorithm I accomplishes these two tasks. The main work in each iteration is the search for a pivot which can be accomplished by solving a linear program over $K$ or $K'$. Thus the complexity of each iteration depends on the nature of description of the sets $K$ and $K'$. The computation of a pivot  does not necessarily require solving a full linear program over $K$ or $K'$.  Such is also the case for a weak-pivot. When the two sets are disjoint, Triangle Algorithm I terminates with a pair of points $(p,p') \in K \times K'$ where the orthogonal bisecting hyperplane separates $K$ and $K'$. Such a pair is called a {\it witness pair}. For such a pair, neither $K$ nor $K'$ admit a pivot.

Given a witness pair $(p,p')$, Triangle Algorithm II  begins to approximate $d(K,K')$, the distance between $K,K'$,  making use of weak-pivots to reduce the gap $d(p,p')$.  If needed it also resorts to Triangle Algorithm I. When Triangle Algorithm II has computed a pair $(p,p')$ with $d(p,p')$ within a prescribed tolerance of $d(K,K')$, it also leads to a pair of parallel supporting hyperplanes $(H,H')$ where $d(H,H')$ is within a prescribed tolerance of the optimal margin.

In the article, we also considered the complexity of these algorithms in some practical special cases. In particular, when $K$ or $K'$ are the convex hull of finite number of points. These find applications in such significant problems as linear and quadratic programming, SVM, and statistics.   Our computational results for the convex hull membership problem already supports Triangle Algorithm I as a promising algorithm that  seems to outperform such well known algorithms as Frank-Wolfe and the simplex method on tested problems (see \cite{Meng}, \cite{Gibson}, \cite{Hao}).

In a recent work, see \cite{Mayank}, we consider the problem of testing, for two ﬁnite sets of points in the Euclidean space, if their convex hulls are disjoint and computing an optimal supporting hyperplane if so. This is the case when $K$ and $K'$ are finite convex hulls, a fundamental problem of classiﬁcation in machine learning known as the {\it hard-margin SVM}. The problem can be formulated as a quadratic programming problem, see \cite{Vapnik1}, \cite{Vapnik2}.
 The SMO algorithm, see \cite{Platt},  is the current state of the art algorithm for solving this problem, however it does not answer the question of separability. An alternative to solving both problems is via phase I and II of  the Triangle Algorithm described in this article.  We have compared the performance of Triangle Algorithm with SMO for ﬁnding the optimal supporting hyperplane. Based on experimental results ranging up to $5000$ points in each set in dimensions up to $10000$, the triangle algorithm outperforms SMO.

While in this article we have only worked with pivots, the notion of {\it strict pivots}, proved for the convex hull membership problem in \cite{kal14}, can also be extended to the general cases considered here. Additionally, when $K$ is a general compact convex set and $K'$ a singleton centered at a ball of radius $\rho$ in the relative interior of $K$, more efficient complexities can be stated in terms of $\ln (1/\epsilon)$ and $\rho$, see \cite{kal14}.

There are interesting extensions and both theoretical and practical applications of these results. We enumerate a few.

{\bf (1)}  Extension of the Triangle Algorithm for the soft-margin SVM problem as well as the kernel version of both hard and soft-margin SVM problems.

{\bf (2)} Appropriate versions of the distance dualities and corresponding Triangle Algorithms can be stated for the case where the sets $K, K'$ are only assumed to be closed and convex, or only one is assumed to be compact. It is well-known that the separating hyperplane theorem is valid when only one of the closed convex sets is compact. Assuming that $(p_0,p_0') \in K \times K'$ is given, where $K,K'$ are only closed convex, we can reduce the problem of testing if they intersect to the case of compact convex sets considered here. We thus expect that some of our results extend to more general cases with ease.

{\bf (3)}  The distance dualities proved here for two compact convex sets extend to the case of more than two sets. Such a general version finds applications in machine learning.  Also, a version of the Triangle Algorithm can be extended to this case.

{\bf (4)}  Extension of the distance dualities and Triangle Algorithms to more general metric spaces, e.g. semidefinite programming.

{\bf (5)} Computing approximate solutions to NP-complete problems.

{\bf (6)} Combinatorial problems.

We will consider all the above aspects in future work. In summary, we expect wide applications of the results in convex and non-convex programming.

\bigskip


\end{document}